\newtheorem{theorem}{Theorem}[section]
\newtheorem{lemma}[theorem]{Lemma}
\newtheorem{proposition}[theorem]{Proposition}
\newcommand{\bb}{{\bm b}}
\newcommand{\bv}{{\bm v}}
\newcommand{\bw}{{\bm w}}
\newcommand{\bx}{{\bm x}}
\newcommand{\bX}{{\bm X}}
\newcommand{\bbeta}{{\bm \beta}}
\newcommand{\btau}{{\bm \tau}}
\newcommand{\bSigma}{{\bm \Sigma}}
\newcommand{\bOmega}{{\bm \Omega}}
\newcommand{\bepsilon}{{\bm \epsilon}}
\newcommand{\bone}{{\bm 1}}
\newcommand{\bzero}{{\bm 0}}
\newcommand{\cI}{\mathcal{I}}
\newcommand{\cE}{\mathcal{E}}
\newcommand{\cV}{\mathcal{V}}
\newcommand{\E}{\mathbb{E}}
\newcommand{\Var}{\mathrm{Var}}
\DeclareMathOperator*{\argmin}{arg\,min}
\DeclareMathOperator*{\argmax}{arg\,max}
\newcommand{\supp}{\mathrm{supp}}
\newcommand{\diag}{\mathrm{diag}}
\newcommand{\RN}[1]{%
  \textup{\uppercase\expandafter{\romannumeral#1}}%
}
\newcommand{\independent}{\mathrel{\perp\mspace{-10mu}\perp}}
\newcommand{\nindependent}{\centernot{\independent}}
\newcommand\smallO{
  \mathchoice
    {{\scriptstyle\mathcal{O}}}% \displaystyle
    {{\scriptstyle\mathcal{O}}}% \textstyle
    {{\scriptscriptstyle\mathcal{O}}}% \scriptstyle
    {\scalebox{.7}{$\scriptscriptstyle\mathcal{O}$}}%\scriptscriptstyle
  }
\date{} % Activate to display a given date or no date (if empty),
\author{Sen Zhao\thanks{senzhao@google.com.}} 
\author{Stephen Ottinger}
\author{Suzanne Peck}
\author{Christine Mac Donald}
\author{Ali Shojaie}
\affil{Google Research, Seattle Children's Hospital and University of Washington}
\title{Network Differential Connectivity Analysis}
\begin{document}

\maketitle
\def\spacingset#1{\renewcommand{\baselinestretch}%
{#1}\normalsize} \spacingset{1}
\abstract{
Identifying differences in networks has become a canonical problem in many biological applications. Here, we focus on testing whether two Gaussian graphical models are the same. Existing methods try to accomplish this goal by either directly comparing their estimated structures, or testing the null hypothesis that the partial correlation matrices are equal. However, estimation approaches do not provide measures of uncertainty, e.g., $p$-values, which are crucial in drawing scientific conclusions. On the other hand, existing testing approaches could lead to misleading results in some cases. To address these shortcomings, we propose a \emph{qualitative}  hypothesis testing framework, which tests whether the connectivity patterns in the two networks are the same. Our framework is especially appropriate if the goal is to identify nodes or edges that are differentially connected. No existing approach could test such hypotheses and provide corresponding measures of uncertainty, e.g., $p$-values. We investigate theoretical and numerical properties of our proposal and illustrate its utility in biological applications. Theoretically, we show that under appropriate conditions, our proposal correctly controls the type-I error rate in testing the qualitative hypothesis. Empirically, we demonstrate the performance of our proposal using simulation datasets and applications in cancer genetics and brain imaging studies. 
}

\spacingset{1.5}

\section{Introduction}\label{sec:intro}
Changes in biological networks, such as gene regulatory and brain connectivity networks, have been found to associate with the onset and progression of complex diseases \citep[see, e.g.,][]{BassettBullmore2009network, Barabasietal2011network}. Locating differentially connected nodes in the network of diseased and healthy individuals---referred to as \emph{differential network biology} \citep{IdekerKrogan2012}---can help researchers delineate underlying disease mechanism. Such \emph{network-based biomarkers} can also serve as effective diagnostic tools and guide new therapies. In this paper, we propose a novel inference framework, \emph{differential network analysis}, for identifying differentially connected nodes or edges in two networks.

Let $ne_j^m$ be the neighborhood of node $j$ in network $G^m$, i.e.,
\begin{align}
ne_j^m\equiv\{k\neq j:(j,k)\in\cE^m\}, \quad m\in\{\RN{1}, \RN{2}\}.
\end{align} 
The scientists' quest to identify differences in the two networks corresponds to testing $H^\ast_{0,j}: ne_j^{\RN{1}}= ne_j^{\RN{2}}$ versus $H^\ast_{a,j}: ne_j^{\RN{1}}\neq ne_j^{\RN{2}}$. Under $H^\ast_{0,j}$, node $j$ is connected to the same set of nodes in both networks. 

Of course, we do not directly observe the networks; rather, we observe noisy data $\bX^{\RN{1}}$ and $\bX^{\RN{2}}$ that are generated based on the underlying networks. Let $\bOmega^m$ be the inverse population covariance matrix of $\bX^m$, also known as the precision matrix. With Gaussian graphical model, nodes $j, k\in\cV$ are connected in network $G^m$ if and only if $\Omega_{jk}^m\neq0$; this quantity is proportional to the partial correlation between $\bx_j^m$ and $\bx_k^m$.  Thus, we can recast $H^\ast_{0,j}$ and $H^\ast_{a,j}$ as the following equivalent hypotheses
\begin{align}
H_{0,j}: \supp\left(\bOmega^\RN{1}_j\right)= \supp\left(\bOmega^\RN{2}_j\right), \label{eq:hypothesis}\\
H_{a,j}: \supp\left(\bOmega^\RN{1}_j\right)\neq \supp\left(\bOmega^\RN{2}_j\right),
\end{align}
where $\bOmega_j$ denotes the $j$th column of the precision matrix $\bOmega$. While our results are also valid with sub-Gaussian data, without Gaussianity, the network encodes conditional correlation, rather than conditional dependence; the interpretation of test results would thus change. 

The problem of testing differential connectivity in two networks has attracted much attention recently, and many approaches have been proposed to examine the equality of the \emph{values} in two precision matrices. However, as shown in Section~\ref{sec:challenges}, \emph{quantitative} inference procedures focused on values of the precision matrices may lead to misleading conclusions about \emph{structural differences} in two networks. In contrast, our proposal directly examines the \emph{support} of two precision matrices. 
This \emph{qualitative} testing framework, which we call \emph{differential connectivity analysis} (DCA), is specifically designed to address this challenge and is directly focused on the goal of identifying differential connectivity in biological networks. To the best of our knowledge, DCA is the first inference framework that can formally test \emph{structural differences} in two networks, i.e., $H^\ast_{0,j}: ne_j^{\RN{1}}= ne_j^{\RN{2}}$. Moreover, DCA is a general framework that can  incorporate various estimation and hypothesis testing methods for flexible implementation and easy extensibility.

\subsection{Related Work}\label{sec:challenges}

In this section, we summarize related work and discuss why existing approaches are unable to test $H^\ast_{0,j}: ne_j^{\RN{1}}= ne_j^{\RN{2}}$.

In most applications, the edge sets $\cE^{\RN{1}}$ and $\cE^{\RN{2}}$ are estimated from data, based on similarities/dependencies between variables. In particular, Gaussian graphical models (GGMs) are commonly used to estimate biological networks \citep[e.g.,][]{Krumsieketal2011}. To identify differential connectivities in two networks, we may na\"ively eyeball the differences in two GGMs estimated using single network estimation methods \citep[e.g., ][]{MeinshausenBuhlmann2006, Friedmanetal2008GL}, or joint estimation methods \citep[e.g.,][]{Guoetal2011JEG, Danaheretal2014JGL, Zhaoetal2014JEG, Petersonetal2015bayesiangraph, SaegusaShojaie2016}. However, these estimation approaches do not provide measures of uncertainty, e.g., $p$-values, and are thus of limited utility for drawing scientific conclusions. 

Building upon network estimation methods, a number of recent approaches provide confidence intervals and/or $p$-values for high-dimensional precision matrices. The first class of hypothesis testing procedures focuses on a single precision matrix \citep{Renetal2015GGM, JankovavadeGeer2015glasso, JankovavandeGeer2017, XiaLi2017}. These methods examine the null hypothesis $\Omega^m_{jk}=0, m\in\{\RN{1},\RN{2}\}$ for $j\neq k$, and hence could control the probability of falsely detecting an nonexistent edge. However, they could not control the false positive rates of $H^\ast_{0,j}$. This is because $H^\ast_{0,j}$ concerns the \emph{coexistence} of edges in two networks. Thus, the false positive rate of $H^\ast_{0,j}$ not only depends on probability of falsely detecting an nonexistent edge, but also depends on the probability of correctly detecting an existent edge. While single network hypothesis testing methods control the former probability, they do not control the latter.

The second class of inference procedures examines whether corresponding entires in two precision matrices are equal. For example, \citet{Xiaetal2015pt} tests whether $\Omega_{jk}^\RN{1}=\Omega_{jk}^\RN{2}$, \citet{Belilovskyetal2016diffprec} tests whether $\Omega_{jk}^\RN{1}/\Omega_{jj}^\RN{1}=\Omega_{jk}^\RN{2}/\Omega_{jj}^\RN{2}$, while \citet{StadlerMukherjee2016} tests whether $\phi^\RN{1}=\phi^\RN{2}$, where $\phi$ parametrizes the underlying data generation distribution. Permutation based methods have also been proposed in \citet{Gilletal2014}. The primary limitation of these methods is that examining differences in magnitudes of $\bOmega^{\RN{1}}$ and $\bOmega^{\RN{2}}$ may lead to misleading conclusions.
Consider the following toy example with three Gaussian variables: suppose in population ${\RN{1}}$, variable 1 causally affects variables 2 and 3, and variable 2 causally affects variable 3. 
Suppose, in addition, that in population ${\RN{2}}$, the effect of variable 1 on variable 2 remains intact, while the effect of variables 1 and 2 on variable 3 no longer present due to, e.g., a mutation in the latter. 
The undirected networks corresponding to the two GGMs are portrayed in Figure~\ref{fig:quanvsqual}.

\begin{figure}[h]
\caption{Conditional dependency structures of variables in populations ${\RN{1}}$ and ${\RN{2}}$.} 
\vspace{1em}
\label{fig:quanvsqual}
\centering
\includegraphics[width = 0.5\textwidth]{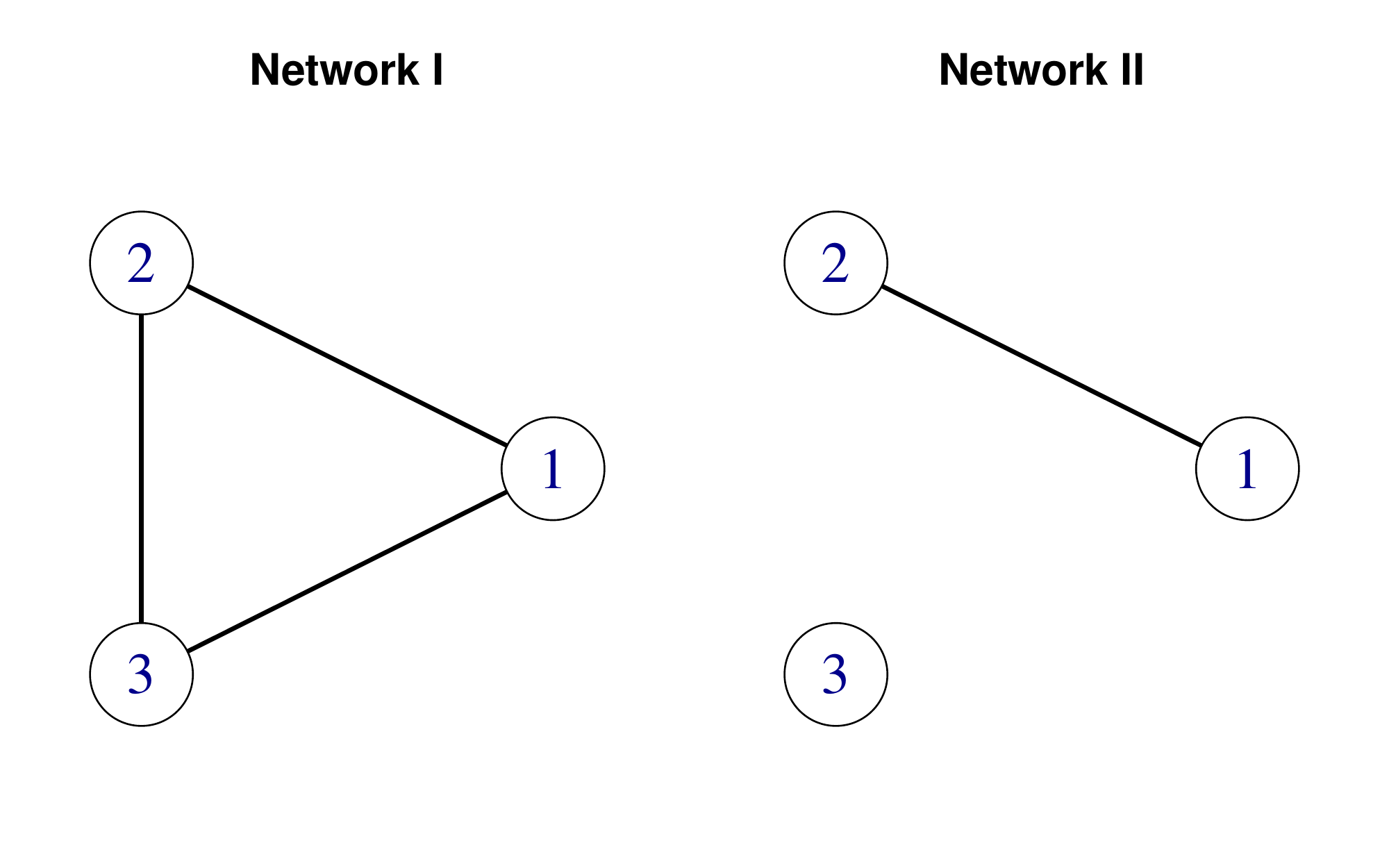}
\end{figure}

Suppose, without loss of generality, that the precision matrix of variables in population $\RN{1}$ is 
\[
\bOmega^{\RN{1}}=\begin{bmatrix}
    1       & 0.5 & 0.5 \\
    0.5    & 1    & 0.5 \\
    0.5    & 0.5    & 1
\end{bmatrix}.
\]
Further, suppose that $\bx_1^{\RN{2}}$ has the same distribution as $\bx_1^{\RN{1}}$, i.e., $\bx_1^{\RN{2}}\sim_d\bx_1^{\RN{1}}$. 
The unchanged (causal) relationship of variables 1 and 2 leads to $\bx_2^{\RN{2}}\sim_d\bx_2^{\RN{1}}$. On the other hand, $\bx_3^{\RN{2}}$ is independent of $\bx_1^{\RN{2}}$ and $\bx_2^{\RN{2}}$, i.e., $\bx_3^{\RN{2}}\independent\bx_{\{1,2\}}^{\RN{2}}$. Assuming, for simplicity, that $\Var\left(\bx_3^{\RN{2}}\right)=1$, we can verify that (see Section~\ref{sec:quanvsqual})
\[
\bOmega^{\RN{2}}=\begin{bmatrix}
    0.75       & 0.25 & 0 \\
    0.25    & 0.75    & 0 \\ 
    0    & 0    & 1
\end{bmatrix}.
\]

In this example, the relationship between variables 1 and 2 is the same in both populations. In particular, the dependence relationship between $\bx_1$ and $\bx_2$ is the same, as indicated in Figure~\ref{fig:quanvsqual}. However, $\Omega_{1,2}^\RN{1}\neq \Omega_{1,2}^\RN{2}$ and $\Omega_{1,2}^\RN{1}/\Omega_{1,1}^\RN{1}\neq \Omega_{1,2}^\RN{2}/\Omega_{1,1}^\RN{2}$. Thus, existing quantitative tests \citep{Gilletal2014, Xiaetal2015pt, Belilovskyetal2016diffprec, StadlerMukherjee2016} would falsely detect $(1,2)$ as a differentially connected edge. 
%We will further illustrate this result in Section~\ref{sec:sims}. 

At a first glance, the differences between quantitative and qualitative inference procedure may seem negligible. In fact, one may wonder whether the phenomenon demonstrated in the above toy example would manifest to meaningful false positive errors in more realistic settings with larger networks. To illustrate that quantitative tests may fail to control the type-I error rate of qualitative hypotheses $H^\ast_{0, j}: ne^{\RN{1}}_j=ne^{\RN{2}}_j$, we examined how the permutation test of \citet{Gilletal2014} controls the type-I error rate for $H^\ast_{0, j}$ in the simulation setting of Section~\ref{sec:sims}. The type-I error rates for node-wide tests of differential connectivity are shown in Table~\ref{tab:dna}. 

\begin{table}[t]
\centering
\caption{\footnotesize Type-I error rates for the test of \citet{Gilletal2014} under the settings of Figure~\ref{fig:simGraceI}}
\label{tab:dna}
\begin{tabular}{c|cccc}
\hline\hline
Sample Size & n = 100 &n = 200 & n = 400 & n = 800\\
Type-I Error Rate & 0.998 & 0.995 & 0.981 & 0.919\\
\hline
\end{tabular}
\end{table}

The errors in Table~\ref{tab:dna} seem unbelievably large. But note that even on qualitatively identically connected nodes $j\in\cV: ne^{\RN{1}}_j=ne^{\RN{2}}_j$, the connection strength as reflected by partial correlation may be vastly different on \emph{many} edges due to the difference in qualitative connectivity of \emph{other} nodes. We further illustrate that quantitative tests do not properly control the type-I error rate using real gene expression data in Section~\ref{sec:RD}.
As a result, any quantitative test with sufficient power will falsely conclude that those identically connected nodes are differentially connected. That is exactly the issue highlighted in the above toy example: Tests for equality of (partial) correlations values \citep[e.g.,][]{Gilletal2014, Xiaetal2015pt, Belilovskyetal2016diffprec,StadlerMukherjee2016} were designed to identify quantitative differences, and should not be used to identify qualitative differences in the two networks, or differential connectivity, if that is indeed the scientific question.

\section{Differential Connectivity Analysis}\label{sec:methods}

\subsection{Summary of the Proposed Framework}\label{sec:summary}
In this subsection, we present a high-level summary of the proposed differential connectivity analysis (DCA) framework. Details are provided in the following   subsections. 

Consider a node $j\in\cV$. Then, any other node $k\neq j$ must belong to one of three categories, which are depicted in Figure~\ref{fig:venn}a:
\begin{enumerate}
	\item[i)] $k$ is a common neighbor of $j$, i.e., $k\in ne_j^{\RN{1}}\cap ne_j^{\RN{2}}\equiv ne_j^0$;
	\item[ii)] $k$ is a neighbor of $j$ in one and only one of the two networks, i.e., $k\in ne_j^{\RN{1}}\,\triangle\,ne_j^{\RN{2}}$, where ``$\triangle$'' is the symmetric difference operator;  
	\item[iii)] $k$ is not a neighbor of $j$ in either network, i.e., $k\notin ne_j^{\RN{1}}\cup ne_j^{\RN{2}}$. 
\end{enumerate}
Clearly, $ne_j^{\RN{1}}=ne_j^{\RN{2}}$ implies $ne_j^{\RN{1}}\,\triangle\,ne_j^{\RN{2}}=\varnothing$. If, to the contrary, there exists a node $k$ such that $k\in ne_j^{\RN{1}}\,\triangle\,ne_j^{\RN{2}}$, then $j$ is differentially connected, i.e., $ne_j^{\RN{1}}\neq ne_j^{\RN{2}}$.

Thus, to test $H^\ast_{0,j}: ne_j^{\RN{1}}= ne_j^{\RN{2}}$, we propose to examining whether there exists a node $k$ such that $k\in ne_j^{\RN{1}}\,\triangle\,ne_j^{\RN{2}} \equiv (ne_j^{\RN{1}}\cup ne_j^{\RN{2}})\backslash(ne_j^{\RN{1}}\cap ne_j^{\RN{2}})$. Specifically, for a $k\neq j$ such that $k\notin ne_j^{\RN{1}}\cap ne_j^{\RN{2}}\equiv ne_j^0$, we check whether $k\in ne_j^{\RN{1}}\cup ne_j^{\RN{2}}$. 

In practice, we do not observe $ne_j^0$ and need to estimate it. Our hypothesis testing framework thus consists of two steps:
\begin{enumerate}
	\item \emph{Estimation}: We estimate the common neighbors of each node $j$ in the two networks, $ne_j^0\equiv ne_j^{\RN{1}}\cap ne_j^{\RN{2}}$; this estimate is denoted by $\widehat{ne}_j^0$. 
	\item \emph{Hypothesis Testing}: We test whether there exists a $k\notin \widehat{ne}_j^0$ such that $k\in ne_j^{\RN{1}}\cup ne_j^{\RN{2}}$. 
\end{enumerate}
Details of the above two steps are described in the next two subsections, where it becomes clear that the procedure can be naturally extended to test differential connectivity in more than two networks. 

From the discussion in the following subsections, it will also become evident that the estimated common neighborhood $\widehat{ne}_j^0$ plays an important role in the validity and power of the proposed framework. In Section~\ref{sec:est}, we show that in order for $\widehat{ne}_j^0$ to be useful in the hypothesis testing step, it needs to satisfy $\lim_{n^\RN{1},n^\RN{2}\to\infty}\Pr\big[\widehat{ne}_j^0\supseteq ne_j^0\big]=1$ (Figure~\ref{fig:venn}b). On the other hand, if the cardinality of $\widehat{ne}_j^0$ grows large compared to that of $ne_j^0$, the power of the proposed framework deteriorates. In fact, if $\widehat{ne}_j^0 \supseteq ne_j^{\RN{1}}\cup ne_j^{\RN{2}}$, the differential connectivity of node $j$ cannot be detected. 
In the following subsections, we also discuss how the randomness in estimating $\widehat{ne}_j^0$ may affect the results of the hypothesis testing step and how valid inferences can be obtained. 

\def\firstcircle{(0,0) circle (0.8cm)}
\def\secondcircle{(0:0.5cm) circle (0.8cm)}
\def\thirdcircle{(0:0.25cm) circle (0.8cm)}
\def\fourthcircle{(0:0.25cm) circle (1.4 cm)}
\def\ellipse{(0:0.25cm) ellipse (1.2cm and 0.9cm)}

\begin{figure}[h]
\caption{Illustration of the common neighborhood $ne^0_j = ne_j^{\RN{1}}\cap ne_j^{\RN{2}}$ of node $j$ in two networks $\cE^{\RN{1}}$ and $\cE^{\RN{2}}$: In all figures, $ne^0_j$ is shaded in gray, and its estimate, $\hat{ne}^0_j$, is shown in dashed ovals; the unshaded parts of $ne_j^{\RN{1}}$ and $ne_j^{\RN{2}}$ correspond to $ne_j^{\RN{1}}\,\triangle\,ne_j^{\RN{2}}=\varnothing$. In (b), $\hat{ne}^0_j$ satisfies the \emph{coverage property} of Section~\ref{sec:est} and allows differential connectivity to be estimated; in (c), $ \hat{ne}_j^0 \supseteq ne_j^{\RN{1}}\cup ne_j^{\RN{2}}$ and differential connectivity of $j$ cannot be detected, as illustrated in Section~\ref{sec:lassopower}}\label{fig:venn}
\vspace{1em}
\centering
\begin{tikzpicture}
    \draw \firstcircle node[below] {};
    \draw \secondcircle node [above] {};

    \begin{scope}
      \clip \firstcircle;
      \fill[gray!50] \secondcircle;
    \end{scope}
    \node[text width=0.5cm, anchor=west, right] at (-0.5, 1.1) {$ne_j^{\RN{1}}$};
    \node[text width=0.5cm, anchor=west, right] at (0.5, 1.1) {$ne_j^{\RN{2}}$};
    \node[text width=0.5cm, anchor=west, right] at (0, 0) {$ne_j^0$};
    \node[text width=0.5cm, anchor=west, right] at (0, -1.1) {a)};

    \begin{scope}[shift={(2.6cm,0cm)}]
        \draw \firstcircle node {};
        \draw \secondcircle node {};
	\draw[dashed, very thick] \thirdcircle node [above] {};
	\begin{scope}
      	\clip \firstcircle;
      		\fill[gray!50] \secondcircle;
    	\end{scope}
    \end{scope}
    \node[text width=0.5cm, anchor=west, right] at (2.1, 1.1) {$ne_j^{\RN{1}}$};
    \node[text width=0.5cm, anchor=west, right] at (3.1, 1.1) {$ne_j^{\RN{2}}$};
    \node[text width=0.5cm, anchor=west, right] at (2.6, 0) {$ne_j^0$};
    \node[text width=0.5cm, anchor=west, right] at (2.6, -1.1) {b)};
   
    \begin{scope}[shift={(5.5cm,0cm)}]
        \draw \firstcircle node {};
        \draw \secondcircle node {};
	\begin{scope}
      	\clip \firstcircle;
      		\fill[gray!50] \secondcircle;
    	\end{scope}
	\draw[dashed, very thick] \ellipse node [above] {};
    \end{scope}
    \node[text width=0.5cm, anchor=west, right] at (5, 1.1) {$ne_j^{\RN{1}}$};
    \node[text width=0.5cm, anchor=west, right] at (6, 1.1) {$ne_j^{\RN{2}}$};
    \node[text width=0.5cm, anchor=west, right] at (5.5, 0) {$ne_j^0$};
    \node[text width=0.5cm, anchor=west, right] at (5.5, -1.1) {c)};

\end{tikzpicture}
\end{figure}

\subsection{Estimating Common Neighbors}\label{sec:est}

Given a $j\in\cV$, the first step of DCA involves obtaining an estimate $\widehat{ne}_j^0$ of $ne_j^0$. 
We do not need $\widehat{ne}_j^0$ to be a consistent estimate of $ne_j^0$, which usually requires stringent conditions \citep[see, e.g.,][]{MeinshausenBuhlmann2006, ZhaoYu2006}. Instead, we observe that under the null hypothesis $H_{0,j}^\ast:ne^{\RN{1}}_j=ne^{\RN{2}}_j$, we have $ne^{\RN{1}}_j=ne^{\RN{2}}_j=ne^{\RN{1}}_j\cup ne^{\RN{2}}_j=ne^{\RN{1}}_j\cap ne^{\RN{2}}_j\equiv ne_j^0$, which indicates that if $\widehat{ne}_j^0\supseteq ne_j^0$, then $\widehat{ne}_{j}^0\supseteq ne_j^{\RN{1}}\cup ne_j^{\RN{2}}$. 
In other words, if $\widehat{ne}_j^0\supseteq ne_j^0$, then under the null hypothesis $H_{0,j}:ne^{\RN{1}}_j= ne^{\RN{2}}_j$, there should be no $k\notin \widehat{ne}_{j}^0$ such that $k\in ne^{\RN{1}}_j\cup ne^{\RN{2}}_j$. Thus, we propose to test $H_{0,j}^\ast$ by examining whether there exists a $k\notin \widehat{ne}_{j}^0$ such that $k\in ne^{\RN{1}}_j\cup ne^{\RN{2}}_j$. 
Based on the above observation, we require that  
\begin{equation}\label{eqn:estevent}
	\lim_{n^\RN{1},n^\RN{2}\to\infty}\Pr[\widehat{ne}_j^0\supseteq ne_j^0]=1.
 \end{equation}
We call \eqref{eqn:estevent} the \emph{coverage property} of estimated common neighbors (see Figure~\ref{fig:venn}b).

Let $\bX^{\RN{1}}$ and $\bX^\RN{2}$ be two Gaussian datasets of size $n^{\RN{1}} \times p$ and $n^{\RN{2}} \times p$ containing measurements of the same set of variables $\cV$ (with $p=|\cV|$) in populations $\RN{1}$ and $\RN{2}$, respectively. 
Note that the data may be high-dimensional, i.e., $p \gg \max\{n^{\RN{1}}, n^{\RN{2}}\}$. To estimate the common neighborhood $ne_j^0$, for $m\in\{\RN{1}, \RN{2}\}$, we write 
\begin{align}
\bx_j^m&=\bX_{\backslash j}^m\bbeta^{m, j}+\bepsilon^{m,j}, \label{eq:GRep} 
\end{align}
where $\bbeta^{m,j}$ is a $(p-1)$-vector of coefficients and $\bepsilon^{m, j}$ is an $n^m$-vector of random errors. By Gaussianity, $\beta_k^{m, j}\neq0$ if and only if $\Omega_{jk}^m\neq0$, which, as discussed before, is equivalent to $k\in ne_j^m$. Therefore, the common neighbors of node $j$ in the two populations are
\begin{align}\label{eq:jointneighborbeta}
ne_j^0\equiv ne_j^{\RN{1}}\cap ne_j^{\RN{2}} = \left\{k:\beta_k^{\RN{1}, j}\neq0 \text{ \& } \beta_k^{\RN{2}, j}\neq0\right\}.
\end{align}

Based on \eqref{eq:jointneighborbeta}, an estimate of $ne_j^0$ may be obtained from the estimated supports of $\bbeta^{\RN{1}, j}$ and $\bbeta^{\RN{2}, j}$. 

Various procedures can be used to estimate $\bbeta^{\RN{1}, j}$ and $\bbeta^{\RN{2}, j}$ and, in turn, $ne_j^0$. We present a lasso-based estimator as an example in Section~\ref{sec:prop}. Proposition~\ref{thm:unbiased} shows that under appropriate conditions, the lasso-based estimate satisfies the coverage property \eqref{eqn:estevent}, and is thus valid for the estimation step of DCA.

\subsection{Testing Differential Connectivity}\label{sec:test}
Recall, from our discussion in the previous section, that the estimated joint neighborhood $\widehat{ne}_j^0$ needs to satisfy the coverage property $\lim_{n^\RN{1},n^\RN{2}\to\infty}\Pr[\widehat{ne}_j^0\supseteq ne_j^0]=1$. With $\widehat{ne}_j^0\supseteq ne_j^0$, if there exists a $k\notin \widehat{ne}_{j}^0$ such that $k\in ne^{\RN{1}}_j\cup ne^{\RN{2}}_j$, then with probability tending to one, $ne^{\RN{1}}_j\neq ne^{\RN{2}}_j$. 
As mentioned above, in GGMs, $k\in ne^{\RN{1}}_j\cup ne^{\RN{2}}_j$ if and only if $\beta_k^{\RN{1}, j}\neq 0$ or $\beta_k^{\RN{2}, j}\neq 0$. 
Thus, to determine whether there exists a $k\notin \widehat{ne}_{j}^0$ such that $k\in ne^{\RN{1}}_j\cup ne^{\RN{2}}_j$, we test the following hypotheses 
\begin{equation}
	H_{0, j}:\beta_k^{\RN{1}, j}= 0 \text{ \& } \beta_k^{\RN{2}, j}= 0,\quad\forall k\notin\widehat{ne}_{j}^0\cup\{j\},
\end{equation}
where $\widehat{ne}_j^0 = \{k:\hat\beta_k^{\RN{1}, j}\neq0 \text{ \& } \hat\beta_k^{\RN{2}, j}\neq0\}$.

Using the \v{S}id\'ak correction to control false positive rate of $H_{0, j}$ at level $\alpha>0$, we control false positive rates of $H_{0, j}^{\RN{1}}: \beta_k^{\RN{1}, j}= 0$ and $H_{0, j}^{\RN{2}}: \beta_k^{\RN{2}, j}= 0$ at the level $1-\sqrt{1-\alpha}$ for all $k\notin\widehat{ne}_{j}^0\cup\{j\}$. Note that if $\widehat{ne}_j^0\cup\{j\}=\cV$, we do not reject $H_{0,j}$. We will discuss later in this subsection how to test $H_{0, j}^{\RN{1}}$ and $H_{0, j}^{\RN{2}}$. 

The proposal outlined so far faces an important obstacle: Even when $\widehat{ne}_{j}^0$ satisfies the coverage property, the hypotheses $H_{0, j}^{\RN{1}}$ and $H_{0, j}^{\RN{2}}$ \emph{depend on the data} through $\widehat{ne}_{j}^0$, which is a random quantity. This dependence complicates hypothesis testing: under the current procedure, we are effectively looking at the same data twice, once to formulate hypotheses and once to test the formulated hypotheses. Conventional statistical wisdom suggests that this kind of double-peeking would render standard hypothesis testing procedures invalid \citep[see, e.g.,][]{LeebPotscher2008PoSI}. 

To overcome the above difficulty, we offer two different strategies. In the first, we apply sampling splitting to avoid looking at the data twice \citep[see, e.g.,][]{WassermanRoeder2009posi, Meinshausenetal2009posi}. In this approach, the data are divided in two parts; the first part is used to estimate $\widehat{ne}_j^0$ and the second to test  $H_{0, j}$. The second strategy is provided in Proposition~\ref{thm:consistent} in Section~\ref{sec:prop}, which shows that although $\widehat{ne}_{j}^0$ is in general random, under appropriate conditions, the lasso-based estimate of $\widehat{ne}_{j}^0$ discussed in Section~\ref{sec:prop} converges in probability to a \emph{deterministic} set, which is not affected by the randomness of the data. Thus, under those conditions, asymptotically, we can treat $\widehat{ne}_{j}^0$ as deterministic, and hence treat $H_{0, j}^{\RN{1}}$ and $H_{0, j}^{\RN{2}}$ as classical non-data-dependent hypotheses. 

To test $H_{0, j}^{\RN{1}}: \beta_k^{\RN{1}, j}= 0,\forall k\notin\widehat{ne}_{j}^0\cup\{j\}$ and $H_{0, j}^{\RN{2}}: \beta_k^{\RN{2}, j}= 0,\forall k\notin\widehat{ne}_{j}^0\cup\{j\}$, we can use recent proposals for testing coefficients in high-dimensional linear regression \citep[e.g.,][]{javanmard2013confidence, ZhangZhang2014LDPE, vandeGeeretal2014LDPE, ZhaoShojaie2015Grace, NingLiu2015decor}. 
To control false positive rates of $H_{0, j}^{\RN{1}}$ and $H_{0, j}^{\RN{2}}$, we need to control the family-wise error rate (FWER) on individual regression coefficients using, e.g., the Holm procedure \citep{Holm1979}. Alternatively, $H_{0, j}^{\RN{1}}$ and $H_{0, j}^{\RN{2}}$ can be tested using group hypothesis testing procedures that examine a group of regression coefficients, such as the least-squares kernel machines (LSKM) test \citep{Liuetal2007}. Although such group hypothesis testing approaches cannot be used to infer which specific edges show differential connectivity, they often result in advantages in computation and statistical power for testing $H^\ast_{0, j}$ compared to hypothesis testing approaches that examine individual regression coefficients. 

Because edges with different dependency relationship in two networks must also have different strength of connectivity, in practice, we can first apply methods described in Section~\ref{sec:challenges} to find edges that show different strengths of connectivity in two networks. Then, restricted to edges that are found to have different connectivity strength, we can apply DCA to find edges that are differentially connected. Such a procedure may deliver improved power and false positive rate in ultra-high dimensional settings.
Finally, we conclude that two networks are differentially connected if any of the node-wise hypothesis $H^\ast_{0,j}: ne_j^{\RN{1}}= ne_j^{\RN{2}}$ is rejected. Thus, to control the network-wise false positive rate of testing $G^\RN{1}=G^\RN{2}$, we should control the family-wise error rate for node-wise tests using, e.g., the Holm procedure \citep{Holm1979}.

To summarize, DCA consists of two steps: estimation and hypothesis testing. These steps do not require specific methods. For the estimation step, we require that for each $j\in\cV$, the estimated common neighborhood, $\widehat{ne}_j^0$, satisfies the coverage property $\lim_{n^\RN{1},n^\RN{2}\to\infty}\Pr[\widehat{ne}_j^0\supseteq ne_j^0]=1$. Moreover, we require that either $\widehat{ne}_j^0$ is deterministic with high probability through, e.g., Proposition~\ref{thm:consistent}, or that the dependence between the estimation and hypothesis testing steps is severed by sample-splitting. 
For the hypothesis testing step, any valid high-dimensional hypothesis testing method that examines individual regression coefficients or a group of them is suitable. We arrive at the following theorem. 
\begin{theorem}\label{thm:DCA}
Suppose the procedure used in the estimation step of DCA satisfies the following conditions for each $j \in \cV$:  
\begin{enumerate}
\item The estimated common neighborhood of node $j$, $\widehat{ne}_j^0$, satisfies the coverage property, i.e., $\lim_{n^\RN{1},n^\RN{2}\to\infty}\Pr[\widehat{ne}_j^0\supseteq ne_j^0]=1$;
\item Either the estimated common neighborhood $\widehat{ne}^0_j$ is deterministic with probability tending to one, or the data used to test hypotheses $H_{0, j}^{m}$ for $m\in\{\RN{1}, \RN{2}\}$ are independent of the data used to estimate $\widehat{ne}^0_j$.
\end{enumerate}
Then, if for $m\in\{\RN{1}, \RN{2}\}$ the hypothesis testing procedure for testing $H_{0, j}^{m}:\bbeta^{m,j}_{\backslash\widehat{ne}_j^0}=\bzero$ is asymptotically valid, DCA asymptotically controls the false positive rate of $H^\ast_{0,j}:ne^{\RN{1}}_j=ne^{\RN{2}}_j$.
\end{theorem}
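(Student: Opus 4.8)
The plan is to show that, under the null $H^\ast_{0,j}: ne_j^{\RN{1}}=ne_j^{\RN{2}}$, the probability that DCA rejects $H^\ast_{0,j}$ is at most $\alpha+o(1)$. By construction DCA rejects $H^\ast_{0,j}$ exactly when it rejects the intersection null $H_{0,j}=H_{0,j}^{\RN{1}}\cap H_{0,j}^{\RN{2}}$, and the \v{S}id\'ak-corrected procedure rejects this intersection precisely when at least one of the two component tests rejects (when $\widehat{ne}_j^0\cup\{j\}=\cV$ no test is performed and $H^\ast_{0,j}$ is not rejected, which only lowers the rejection probability). Hence it suffices to bound $\Pr[\text{reject }H_{0,j}^{\RN{1}}\text{ or reject }H_{0,j}^{\RN{2}}]$.

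First I would reduce to the coverage event $A=\{\widehat{ne}_j^0\supseteq ne_j^0\}$. Condition 1 gives $\Pr[A]\to1$, so the rejection probability is at most $\Pr[(\text{reject }\RN{1}\text{ or }\RN{2})\cap A]+o(1)$. The structural point is that under $H^\ast_{0,j}$ one has $ne_j^0=ne_j^{\RN{1}}=ne_j^{\RN{2}}$, so on $A$ we get $\widehat{ne}_j^0\supseteq ne_j^{\RN{1}}\cup ne_j^{\RN{2}}$; consequently every index $k\notin\widehat{ne}_j^0\cup\{j\}$ satisfies $\beta_k^{\RN{1},j}=\beta_k^{\RN{2},j}=0$, so both component nulls are genuinely \emph{true} on $A$. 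Thus on this high-probability event we are always testing true hypotheses, and the only remaining difficulty is that the targets are data-dependent through the random set $\widehat{ne}_j^0$.

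I would then remove this data-dependence using the dichotomy in Condition 2. In the deterministic branch, Proposition~\ref{thm:consistent} supplies a fixed set $S_j$ with $\Pr[\widehat{ne}_j^0=S_j]\to1$; combined with Condition 1 this forces $S_j\supseteq ne_j^0$, so under $H^\ast_{0,j}$ the fixed null ``$\beta_k^{m,j}=0$ for all $k\notin S_j\cup\{j\}$'' is true for $m\in\{\RN{1},\RN{2}\}$. Writing $B=\{\widehat{ne}_j^0=S_j\}$ and noting that on $B$ the component statistics coincide with those computed from the constant $S_j$, the rejection event is contained in the ``idealized'' rejection event based on $S_j$ together with $B^c$. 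The idealized statistics are functions of the independent samples $\bX^{\RN{1}}$ and $\bX^{\RN{2}}$ and a constant, hence independent, and each rejects its true fixed null with probability at most $1-\sqrt{1-\alpha}+o(1)$. In the sample-splitting branch I would instead condition on the estimation data $D_1$, which determines both $\widehat{ne}_j^0$ and $A$: given $D_1\in A$ the target set is fixed, the two tests use the independent test portions of $\bX^{\RN{1}}$ and $\bX^{\RN{2}}$, and are therefore conditionally independent. In either branch, \v{S}id\'ak applied to two independent tests each of asymptotic level $1-\sqrt{1-\alpha}$ gives rejection probability at most $1-(\sqrt{1-\alpha}-o(1))^2=\alpha+o(1)$; adding back the $o(1)$ mass from $A^c$ (and from $B^c$ in the deterministic branch) yields $\Pr[\text{reject }H^\ast_{0,j}]\le\alpha+o(1)$.

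I expect the main obstacle to be the rigorous passage from ``the test is asymptotically valid for a \emph{fixed} target set'' to ``the test is valid for the \emph{random} set $\widehat{ne}_j^0$.'' In the sample-splitting branch the conditional-on-$D_1$ level bound must hold simultaneously for the realized random target, which requires the assumed asymptotic validity to be uniform over the family of true nulls indexed by admissible sets $S\supseteq ne_j^0$; in the deterministic branch one must check that the $o(1)$ slack in $\Pr[\widehat{ne}_j^0=S_j]$ together with the two tests' $o(1)$ level errors can be absorbed without spoiling the \v{S}id\'ak cancellation. A secondary but necessary ingredient is the independence of the two populations' samples $\bX^{\RN{1}}\independent\bX^{\RN{2}}$, which is what makes \v{S}id\'ak deliver the exact target $\alpha$ rather than the looser union bound $2(1-\sqrt{1-\alpha})$.
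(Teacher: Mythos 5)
Your proof is correct and takes essentially the same approach as the paper: the paper gives no separate formal proof of Theorem~\ref{thm:DCA}, which is instead distilled from the reasoning in Sections~\ref{sec:est} and~\ref{sec:test}---under $H^\ast_{0,j}$, coverage of $\widehat{ne}_j^0$ makes both component nulls $H_{0,j}^{\RN{1}}$ and $H_{0,j}^{\RN{2}}$ genuinely true, Condition 2 (determinism with high probability, or sample splitting) lets the random conditioning set be treated as fixed, and the \v{S}id\'ak correction across the two independent populations delivers asymptotic level $\alpha$. Your write-up is a careful formalization of that same argument, including the uniformity-over-admissible-sets caveat that the paper leaves implicit.
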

Theorem~\ref{thm:DCA} outlines a general framework that can incorporate many estimation and inference procedures. In particular, any method that provides a \emph{consistent} estimate of $ne^0_j$ asymptotically satisfies the conditions of the theorem, because with high probability, $\widehat{ne}_j^0 = ne_j^0$, which is, obviously, a deterministic set. However, consistent variable selection in high dimensions often requires stringent assumptions that may not be justified. In Section~\ref{sec:prop}, we discuss two alternative strategies based on lasso that satisfy the requirements of Theorem~\ref{thm:DCA} under milder assumptions. 

\subsection{The Validity of Lasso for DCA}\label{sec:prop}
A convenient procedure for estimating $ne_j^0$ is the \emph{lasso neighborhood selection} \citep{MeinshausenBuhlmann2006}. In this section, we show that lasso is a valid procedure for the estimation step in DCA. However, it is not the only valid estimation procedure: any procedure that satisfies the requirements of Theorem~\ref{thm:DCA} is valid. We discuss the power of DCA with lasso as the estimation procedure in Section~\ref{sec:lassopower}. There, we also present a sufficient condition for the DCA to asymptotically achieve perfect power.

In this section, we present two propositions regarding lasso neighborhood selection, which show that under appropriate conditions, the estimate $\widehat{ne}_j^0$ satisfies the coverage property, $\lim_{n^\RN{1},n^\RN{2}\to\infty}\Pr[\widehat{ne}_j^0\supseteq ne_j^0]=1$, and is deterministic with high probability. 
Together, these results imply that lasso neighborhood selection satisfies the requirements of Theorem~\ref{thm:DCA}. 

Note that, even if the estimated common neighborhood $\widehat{ne}^0_j$ is not deterministic with high probability, we can still apply sample splitting to obtain a valid estimation procedure for DCA based on lasso that satisfies the requirements of Theorem~\ref{thm:DCA}. The validity of the lasso neighborhood selection with sample splitting is established in \cite{WassermanRoeder2009posi, Meinshausenetal2009posi}. 

To establish that lasso-based estimates of neighborhoods are deterministic with high probability, in Proposition~\ref{thm:consistent} we establish a novel relationship between the lasso neighborhood selection estimator, 
\begin{align}\label{eq:lasso}
\hat\bbeta^{m, j}\equiv\argmin_{\bb \in \mathbb{R}^{p-1}}\left\{\frac{1}{2n}\left\|\bx^m_j-\bX^m_{\backslash j} \bb\right\|_2^2 + \lambda_j^m\left\|\bb\right\|_1 \right\}.
\end{align} 
and its \emph{noiseless} (and hence deterministic) counterpart
\begin{align}\label{eq:nllasso}
	\tilde\bbeta^{m, j}\equiv\argmin_{\bb \in \mathbb{R}^{p-1}}\left\{\E\left[\frac{1}{2n}\left\|\bx^m_j-\bX^m_{\backslash j} \bb\right\|_2^2\right] + \lambda_j^m\left\|\bb\right\|_1 \right\}.
\end{align}

We now present Propositions~\ref{thm:unbiased} and ~\ref{thm:consistent}. As mentioned in Section~\ref{sec:summary}, Proposition~\ref{thm:unbiased} implies that lasso neighborhood selection is a valid method for estimation in our framework, and Proposition~\ref{thm:consistent} relieves us from using sample-splitting to circumvent double-peeking by our procedure. The conditions are summarized in Section~\ref{sec:conditions}. Note that we only present lasso here as an example---other methods can be incorporated into DCA so long as they satisfy the requirements of Theorem~\ref{thm:DCA}. A number of methods that fit into the DCA framework will be numerically evaluated in Section~\ref{sec:sims}.
\begin{proposition}\label{thm:unbiased}
Suppose conditions ({\bf A1}) and ({\bf A2}) in Section~\ref{sec:conditions} hold for variable $j\in\cV$. Then $\widehat{ne}_j^0$ estimated using lasso neighborhood selection satisfies
\begin{align}
\lim_{n^\RN{1},n^\RN{2}\to\infty}\Pr\left[\widehat{ne}_{j}^0\supseteq ne_j^0\right]&=1.
\end{align} 
\end{proposition}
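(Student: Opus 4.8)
The plan is to reduce the coverage property to a per-network \emph{screening} statement---that the lasso support contains the true support---and then establish the latter by a standard estimation-error-plus-signal-strength argument.

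First I would decompose the estimated and true common neighborhoods along the two networks. Writing $\supp(\hat\bbeta^{m,j})$ for the support of the lasso estimate in network $m$, the definitions give
\begin{equation}
\widehat{ne}_j^0 = \supp(\hat\bbeta^{\RN{1},j})\cap\supp(\hat\bbeta^{\RN{2},j}), \qquad ne_j^0 = \supp(\bbeta^{\RN{1},j})\cap\supp(\bbeta^{\RN{2},j}).
\end{equation}
Hence it suffices to prove, for each $m\in\{\RN{1},\RN{2}\}$ separately, the one-sided screening property $\Pr[\supp(\hat\bbeta^{m,j})\supseteq\supp(\bbeta^{m,j})]\to1$; if $A_m$ denotes this event, then $\Pr[A_{\RN{1}}\cap A_{\RN{2}}]\geq 1-\Pr[A_{\RN{1}}^c]-\Pr[A_{\RN{2}}^c]\to1$, and on $A_{\RN{1}}\cap A_{\RN{2}}$ one has $\widehat{ne}_j^0\supseteq ne_j^0$. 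This reduction is what makes the problem tractable: we never need exact support recovery (which would require irrepresentability on both designs), only that the lasso does not erroneously zero out a truly nonzero coefficient.

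Second, to establish the per-network screening property, I would bound the lasso estimation error on a high-probability event. The key random quantity is the empirical gradient $\big\|(\bX^m_{\backslash j})^\top\bepsilon^{m,j}/n^m\big\|_\infty$; by the Gaussianity of the data---so that $\bepsilon^{m,j}$ is Gaussian conditional on the design---together with a union bound over the $p-1$ coordinates, this is of order $\sqrt{\log p/n^m}$ with probability tending to one, provided $\lambda_j^m$ is chosen proportionally. On this event, the restricted-eigenvalue (compatibility) condition supplied by ({\bf A1})--({\bf A2}) yields the usual oracle bound $\big\|\hat\bbeta^{m,j}-\bbeta^{m,j}\big\|_\infty\lesssim\sqrt{s_j^m}\,\lambda_j^m/\kappa$, where $s_j^m=|ne_j^m|$ and $\kappa$ is the restricted-eigenvalue constant. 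I would then invoke the signal-strength (beta-min) condition that I expect ({\bf A1})--({\bf A2}) to provide: if $\min_{k\in ne_j^m}|\beta_k^{m,j}|$ exceeds this error bound, then for every $k\in\supp(\bbeta^{m,j})$ we have $|\hat\beta_k^{m,j}|\geq|\beta_k^{m,j}|-\big\|\hat\bbeta^{m,j}-\bbeta^{m,j}\big\|_\infty>0$, so $k\in\supp(\hat\bbeta^{m,j})$, giving $A_m$. Letting $n^\RN{1},n^\RN{2}\to\infty$ so that both events hold simultaneously completes the argument.

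The main obstacle I anticipate is matching the estimation-error rate against the signal floor under the weakest defensible design assumptions. An $\ell_\infty$ bound gives the cleanest support argument but typically demands an incoherence-type condition on $\bX^m_{\backslash j}$, whereas routing through an $\ell_2$ bound needs only a restricted-eigenvalue condition but forces the beta-min threshold to scale with $\sqrt{s_j^m}$. Reconciling this tension---ensuring ({\bf A1})--({\bf A2}) are simultaneously strong enough to drive the error below the signal floor yet mild enough to remain credible in the regime $p\gg\max\{n^\RN{1},n^\RN{2}\}$---is where the substantive work lies; the concentration of the noise term and the reduction above are comparatively routine.
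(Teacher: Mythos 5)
Your proposal is correct and takes essentially the same route as the paper: the paper likewise reduces the claim to per-network screening, $\Pr\left[\supp(\hat\bbeta^{m,j})\supseteq\supp(\bbeta^{m,j})\right]\to1$, then derives a sample-level restricted eigenvalue condition from ({\bf A1})--({\bf A2}) (via a Gaussian-design transfer result of Zhou), applies the Bickel--Ritov--Tsybakov $\ell_2$ oracle bound $\|\hat\bbeta^{j}-\bbeta^{j}\|_2\leq\lambda_j\sqrt{8q_j}/\phi^2$, and concludes through the beta-min condition in ({\bf A2}), which scales as $\lambda_j\sqrt{q_j}$ --- precisely the $\ell_2$-route resolution of the tension you flagged. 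The only step your plan leaves implicit, and the paper spells out in a dedicated lemma, is that the population eigenvalue bounds in ({\bf A1}) must be transferred to a restricted eigenvalue condition for the random design with high probability.
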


\begin{proposition}\label{thm:consistent}
Suppose conditions ({\bf A1}) -- ({\bf A3}) in Section~\ref{sec:conditions} hold for variable $j\in\cV$. 
Then $\widehat{ne}_j^0$ estimated using lasso neighborhood selection satisfies 
\begin{align}
\lim_{n^\RN{1},n^\RN{2}\to\infty}\Pr\left[\widehat{ne}_{j}^0= \widetilde{ne}_{j}^0\right]&= 1, 
\end{align}
where $\widetilde{ne}_{j}^0\equiv\supp(\tilde\bbeta^{\RN{1}, j})\cap\supp(\tilde\bbeta^{\RN{2}, j})$, and $\tilde\bbeta^{\RN{1}, j}$ and $\tilde\bbeta^{\RN{2}, j}$ are defined in \eqref{eq:nllasso}.
\end{proposition}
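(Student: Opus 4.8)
The plan is to reduce the two-network statement to a single-population support-recovery result and then intersect. Concretely, I would show that for each $m\in\{\RN{1},\RN{2}\}$, $\lim_{n^\RN{1},n^\RN{2}\to\infty}\Pr\big[\supp(\hat\bbeta^{m,j}) = \supp(\tilde\bbeta^{m,j})\big] = 1$. Since $\widehat{ne}_j^0 = \supp(\hat\bbeta^{\RN{1},j})\cap\supp(\hat\bbeta^{\RN{2},j})$ by definition and $\widetilde{ne}_j^0 = \supp(\tilde\bbeta^{\RN{1},j})\cap\supp(\tilde\bbeta^{\RN{2},j})$ by hypothesis, a union bound over the two populations then yields $\Pr[\widehat{ne}_j^0 = \widetilde{ne}_j^0]\to 1$. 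This reduces the problem to a single-population claim, so below I suppress $m$ and $j$ and write $\hat\bbeta,\tilde\bbeta$ for the empirical and noiseless solutions, $\bX,\bx,\lambda$ for the design, response, and penalty, and $S = \supp(\tilde\bbeta)$ for the (deterministic) noiseless support.

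The engine is a primal--dual witness construction. First I would record the stationarity conditions for both problems: the noiseless solution satisfies $\bSigma\tilde\bbeta - \bgamma + \lambda\tilde\bz = \bzero$, where $\bSigma = \E[n^{-1}\bX^\top\bX]$ is the population Gram matrix, $\bgamma = \E[n^{-1}\bX^\top\bx]$, and $\tilde\bz$ is a subgradient of $\|\cdot\|_1$ at $\tilde\bbeta$; the empirical solution satisfies the analogous equation with $\hat\bSigma = n^{-1}\bX^\top\bX$, $\hat\bgamma = n^{-1}\bX^\top\bx$, and subgradient $\hat\bz$. I expect conditions ({\bf A1})--({\bf A3}) to supply (i) a population positive-definiteness (restricted-eigenvalue) condition making $\bSigma_{SS}$ invertible, (ii) a noiseless beta-min condition, $\min_{k\in S}|\tilde\beta_k|$ bounded away from zero, and (iii) a \emph{strict} dual-feasibility (incoherence) condition, $\max_{k\notin S}|\tilde z_k| < 1$ with a fixed margin. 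I would then build a candidate empirical solution supported on $S$ by solving the restricted ("oracle") lasso on $S$ and setting coordinates off $S$ to zero.

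The heart of the argument is a perturbation analysis showing this candidate is, with probability tending to one, the unique empirical lasso minimizer with support exactly $S$. For Gaussian $\bX$, standard concentration gives $\|\hat\bSigma - \bSigma\|_\infty$ and $\|\hat\bgamma - \bgamma\|_\infty$ of order $\sqrt{\log p / n}$ on an event of probability tending to one. Feeding these into the restricted stationarity equations, I would establish: (a) the oracle coefficients on $S$ lie within an $o(1)$ neighborhood of $\tilde\bbeta_S$, so by the beta-min condition they are nonzero with the correct signs, giving $S \subseteq \supp(\hat\bbeta)$; and (b) the dual variables off $S$ stay strictly below one in magnitude, because the strict-feasibility margin from ({\bf A3}) dominates the $O(\sqrt{\log p/n})$ perturbation, giving $\supp(\hat\bbeta)\subseteq S$. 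Uniqueness follows from invertibility of $\hat\bSigma_{SS}$ (inherited from $\bSigma_{SS}$ via concentration) together with strict dual feasibility. Intersecting the two per-population success events then completes the proof.

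The main obstacle I anticipate is step (b): controlling strict dual feasibility relative to the \emph{noiseless} support $S = \supp(\tilde\bbeta)$ rather than the true neighborhood $ne_j^m$. Because $\tilde\bbeta$ is a shrunken, biased target, I cannot invoke off-the-shelf lasso support-recovery theorems, which recover the true support under irrepresentability imposed on the true active set; instead I must phrase incoherence and the margin with respect to $S$ and verify that the $\ell_\infty$-perturbation of the dual certificate induced by $\hat\bSigma - \bSigma$ and $\hat\bgamma - \bgamma$ is asymptotically negligible against the fixed margin in ({\bf A3}). Carefully tracking that the margin is a population-level constant while the perturbation vanishes is precisely what forces both events to hold simultaneously with probability tending to one.
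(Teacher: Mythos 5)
Your overall skeleton matches the paper's: reduce to one population, then prove $\supp(\hat\bbeta^{m,j})=\supp(\tilde\bbeta^{m,j})$ by pairing a primal perturbation bound with a beta-min argument (for $\supp(\tilde\bbeta)\subseteq\supp(\hat\bbeta)$) and a dual/subgradient perturbation bound with the strict-feasibility margin (for $\supp(\hat\bbeta)\subseteq\supp(\tilde\bbeta)$); the paper does this by directly comparing $\hat\bbeta^j$ with $\tilde\bbeta^j$ and $\hat\btau^j$ with $\tilde\btau^j$ rather than via a primal--dual witness, but that distinction is cosmetic. The genuine gap is in your step (a): you posit that ({\bf A1})--({\bf A3}) supply a noiseless beta-min condition, $\min_{k\in S}|\tilde\beta_k|$ bounded away from zero. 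No such condition is assumed, and it is false in general: the noiseless support $\widetilde{ne}_j$ typically contains spurious coordinates $k\in\widetilde{ne}_j\backslash ne_j$ at which $\beta^j_k=0$, and for those coordinates the stationarity condition of \eqref{eq:nllasso} gives exactly $|\tilde\beta^j_k|=\lambda_j\bigl|[\bSigma_{(\widetilde{ne}_j,\widetilde{ne}_j)}]^{-1}\tilde\btau^{j}_{\widetilde{ne}_j}\bigr|_k$, which vanishes as $\lambda_j\to0$. Consequently, showing the empirical solution lies in an $o(1)$ neighborhood of $\tilde\bbeta_S$ does not force nonzero values (let alone correct signs) at those coordinates, and the inclusion $S\subseteq\supp(\hat\bbeta)$ does not follow as you argue.

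What fills this hole in the paper is Lemma~\ref{mainlemmaa2}, which proves the weaker but sufficient statement $q_j\sqrt{\log(p)/n}\,/\,\tilde b^j_{\min}\to 0$ by splitting $\widetilde{ne}_j$ into two parts: on the true neighbors $ne_j$, condition ({\bf A2}) together with the bound $\|\tilde\bbeta^j-\bbeta^j\|_2\le\lambda_j\sqrt{8q_j}/\phi^2$ yields $|\tilde\beta^j_k|\ge(3-2\sqrt{2})\lambda_j\sqrt{q_j}/\phi^2$; on the spurious coordinates $\widetilde{ne}_j\backslash ne_j$, the displayed identity above is combined with the second rate condition of ({\bf A3})---the one involving $\min_{k\in\widetilde{ne}_j\backslash ne_j}\bigl|[\bSigma_{(\widetilde{ne}_j,\widetilde{ne}_j)}]^{-1}\tilde\btau^{j}_{\widetilde{ne}_j}\bigr|_k$---which your proposal never invokes and whose role is precisely to handle this case. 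Relatedly, since $\tilde b^j_{\min}$ itself tends to zero, $o(1)$ closeness is not enough even after the correct beta-min bound is in hand: you need the explicit primal rate $\|\hat\bbeta^j-\tilde\bbeta^j\|_1=\mathcal{O}_p\bigl(q_j\sqrt{\log(p)/n}\bigr)$ (the paper's Lemma~\ref{mainlemmaa1}) so that the estimation error is dominated by $\tilde b^j_{\min}$. A smaller inaccuracy of the same flavor appears in your step (b): the margin $\delta$ in ({\bf A3}) is not required to be a fixed population constant; it may shrink, and the dual argument survives only because of the rate condition $\sqrt{\log(p)/n}\,q_j/(\lambda_j\delta)\to0$, which is what the paper actually uses.
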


Propositions~\ref{thm:unbiased} and \ref{thm:consistent} are proved in Sections~\ref{sec:pfunbiased} and \ref{sec:pfconsistent}, respectively. The result in Proposition~\ref{thm:consistent} should not be confused with the variable selection consistency of lasso \citep{MeinshausenBuhlmann2006}, which shows that under the stringent irrepresentability condition, the selected neighborhoods converge in probability to the true neighborhoods, i.e., $\lim_{n^\RN{1},n^\RN{2}\to\infty}\Pr\big[\widehat{ne}_{j}^0= ne_j^0\big]= 1$. 
Proposition~\ref{thm:consistent} only shows that the selected neighborhoods converge to \emph{deterministic sets}, $\widetilde{ne}_j^0$.

\subsection{Power of DCA with Lasso in the Estimation Step}\label{sec:lassopower}
In Section~\ref{sec:est}, we argued that the estimated common neighborhood $\widehat{ne}^0_j$ needs to satisfy the coverage property, i.e., $\lim_{n^\RN{1},n^\RN{2}\to\infty}\Pr[\widehat{ne}_j^0\supseteq ne_j^0]=1$. 
In this section, we discuss how the cardinality of $\widehat{ne}^0_j$ affects the power of DCA. 
We also discuss a sufficient condition, where, using lasso in the estimation step, the power of DCA could approach one asymptotically for detecting differential connectivity. %\textcolor{red}{why non-vanishing? shouldn't the power go to 1 in this case (given the beta-min cond)?}

As mentioned in Section~\ref{sec:test}, to examine $H^\ast_{0,j}: ne_j^\RN{1}\neq ne_j^\RN{2}$, in the second step of DCA, we test whether variable $j$ is conditionally independent of variables that are not in the estimated common neighborhood. 
In the case where $ne_j^\RN{1}\neq ne_j^\RN{2}$, if the estimated neighborhood of variable $j$ is too large, such that $\widehat{ne}_j^0\supseteq ne_{j}^\RN{1} \cup ne_j^\RN{2}$, then for any $k\notin\widehat{ne}_j^0\cup\{j\}$, $\bx_j^m\independent \bx^m_k\, \mid \,\bx^m_{\backslash\{j,k\}}$ for $m\in\{\RN{1},\RN{2}\}$. In this case, we will not be able to identify differential connectivity of node $j$. Thus, even though the validity of DCA requires that $\widehat{ne}_j^0$ achieves the coverage property, $\widehat{ne}_j^0$ should not be exceedingly larger than $ne_j^0$. 

To examine the power of DCA with the lasso-based estimate of $\widehat{ne}_j^0$, suppose $|ne_j^\RN{2}|=\smallO(|ne_j^\RN{1}|)$. 
\citet{BelloniChernozhukov2013qlambda} show that, under mild conditions, with high probability, $|ne_j^m|\asymp|\widehat{ne}_j^m|=|\supp\big(\hat\bbeta^{m,j}\big)|$, where $\asymp$ denotes that two quantities are of the same asymptotic order. Therefore, with high probability, $|\widehat{ne}_j^0| \le |\widehat{ne}_j^\RN{2}|\asymp |ne_j^\RN{2}|=\smallO(|ne_j^\RN{1}|)$, i.e., $|ne_j^\RN{1}|\gg |\widehat{ne}_j^0|$ so that $\widehat{ne}_j^0\nsupseteq ne_{j}^\RN{1}$. Similarly, if $|ne_j^\RN{1}|=\smallO(|ne_j^\RN{2}|)$, then with high probability $\widehat{ne}_j^0\nsupseteq ne_{j}^\RN{2}$. Thus, if $|ne_j^\RN{1}|$ and $|ne_j^\RN{2}|$ are not of the same order, then, with high probability, $\widehat{ne}_j^0\nsupseteq ne_{j}^\RN{1}\,\triangle\,ne_j^\RN{2}$, and there exists $k\notin\widehat{ne}_j^0\cup\{j\}$ such that $\bx_j^m\nindependent \bx^m_k\,|\,\bx^m_{\backslash\{j,k\}}$ for $m\in\{\RN{1},\RN{2}\}$. In this case, with any conditional testing method that achieves asymptotic power one, DCA is asymptotically guaranteed to detect the differential connectivity of $j$. 

While the conditions presented in the above special case are sufficient and not necessary, the scenario sheds light on the power properties of DCA. We defer to future research a more thorough assessment of power properties of DCA.

\section{Numerical Studies}
\subsection{Simulation Studies}\label{sec:sims}

In this section, we present results of a simulation study that evaluates the power and false positive rate of the DCA framework using various choices of procedures in the estimation and hypothesis testing steps. As discussed in Section~\ref{sec:challenges}, quantitative tests that examine the equality of partial correlations do not control the type-I error rate of qualitative tests. Therefore, comparison with these methods would not be meaningful.

In this simulation, we generate $\cE^{\RN{1}}$ from a power-law degree distribution with power parameter 5, $|\cV|\equiv p=200$ and $|\cE^{\RN{1}}|=p(p-1)/100$; this corresponds to an edge density of 0.02 in graph $G^{\RN{1}}$. Power-law degree distributions are able to produce graphs with hubs, which are expected in real-world networks \citep{Newman2013}. To simulate $\cE^{\RN{2}}$, among the 100 most connected nodes in $G^{\RN{1}}$, we randomly select 20 nodes, remove all the edges that are connected to them, and then randomly add edges to graph $G^{\RN{2}}$ so that $|\cE^{\RN{2}}|=|\cE^{\RN{1}}|$. To simulate $\bOmega^{\RN{1}}$, for $j\neq k$, we let 
\[   \Omega_{jk}^{\RN{1}}=\left\{
\begin{array}{ll}
     0 		& (j,k)\notin\cE^{\RN{1}} \\
     0.5	& (j, k)\in\cE^{\RN{1}}, \text{ with 50\% probability} \\
    -0.5	& (j,k)\in\cE^{\RN{1}}, \text{ with 50\% probability} \\
\end{array} 
\right.. \]
To simulate $\bOmega^{\RN{2}}$, for $j\neq k$, we let 
\[   \Omega_{jk}^{\RN{2}}=\left\{
\begin{array}{ll}
	\Omega_{jk}^{\RN{1}} & (j,k)\in\cE^{\RN{1}}\cap\cE^{\RN{2}} \\
     0 		& (j,k)\notin\cE^{\RN{2}} \\
     0.5	& (j, k)\in\cE^{\RN{2}}\backslash\cE^{\RN{1}}, \text{ with 50\% probability} \\
    -0.5	& (j,k)\in\cE^{\RN{2}}\backslash\cE^{\RN{1}}, \text{ with 50\% probability} \\
\end{array} 
\right.. \]
Finally, for $m\in\{\RN{1},\RN{2}\}$, we let $\Omega_{jj}^m=\sum_{k\neq j}\big|\Omega_{jk}^m\big|+u^m$ for $j=1,\dots,p$, where $u^m$ is chosen such that $\phi^2_{\min}\big(\bOmega^m\big)=0.1$, where $\phi^2_{\min}\big(\bOmega^m\big)$ is the smallest eigenvalues of $\bOmega^m$. Figure~\ref{fig:hist} shows the distribution of non-zero partial correlations in $\bOmega^{\RN{1}}$ and $\bOmega^{\RN{2}}$. From $\bOmega^{\RN{1}}$ and $\bOmega^{\RN{2}}$, we generate $\bX^{\RN{1}}\sim_{i.i.d.}\mathcal{N}_p\big(\bzero,\big[\bOmega^{\RN{1}}\big]^{-1}\big)$ and $\bX^{\RN{2}}\sim_{i.i.d.}\mathcal{N}_p\big(\bzero,\big[\bOmega^{\RN{2}}\big]^{-1}\big)$, where $n^{\RN{1}}=n^{\RN{2}}=n \in\{ 100, 200, 400, 800\}$.

\begin{figure}[h]
\caption{Distribution of non-zero partial correlations in simulated $\bOmega^{\RN{1}}$ and $\bOmega^{\RN{2}}$. 
} 
\vspace{1em}
\label{fig:hist}
\centering
\includegraphics[width = 0.8\textwidth]{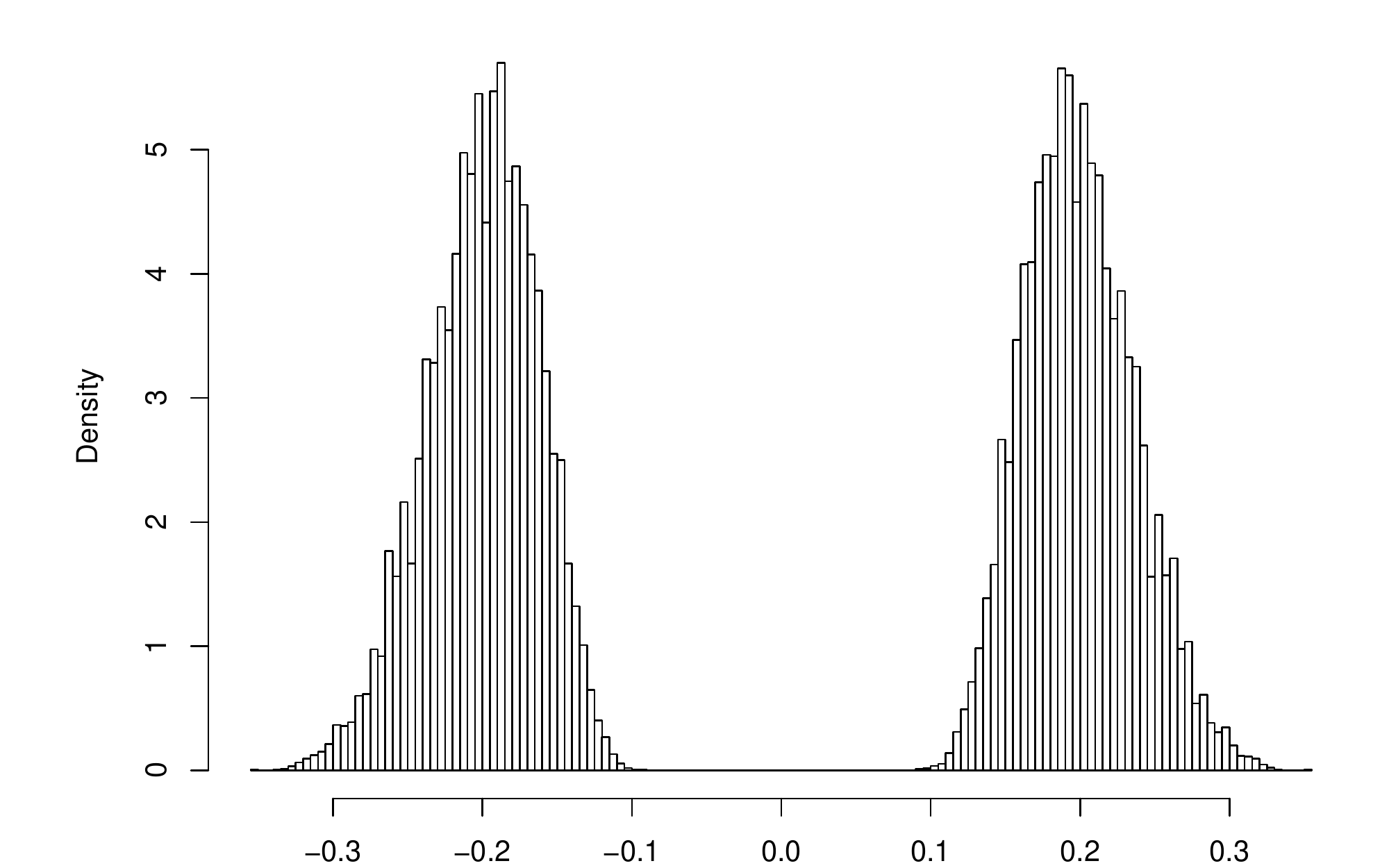}
\end{figure}

%\begin{figure}[h]
%\caption{Distribution of non-zero partial correlations in simulated $\bOmega^{\RN{1}}$ and $\bOmega^{\RN{2}}$. %\textcolor{red}{I think it's better to put this figure in the supplement (and change the reference in the text) -- you can sue the new figure 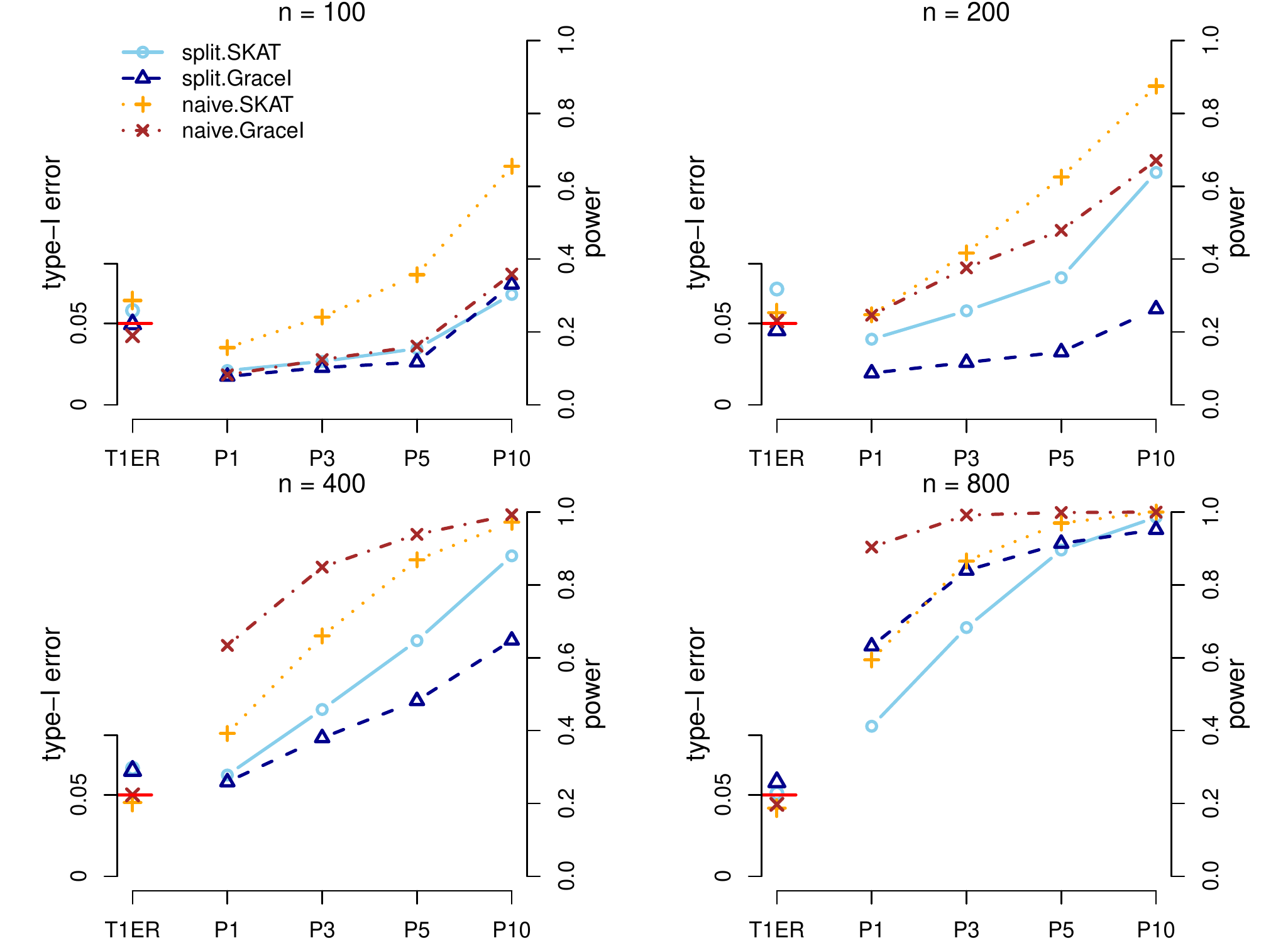 in the folder}
%} 
%\vspace{1em}
%\label{fig:hist}
%\centering
%\includegraphics[width = 0.5\textwidth]{hist2.pdf}
%\end{figure}

To estimate common neighbors of each node $j\in\cV$, we use lasso neighborhood selection, with tuning parameters chosen by 10-fold cross-validation (CV). 
We either use sample-splitting to address the issue of double-peeking, with half of samples used to estimate $\widehat{ne}_j^0$ and the other half to test $H_{0, j}$, or use a na\"ive approach, where the whole dataset is used to estimate $\widehat{ne}_j^0$ and to test $H^\ast_{0, j}: ne^{\RN{1}}_j=ne^{\RN{2}}_j$; the latter approach is justified by Proposition~\ref{thm:consistent}. 
To examine $H_{0, j}$ for each $j=1,\dots,p$, we consider LSKM \citep{Liuetal2007}, which is a group hypothesis testing methods, and the GraceI test \citep{ZhaoShojaie2015Grace}, which examines individual regression coefficients. As a result, we compare in total 4 approaches: \{na\"ive lasso neighborhood selection, sample-splitting lasso neighborhood selection\}$\times$\{LSKM, GraceI\}. 

Note that similar to the discussion in \citet{MeinshausenBuhlmann2006}, it is possible that a node-pair $(j, k)$ is identified to be differentially connected in testing  $H^\ast_{0, j}: ne^{\RN{1}}_j=ne^{\RN{2}}_j$, but not so in testing  $H^\ast_{0, k}: ne^{\RN{1}}_k=ne^{\RN{2}}_k$. To mitigate this issue, we used the ``OR'' rule \citep{MeinshausenBuhlmann2006} in the simulation studies and the cancer genetics application presented in Section~\ref{sec:RD}. With the ``OR'' rule, an edge becomes a false positive if it is a false positive in any of the two node-wise tests. Hence, we should control the false positive rate at level $\alpha/2$ for the node-wise tests (i.e., Bonferroni correction). Based on a similar reasoning, the ``AND'' rule is also valid if the node-wise tests are controlled at level $\alpha$.

Figure~\ref{fig:simGraceI} shows average false positive rates of falsely rejecting $H^\ast_{0, j}: ne^{\RN{1}}_j=ne^{\RN{2}}_j$, as well as average power of various DCA variants based on $R=100$ repetitions. 
Let $z_{j,r}$ be the decision function based on the GraceI test or LSKM: specifically, $z_{j,r}=1$ if hypothesis $H^\ast_{0, j}: ne^{\RN{1}}_j=ne^{\RN{2}}_j$ is rejected in the $r$th repetition, and $z_{j,r}=0$ otherwise. The average false positive rate is defined as
\begin{equation}
\mathrm{T1ER}= \frac{ \sum_{r=1}^R{\left\{\sum_{j\in\cV:ne^{\RN{1}}_{j,r}=ne^{\RN{2}}_{j,r}}z_{j,r}\right\}} }{ {\sum_{r=1}^R \left|\left\{j\in\cV:ne^{\RN{1}}_{j,r}=ne^{\RN{2}}_{j,r}\right\}\right|} },
\end{equation}
%\begin{align}
%\mathrm{T1ER}=&\left({\sum_{r=1}^R \left|\left\{j\in\cV:ne^{\RN{1}}_{j,r}=ne^{\RN{2}}_{j,r}\right\}\right|}\right)^{-1} \nonumber \\
%			\cdot& \sum_{r=1}^R{\left\{\sum_{j\in\cV:ne^{\RN{1}}_{j,r}=ne^{\RN{2}}_{j,r}}z_{j,r}\right\}},
%\end{align}
i.e., the proportion of null hypotheses in $R$ repetitions that we falsely reject $H^\ast_{0, j}$. 
For $t \in \{1,3,5,10\}$, the average power of rejecting $H^\ast_{0, j}: ne^{\RN{1}}_j=ne^{\RN{2}}_j$ when $ne^{\RN{1}}_j$ and $ne^{\RN{2}}_j$ differ by at least $t$ members is defined as 
%\begin{align}
%\mathrm{Pt}=&\left(\sum_{r=1}^R \left|\left\{j\in\cV:\left|ne^{\RN{1}}_{j,r}\,\triangle\,ne^{\RN{2}}_{j,r}\right|\geq t\right\}\right|\right)^{-1} \nonumber \\
%\cdot&\sum_{r=1}^R{\left\{\sum_{j\in\cV:\left|ne^{\RN{1}}_{j,r}\,\triangle\,ne^{\RN{2}}_{j,r}\right|\geq t}z_{j,r}\right\}},
%\end{align}
\begin{equation}
\mathrm{Pt}= \frac{ \sum_{r=1}^R{\left\{\sum_{j\in\cV:\left|ne^{\RN{1}}_{j,r}\,\triangle\,ne^{\RN{2}}_{j,r}\right|\geq t}z_{j,r}\right\}} }{ 
\sum_{r=1}^R \left|\left\{j\in\cV:\left|ne^{\RN{1}}_{j,r}\,\triangle\,ne^{\RN{2}}_{j,r}\right|\geq t\right\}\right|
},
\end{equation}
where ``$\triangle$'' denotes the symmetric difference of two sets.%; thus, P1, P3, P5 and P10 in Table correspond to at least one, three, five or ten members. 

\begin{figure}[t]
\caption{The average false positive rate and power of rejecting $H^\ast_{0, j}$. The axis for false positive rate is on the left of each panel, whereas the axis for power is on the right.
} 
\vspace{1em}
\label{fig:simGraceI}
\centering
\includegraphics[width = 0.85\textwidth]{}
\end{figure}

The simulation reveals several interesting patterns. First, na\"ive procedures which use the same data to estimate $\widehat{ne}_j^0$ and test $H^m_{0, j}, m \in \{ \RN{1}, \RN{2} \}$ tend to have better statistical power than their sample-splitting counterparts. This is understandable, as sample-splitting only uses half of the data for hypothesis testing. More surprisingly, na\"ive procedures also better control the false positive rate than sample-splitting procedures. This is because the event $\widehat{ne}_j^0\supseteq ne^0_j$, which is crucial for controlling the false positive rate and is guaranteed to happen with high probability asymptotically, is less likely to happen with the smaller samples available for the sample-splitting estimator.  
In addition, we can see that LSKM has better power than the GraceI test for smaller sample sizes (LSKM also has a slightly worse control of the false positive rate than GraceI). But as sample size increases, the power of GraceI eventually surpasses LSKM. Finally, as expected, the probability of rejecting $H^\ast_{0, j}: ne^{\RN{1}}_j=ne^{\RN{2}}_j$ is higher when $ne^{\RN{1}}_j$ and $ne^{\RN{2}}_j$ differ by more elements. 

%To illustrate that quantitative tests do not control type-I error rate on qualitative hypotheses $H^\ast_{0, j}: ne^{\RN{1}}_j=ne^{\RN{2}}_j$, we examined the permutation test introduced in \citet{Gilletal2014} with the same simulation setting. Table \ref{tab:dna} summarizes its type-I error rate, which may appear unbelievably large at first sight. But note that even on qualitatively identically connected nodes $j\in\cV: ne^{\RN{1}}_j=ne^{\RN{2}}_j$, as illustrated in Section~\ref{sec:challenges}, the connection strength as reflected by partial correlation may be vastly different on \emph{many} edges due to the difference in qualitative connectivity on \emph{other} nodes. Thus, powerful quantitative tests will likely conclude that those qualitatively identically distributed nodes are differentially connected. Although other quantitative testing methods may result in different performance, due the above reason, they will also have inflated type-I error rate. 
%
%\begin{table}[!b]
%\processtable{Type-I error rate of the permutation test introduced in \citet{Gilletal2014} under the same setting as Figure~\ref{fig:simGraceI}. \label{tab:dna}} 
%{\begin{tabular}{@{}llll@{}}\toprule 
%n=100 &n=200 & n=400 & n=800\\\midrule
%0.998 & 0.995 & 0.981 & 0.919\\\botrule
%\end{tabular}}{}
%\end{table}

\begin{figure}[t]
\caption{Differentially connected edges between ER- and ER+ breast cancer patients. Yellow edges are genetic interactions that are found in ER- but not ER+ breast cancer patients by the GraceI test; gray edges are genetic interactions that are found in ER+ but not ER- breast cancer patients by the GraceI test. Identically connected edges are omitted.
} 
\vspace{1em}
\label{fig:TCGAres}
\centering
\includegraphics[width = 0.75\textwidth, clip=TRUE, trim=1cm 2.5cm 1cm 2.5cm]{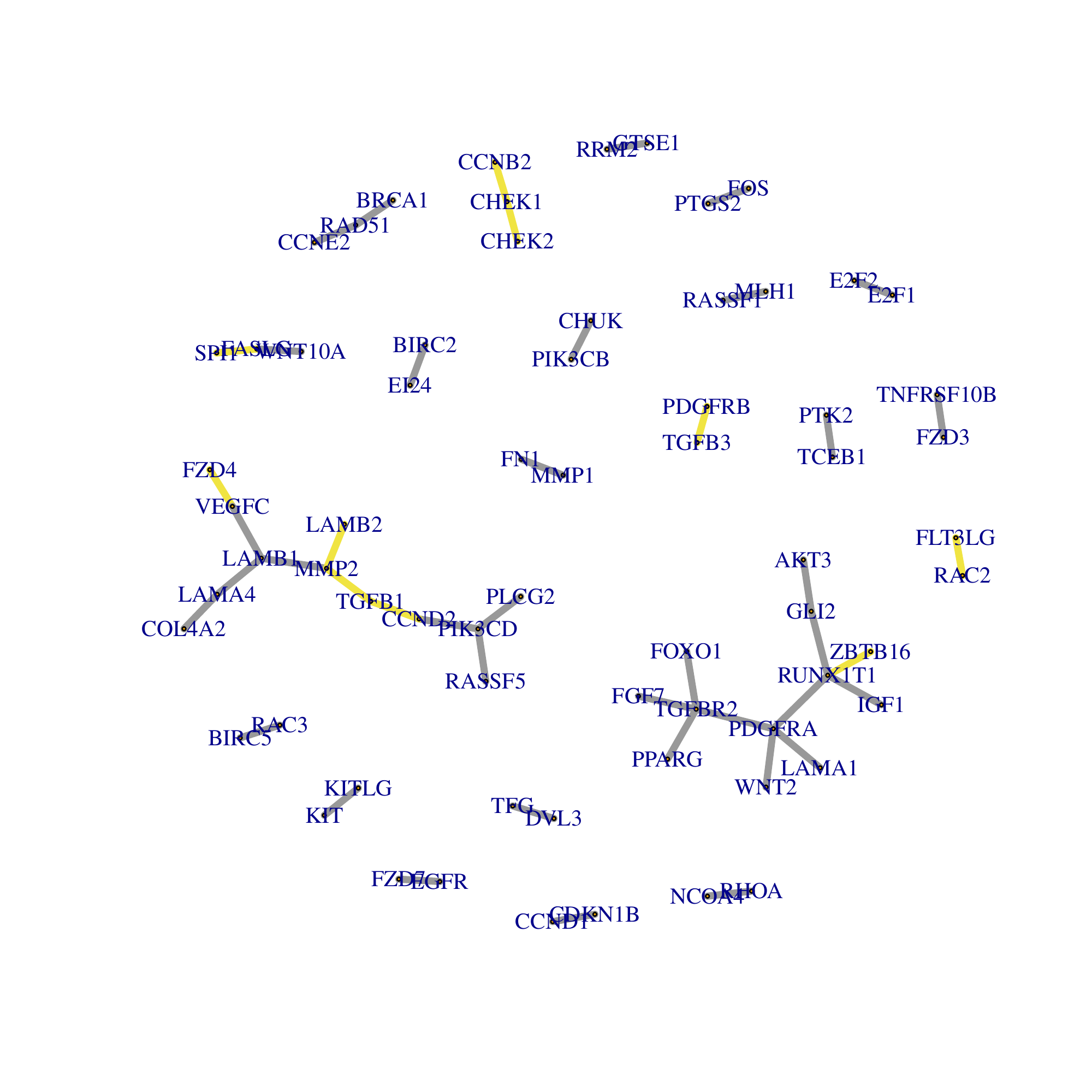}
\end{figure}

\subsection{Application to Cancer Genetics Data}\label{sec:RD}
%\textcolor{red}{move to supplement.}
Breast cancer has multiple clinically verified subtypes \citep{Perouetal2000} that have been shown to have distinct prognostics \citep{Jonssonetal2010}. Based on the expression of estrogen receptor (ER), breast cancer can be classified into ER positive (ER+) and ER negative (ER-) subtypes. ER+ breast cancer has a larger number of estrogen receptors, and has better survival prognosis than ER- breast cancer \citep{Careyetal2006ER}. The genetic pathways of ER+ and ER- subtypes are expected to be similar, but also show some important differences. Understanding such differences could be critical to help researchers better understand breast cancer. To investigate differences in genetic pathways between ER+ and ER- breast cancer patients, we obtain gene expression data from the Cancer Genome Atlas (TCGA). The data contain the expression levels of $p=358$ genes in cancer related pathways from KEGG for $n^{\RN{1}}=117$ ER- and $n^{\RN{2}}=407$ ER+ breast cancer patients.

Since our goal is to identify differentially connected edges, group hypothesis testing procedures such as LSKM are no longer valid. Hence, we use the GraceI test after na\"ive lasso neighborhood selection to examine the difference in genetic pathways between ER+ and ER- breast cancer patients. In this example, family-wise error rate is controlled at $\alpha = 0.1$ level using the Holm procedure. Differentially connected edges are shown in Figure~\ref{fig:TCGAres}. Specifically, among other genes, all of the genes that have at least three differential connections identified by DCA have already been found by previous research to be associated with the subtype, progression and prognostics of breast cancer. These \emph{highly differentially connected genes} are: laminin subunit $\beta$1 (LAMB1) \cite{Pellegrinietal1995}, matrix Metalloproteinase-2 (MMP2) \citep{JezierskaMotyl2009}, platelet-derived growth factor receptor $\alpha$ (PDGFRA) \cite{Carvalhoetal2005}, phosphoinositide 3'-kinases $\delta$ (PIK3CD) \citep{Sawyeretal2003}, runt-related transcription factor 1 (RUNX1T1) \citep{Janes2011} and TGF-$\beta$ receptor type-2 (TGFBR2) \citep{Maetal2012, Buschetal2015}.

As a comparison, we performed quantitative test of \citet{Gilletal2014}, which detected that 196 out of 358 genes in the dataset are differentially connected in two networks at family-wise error rate level of $\alpha = 0.1$. Given the overall robustness of biological system \citep[see, e.g.,][]{Kitano2004}, such a large number of differentially connected genes likely confirms our simulation findings that quantitative test does not control the null hypothesis $H^\ast_{0,j}: ne_j^{\RN{1}}= ne_j^{\RN{2}}$ at the desired level.

\subsection{Application to Brain Imaging Data}\label{sec:RD2}
Mild, uncomplicated traumatic brain injury (TBI), or concussion, can occur from a variety of head injury exposures. In youth, sports and recreational activities comprise a predominate number of these exposures with uncomplicated mild comprising the vast majority of TBIs. By definition, these are diagnostically ambiguous injuries with no radiographic findings on conventional CT or MRI.  While some children recover just fine, a subset remain symptomatic for a sustained period of time. This group---often referred to as the `miserable minority'---make up the majority of the patient population in concussion clinics.  Newer imaging methods are needed to provide more sensitive diagnostic and prognostic screening tools that elucidate the underlying pathophysiological changes in these concussed youth whose symptoms do not resolve.  To this end, a collaborative team evaluated 10-14 year olds following a sports or recreational concussion who remained symptomatic at 3-4 weeks post-injury and a group of age and gender matched controls with no history of head injury, psychological health diagnoses, or learning disabilities.  Advanced neuroimaging was collected on each participant which included collecting diffusion tensor imaging (DTI).  DTI has been shown to be sensitive to more subtle changes in white matter that have been reported to strongly correlate with axonal injury pathology \citep{bennett2012, mac2007} and relate to outcome in other concussion groups \citep{bazarian2012, cubon2011, gajawelli2013}.  

Upon preprocessing the data according to the procedure outlined in Section~\ref{sec:data}, we obtained data on $p=78$ brain regions from $n^{\RN{1}} = 27$ healthy controls and $n^{\RN{2}} = 25$ TBI patients. To assess whether brain connectivity patterns of TBI patients differs from that of healthy controls, we used the DCA framework with the GraceI test after na\"ive lasso neighborhood selection. We chose the lasso tuning parameter using10-fold CV and controlled the FWER of falsely rejecting $H^\ast_{0, j}: ne^{\RN{1}}_j=ne^{\RN{2}}_j$ for any $j=1,\dots, p$ at level 0.1 using the Holm procedure. The resulting brain connectivity networks are shown in Figure~\ref{fig:diffedge}, where differentially connected and common edges are drawn in different colors. It can be seen that a number of connections differ between TBI patients and healthy controls. The DTI data used in this study provide characterize microstructural changes in brain regions and not their functions. The assumption of multivariate normality may also not be realistic in this application. Finally, the study is based on small samples. Nonetheless, the results suggest orchestrated structural changes in TBI patients that may help form new hypotheses.

\begin{figure}
\caption{Common and differentially connected edges between concussed youth athletes and matched controls. Red and blue nodes are brain regions on the left and right cerebral cortices, respectively, whereas pink and turquoise nodes are other regions in the left and right brains. Gray edges are estimated common brain connections based on lasso neighborhood selection; blue edges are connections that are found in healthy controls but not in TBI patients;  
red edges are connections that are found in TBI patients but not in healthy controls. %\textcolor{red}{posterior/superior vs superior/anterior??}
} 
\vspace{1em}
\label{fig:diffedge}
\centering
\includegraphics[angle=90, width = 9cm]{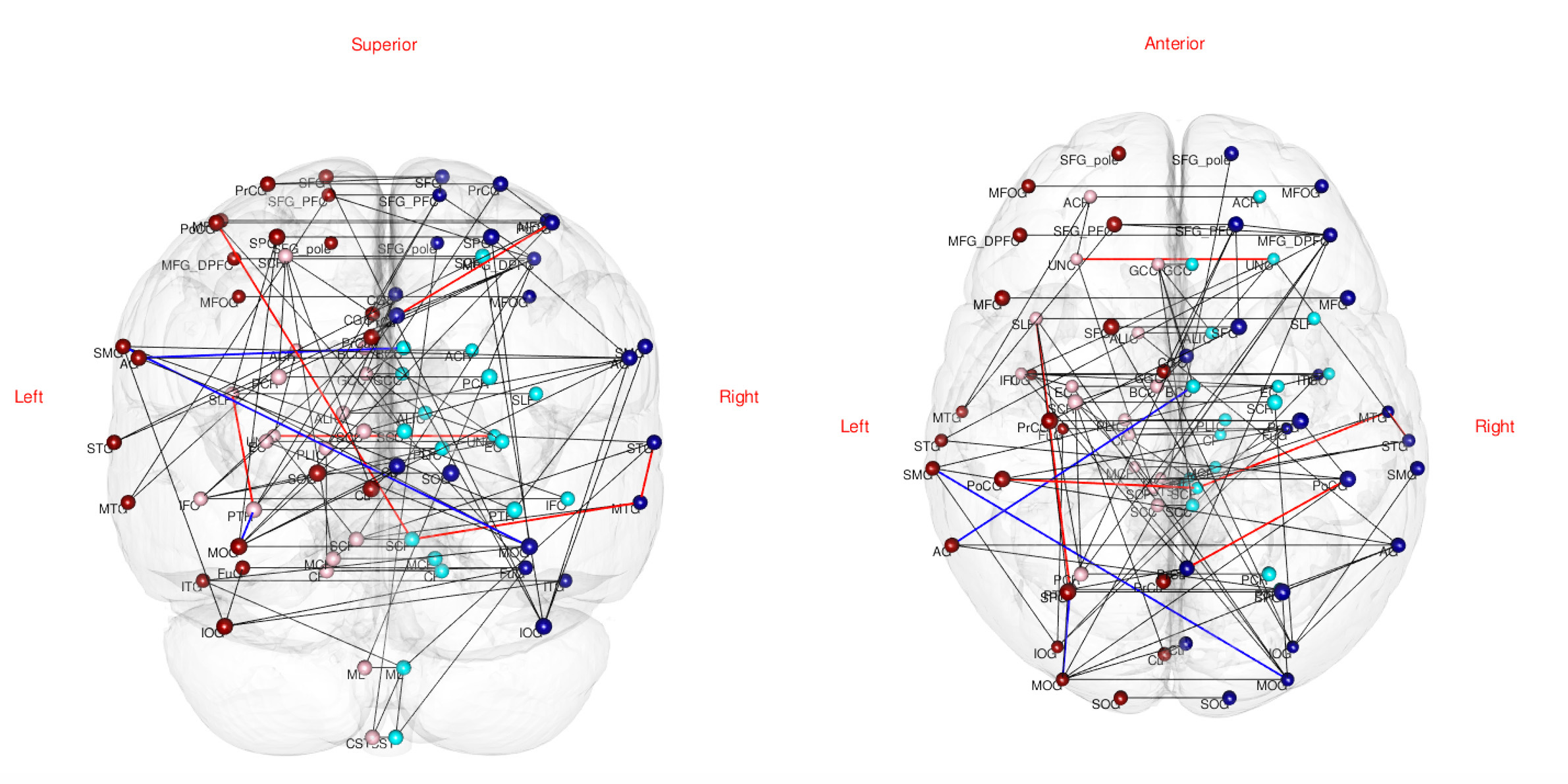}
\end{figure}

\section{Conclusion}\label{sec:disc}
In this paper, we highlighted challenges of identifying differential connectivity in high-dimensional networks using existing approaches, and proposed a new hypothesis testing framework, called \emph{differential connectivity analysis} (DCA), for identifying differences in two networks. 

DCA can incorporate various estimation and hypothesis testing methods, and can be easily extended to test for differential connectivity in multiple networks. 
Here, we considered two methods for estimation and inference: sample-splitting, which breaks down the dependence between estimation and hypothesis testing, and na\"ive inference, which utilizes the fact that the estimated support of lasso is deterministic with high probability. Besides sample splitting and na\"ive inference, another option is to build on recent advances in conditional hypothesis procedures \citep[see, e.g.,][]{lee2015exact,Tibshiranietal2016PoSI}. We leave to future work the exploration of whether conditional hypothesis testing procedures can be adapted to fit into DCA. Exploring the feasibility of incorporating non-convex estimation methods \citep[e.g.,][]{FanLi2001SCAD, Zhang2010MCP} in DCA could also be a fruitful area of research.

\section{Conditions for the Validity of Lasso for DCA} \label{sec:conditions}

The following are sufficient conditions for our propositions. 
\begin{itemize}
\item[({\bf A1})] For $m\in\{\RN{1},\RN{2}\}$, rows of the data $\bX^m$ are independent and identically distributed Gaussian random vectors: $\bX^m\sim_{i.i.d.}\mathcal{N}_p(\bzero, \bSigma^m)$, where, without loss of generality, we assume $\diag(\bSigma^m)=\bone$. Further, the minimum and maximum eigenvalues of $\bSigma^m$ satisfy
\[
\liminf_{n^m\to\infty}\phi^2_{\min}\left(\bSigma^m\right)>0\hspace{1em}\text{and}\hspace{1em}\limsup_{n^m\to\infty}\phi^2_{\max}\left(\bSigma^m\right)<\infty.
\]
\item[({\bf A2})] For $m\in\{\RN{1},\RN{2}\}$ and a given variable $j\in\cV$, the sample size $n^m$, dimension $p$, lasso neighborhood selection tuning parameters $\lambda_j^m$, number of neighbors $q^m_j\equiv|ne_j^m|$, and minimum non-zero coefficients $b_{\min}^{m,j} \equiv \min\{|\beta^{m,j}_{k}|: \beta^{m,j}_{k}\neq0\}$, where $\bbeta^{m, j}$is defined in \eqref{eq:GRep}, satisfy 
\begin{align*}
\limsup_{n^m\to\infty}\lambda_j^m q_j^m= l^m<\infty&\hspace{1em}, \nonumber \\
\lim_{n^m\to\infty}\sqrt{\frac{\log(p)}{n^m}}\frac{q_j^m}{\lambda_j^m}=0 \nonumber \\ 
\lim_{n^m\to\infty}\frac{\lambda_j^m \sqrt{q_j^m}}{b^{m,j}_{\min}}=0. 
\end{align*}
\item[({\bf A3})] For $m\in\{\RN{1},\RN{2}\}$ and a given variable $j\in\cV$, define 
the sub-gradient $\tilde\btau^{m,j}$ based on the stationary condition of \eqref{eq:nllasso}
\begin{equation}\label{eq:stationarynl}
\tilde\btau^{m,j}=\frac{1}{n\lambda_j^m}\E\left[\bX_{\backslash j}^{m\top}\left(\bx_j^m - \bX_{\backslash j}^m\tilde\bbeta^{m, j}\right)\right].
\end{equation}
We assume $\tilde\btau^{m,j}$ satisfies $\limsup_{n^m\to\infty}\left\|\tilde\btau^{m,j}_{\backslash\widetilde{ne}_{j}^m}\right\|_\infty \leq 1 - \delta^m$ such as 
\[
\lim_{n^m\to\infty}\sqrt{\frac{\log(p)}{n^m}}\frac{q_j}{\lambda_j^m\delta^m}=0, 
\]
and
\begin{align*}
\lim_{n^m\to\infty}\frac{q_j^m}{\lambda_j^m}\sqrt{\frac{\log(p)}{n^m}}{\left(\min_{k\in \widetilde{ne}_j^m\backslash ne_j^m}\left|\left[\bSigma_{(\widetilde{ne}_j^m,\widetilde{ne}_j^m)}\right]^{-1}\tilde\btau^{m, j}_{\widetilde{ne}_j^m}\right|_k\right)}^{-1} \nonumber \\
=0.
\end{align*}
\end{itemize}

Condition ({\bf A1}) characterizes the data distribution. Combining the first two requirements of ({\bf A2}), for $m\in\{\RN{1},\RN{2}\}$, we get $q_j^m=\smallO((n^m/\log(p))^{1/4})$. 
The third constraint in ({\bf A2}) is the $\bbeta$-min condition, which prevents the signal from being too weak to be detected; this condition may be relaxed to allow the presence of some weak signal variables. %{\color{red}(cite the LST paper here)}.
Note that although our goal is to test the difference in connectivity in two networks, Condition ({\bf A2}) does not require the difference in signal strength to be large under the null hypothesis.
In addition, ({\bf A2}) requires the tuning parameters $\lambda_j^m$ to approach zero at a slower rate than $q_j^m\sqrt{\log(p)/n^m}$, which is the minimum tuning parameter rate for prediction consistency of lasso with Gaussian data \citep[see, e.g.,][]{bickeletal2009}. Since $\lim_{n^m\to\infty}q_j^m\sqrt{\log(p)/n^m}/\lambda_j^m=0$ by ({\bf A2}), condition ({\bf A3}) requires that the tuning parameter $\lambda$ does not converge to any transition points too fast, where some entries of $\tilde\bbeta^{m, j}$ change from zero to nonzero, or vice versa. ({\bf A3}) also requires that $\min_{k\in \widetilde{ne}_j^m\backslash ne_j^m}\big|[\hat\bSigma_{(\widetilde{ne}_j^m,\widetilde{ne}_j^m)}]^{-1}\tilde\btau_{\widetilde{ne}_j^m}^{m,j}\big|_k$ does not converge to zero too fast.

\section{Proof of Proposition~\ref{thm:unbiased}} \label{sec:pfunbiased}
In this section, we prove that conditions ({\bf A1}) and ({\bf A2}) for variable $j\in\cV$ imply 
$$
\lim_{n^\RN{1}\to\infty}\Pr\left[\widehat{ne}_j^{\RN{1}}\supseteq ne_j^{\RN{1}}\right]=1, 
$$
where $\widehat{ne}_j^{\RN{1}}\equiv\supp(\hat\bbeta^{\RN{1},j})$, with $\hat\bbeta^{\RN{1},j}$ defined in \eqref{eq:lasso}. 
The result $\lim_{n^\RN{2}\to\infty}\Pr[\widehat{ne}_j^{\RN{2}}\supseteq ne_j^{\RN{2}}]=1$ can be proved using exactly the same procedure and is thus omitted. Since $\widehat{ne}_j^0\equiv\widehat{ne}_j^{\RN{1}}\cap\widehat{ne}_j^{\RN{2}}$ and $ne_j^0\equiv ne_j^{\RN{1}}\cap ne_j^{\RN{2}}$, the above two results imply
$$
\lim_{n^\RN{1},n^\RN{2}\to\infty}\Pr\left[\widehat{ne}_j^0\supseteq ne_j^0\right]=1, 
$$
Because our proof only concerns population $\RN{1}$, for simplicity, we omit the superscript ``$^{\RN{1}}$'' from the subsequent proofs. 

We first prove Lemma~\ref{lem:resev}, which shows that under ({\bf A1}) and ({\bf A2}), the $(ne_j, q_j, 3)$-restricted eigenvalue condition \citep{bickeletal2009, vandeGeerBuhlmann2009} is satisfied for each $j\in\cV$.

\begin{lemma}\label{lem:resev}
Suppose ({\bf A1}) and ({\bf A2}) hold for variable $j\in\cV$. Suppose $q_j\equiv|ne_j|=|\supp(\bbeta^j)|\geq1$. For all $\bb\in\mathbb{R}^{p-1}$ and any index set $\cI$, such that $|\cI|\leq q_j$, $\|\bb_{\backslash\cI}\|_1\leq 3\|\bb_{\cI}\|_1$ and $\|\bb_{\backslash\mathcal{S}}\|_\infty\leq\min_{j\in(\mathcal{S}\backslash \cI)}|b_j|$, where $\mathcal{S}$ is any index set such that $\mathcal{S}\supseteq \cI$ and $|\mathcal{S}|\leq2q_j$, we have
\begin{align}\label{eq:resevfixed}
\lim_{n\to\infty}\Pr\left[\left\|\bb_{\mathcal{S}}\right\|_2^2\leq\bb^\top\hat\bSigma_{(\backslash j, \backslash j)}\bb\frac{1}{\phi^2}\right]=1,
\end{align}
where $\hat\bSigma_{(\backslash j, \backslash j)}\equiv\bX_{\backslash j}^\top\bX_{\backslash j}/n$ and $\liminf_{n\to\infty}\phi^2=\kappa^2>0$.
\end{lemma}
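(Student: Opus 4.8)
The plan is to recognize the claim as the $(ne_j, q_j, 3)$-restricted eigenvalue condition for the sample Gram matrix $\hat\bSigma_{(\backslash j,\backslash j)}$, and to establish it by transferring the corresponding condition from the \emph{population} matrix $\bSigma_{(\backslash j,\backslash j)}$, for which it holds trivially. First I would observe that, by Cauchy eigenvalue interlacing, the minimum eigenvalue of the principal submatrix $\bSigma_{(\backslash j,\backslash j)}$ is at least $\phi^2_{\min}(\bSigma)$, which is bounded away from zero by ({\bf A1}). Hence the population quadratic form satisfies $\bb^\top\bSigma_{(\backslash j,\backslash j)}\bb \ge \phi^2_{\min}(\bSigma)\,\|\bb\|_2^2$ for \emph{every} $\bb$, and in particular $\bb^\top\bSigma_{(\backslash j,\backslash j)}\bb \ge \phi^2_{\min}(\bSigma)\,\|\bb_\mathcal{S}\|_2^2$. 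This is a global, unrestricted lower bound, so no cone restriction is needed at the population level.

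Next I would invoke a uniform sample-to-population deviation bound for Gaussian designs \citep{bickeletal2009, vandeGeerBuhlmann2009}: there are absolute constants $c_1, c_1', c_2, c_3 > 0$ such that, with probability at least $1 - c_2 e^{-c_3 n}$,
\[
\bb^\top \hat\bSigma_{(\backslash j,\backslash j)}\bb \;\ge\; c_1\, \bb^\top\bSigma_{(\backslash j,\backslash j)}\bb \;-\; c_1'\frac{\log p}{n}\,\|\bb\|_1^2 \qquad \text{for all } \bb\in\mathbb{R}^{p-1}.
\]
Because the correction term involves $\|\bb\|_1^2$, this inequality holds simultaneously over all vectors on a single high-probability event, which spares us a covering argument restricted to the cone and delivers the conclusion for every admissible triple $(\bb,\cI,\mathcal{S})$ at once.

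I would then use the cone condition $\|\bb_{\backslash\cI}\|_1 \le 3\|\bb_\cI\|_1$ together with $|\cI|\le q_j$ to control the $\ell_1$ term: $\|\bb\|_1 \le 4\|\bb_\cI\|_1 \le 4\sqrt{q_j}\,\|\bb_\cI\|_2 \le 4\sqrt{q_j}\,\|\bb\|_2$, whence $\tfrac{\log p}{n}\|\bb\|_1^2 \le 16\, q_j\tfrac{\log p}{n}\|\bb\|_2^2$. Combining the first two requirements of ({\bf A2}) gives $q_j\log p/n \to 0$, so this term is $o(\|\bb\|_2^2)$. Substituting the population bound from the first step yields
\[
\bb^\top \hat\bSigma_{(\backslash j,\backslash j)}\bb \;\ge\; \Big(c_1\phi^2_{\min}(\bSigma) - 16 c_1' q_j\tfrac{\log p}{n}\Big)\|\bb\|_2^2 \;\ge\; \phi^2\,\|\bb_\mathcal{S}\|_2^2,
\]
using $\|\bb_\mathcal{S}\|_2 \le \|\bb\|_2$, where $\phi^2 \equiv c_1\phi^2_{\min}(\bSigma) - 16 c_1' q_j\log p/n$ satisfies $\liminf_{n\to\infty}\phi^2 = c_1\liminf_n \phi^2_{\min}(\bSigma) \equiv \kappa^2 > 0$. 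Rearranging gives $\|\bb_\mathcal{S}\|_2^2 \le \bb^\top\hat\bSigma_{(\backslash j,\backslash j)}\bb/\phi^2$; since the deviation event has probability tending to one, the stated limit follows. Note that the auxiliary hypothesis $\|\bb_{\backslash\mathcal{S}}\|_\infty \le \min_{k\in\mathcal{S}\backslash\cI}|b_k|$ is not needed for this bound itself, as I discard it by passing to $\|\bb_\mathcal{S}\|_2 \le \|\bb\|_2$; it only pins down the choice of the enlarged set $\mathcal{S}$ for later use.

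The hard part will be the uniform deviation inequality in the second step: for Gaussian designs this is the technical crux, resting on a Gaussian comparison (Gordon/Slepian) argument and concentration of Lipschitz functions of Gaussian vectors. I would either cite the restricted-eigenvalue machinery of \citep{bickeletal2009, vandeGeerBuhlmann2009} directly, or reproduce the peeling argument that upgrades a fixed-vector concentration bound to the uniform-in-$\bb$ statement above. The remaining steps—interlacing for the population bound, the cone inequality, and verifying that the rate conditions in ({\bf A2}) kill the $\ell_1$ correction—are deterministic bookkeeping.
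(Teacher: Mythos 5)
Your proof is correct, but it takes a genuinely different route from the paper's. The paper proves the lemma by invoking Theorem 1.6 of \citet{Zhou2009resev} as a black box: that theorem transfers a population restricted eigenvalue condition to the sample Gram matrix for Gaussian designs, and the paper's work consists only of verifying its three hypotheses --- the population minimum-eigenvalue bound (via interlacing and ({\bf A1}), exactly your first step), a \emph{sparse maximum-eigenvalue} bound (again interlacing plus ({\bf A1})), and the rate conditions $\log(p)/n\to0$ and $q_j\log(p/q_j)/n\to0$ (from ({\bf A2})); the peculiar $\ell_\infty$ hypothesis on $\mathcal{S}$ in the lemma statement is inherited verbatim from Zhou's formulation. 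You instead run the sample-to-population transfer by hand through a uniform deviation inequality of Raskutti--Wainwright--Yu type, $\bb^\top\hat\bSigma_{(\backslash j,\backslash j)}\bb\ge c_1\,\bb^\top\bSigma_{(\backslash j,\backslash j)}\bb-c_1'(\log p/n)\|\bb\|_1^2$ simultaneously for all $\bb$, followed by the deterministic cone reduction $\|\bb\|_1\le4\sqrt{q_j}\|\bb\|_2$ and the rate $q_j\log p/n\to0$. This buys three things: you never need the maximum-eigenvalue part of ({\bf A1}) (only the unit-diagonal normalization enters the constants of the deviation bound); you lower-bound the full $\|\bb\|_2^2$ uniformly over the cone, so you correctly identify the auxiliary condition on $\mathcal{S}$ as superfluous and in fact prove a slightly stronger statement; and your high-probability event does not depend on $(\bb,\cI,\mathcal{S})$, which makes explicit the uniformity that the downstream application (Theorem 7.2 of \citealt{bickeletal2009}) actually requires. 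Two caveats, neither fatal. First, the key inequality is misattributed: it appears in neither \citet{bickeletal2009} nor \citet{vandeGeerBuhlmann2009}; it is Theorem 1 of Raskutti, Wainwright and Yu (2010, JMLR) for Gaussian designs with $\diag(\bSigma)=\bone$. Since you offer to reprove it via Gordon--Slepian comparison plus Gaussian concentration (which is indeed how it is proved), this is a citation fix rather than a gap, but as written the cited sources would not support the step. Second, the published bound is stated for $\|\bX_{\backslash j}\bb\|_2/\sqrt{n}$ rather than its square, so squaring and absorbing the cross term costs a routine line and worsens constants; together with the factor $c_1$ in your $\phi^2=c_1\phi^2_{\min}(\bSigma)-16c_1'q_j\log p/n$, this is harmless because the lemma and all its downstream uses require only $\liminf_{n\to\infty}\phi^2>0$, not any particular constant.
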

\begin{proof}
Theorem 1.6 in \citet{Zhou2009resev} shows that with Gaussian data, \eqref{eq:resevfixed} holds if:
\begin{itemize}
	\item[({\bf C1})] $\|\bb_{\mathcal{S}}\|_2^2\leq\bb^\top\bSigma_{(\backslash j, \backslash j)}\bb/\phi^2$ with $\liminf_{n\to\infty}\phi^2>0$;
	\item[({\bf C2})] For any $\bv\in\mathbb{R}^{p-1}$ such that $\|\bv\|_2=1$ and $|\supp(\bv)|\leq q_j$, we have $\bv^\top\bSigma_{(\backslash j, \backslash j)}\bv=\mathcal{O}(1)$;
	\item[({\bf C3})] $\log(p)/n\to0$ and $q_j\log(p/q_j)/n\to0$.
\end{itemize}

We now proceed to show that these three requirements hold. 
\begin{itemize}
\item[({\bf C1})] For any $\bb\in\mathbb{R}^{p-1}$, we have $\bb^\top\bSigma_{(\backslash j, \backslash j)}\bb/\|\bb\|_2^2\geq\phi^2_{\min}[\bSigma_{(\backslash j, \backslash j)}]\geq \phi^2_{\min}[\bSigma]>0$. The second to last inequality is based on the interlacing property of eigenvalues of principal sub-matrices \citep[see, e.g., Theorem 2.1 in][]{Haemers1995}, while the last equality is guaranteed by ({\bf A1}). Thus, for any $\mathcal{S}\subseteq\cV$, we have 
\[
\|\bb_{\mathcal{S}}\|_2^2\leq\|\bb\|_2^2\leq\frac{1}{\phi^2_{\min}\left[\bSigma_{(\backslash j, \backslash j)}\right]}\bb^\top\bSigma_{(\backslash j, \backslash j)}\bb, 
\]
with $\liminf_{n\to\infty}\phi^2_{\min}[\bSigma_{(\backslash j, \backslash j)}]>0$. Thus, ({\bf C1}) is satisfied with $\phi = \phi_{\min}[\bSigma_{(\backslash j, \backslash j)}]$. 
\item[({\bf C2})] For any $\bv\in\mathbb{R}^{p-1}$ such that $\|\bv\|_2=1$ and $|\supp(\bv)|\leq q_j$, we have $\bv^\top\bSigma_{(\backslash j, \backslash j)}\bv\leq\phi^2_{\max}[\bSigma_{(\backslash j, \backslash j)}]\leq\phi^2_{\max}[\bSigma]< \infty$, where the second to last inequality is based on the interlacing property of eigenvalues of principal sub-matrices, and the last equality is guaranteed by ({\bf A1}). 
\item[({\bf C3})] First, combining conditions in ({\bf A2}), we get $q_j\sqrt{q_j\log(p)/n}/b_{\min}^j\to0$, which implies that $q_j\log(p)/n\to0$ and hence $\log(p)/n\to0$. In addition, $q_j\log(p)/n\to0$ implies that $q_j\log(p/q_j)/n\to0$.
\end{itemize}

\end{proof}

%%%%%%%%%%%%%%%%%%%%%%%%%%%%%%%%%%%%%%%%%%%%%%%%%%%%
%%%%%%%%%%%%%%%%%%%%%%%%%%%%%%%%%%%%%%%%%%%%%%%%%%%%

We now proceed to prove Proposition~\ref{thm:unbiased} for population $\RN{1}$.

\begin{proof}[Proof of Proposition~\ref{thm:unbiased}] First, if $q_j\equiv|ne_j|=0$, then we trivially have $\widehat{ne}_j\supseteq ne_j$.

If $q_j\equiv|ne_j|\geq1$, we write $\bx_j=\bX_{\backslash j}\bbeta^j+\bepsilon^j$. With Gaussian $\bX$ as required in ({\bf A1}), $\bepsilon^j$ follows a Gaussian distribution.

Theorem 7.2 in \citet{bickeletal2009} shows that with Gaussian design, $(ne_j, q_j, 3)$-restricted eigenvalue condition proved in Lemma~\ref{lem:resev} and $\lambda_j\succsim\sqrt{\log(p)/n}$,
\begin{align}\label{eq:l2bound}
\lim_{n\to\infty}\Pr\left[\left\|\hat\bbeta^j-\bbeta^j\right\|_2\leq\frac{\lambda_j\sqrt{8q_j}}{\phi^2}\right]=1.
\end{align}

In addition, given that $\liminf_{n\to\infty}\phi^2>0$, which is guaranteed by Lemma~\ref{lem:resev}, for $n$ sufficiently large, ({\bf A2}) implies that $b^j_{\min}>3\lambda_j \sqrt{q_j} /\phi^2$. Thus, for any $k$ such that $|\beta^j_k|>0$, %we have $|\beta^j_k|>3\lambda_j \sqrt{q_j}/\phi^2$. In 
in the event that 
\begin{align}\label{eq:reqn}
\left\|\hat\bbeta^j-\bbeta^j\right\|_\infty\leq\left\|\hat\bbeta^j-\bbeta^j\right\|_2\leq\frac{\lambda_j\sqrt{8q_j}}{\phi^2},
\end{align}
$|\beta^j_k|>0$ implies $|\hat\beta^j_k|>0$. Therefore, by \eqref{eq:l2bound}, 
\[
\lim_{n\to\infty}\Pr\left[\widehat{ne}_j\supseteq ne_j\right]=1.
\]

\end{proof}

%%%%%%%%%%%%%%%%%%%%%%%%%%%%%%%%%%%%%%%%%%%%%%%%%%%%
%%%%%%%%%%%%%%%%%%%%%%%%%%%%%%%%%%%%%%%%%%%%%%%%%%%%

\section{Proof of Proposition~\ref{thm:consistent}} \label{sec:pfconsistent}

Similar to Section~\ref{sec:pfunbiased}, in this section, we prove that ({\bf A1})--({\bf A3}) for some variable $j\in\cV$ imply
\[
\lim_{n^\RN{1}\to\infty}\Pr\left[\widehat{ne}_j^{\RN{1}}= \widetilde{ne}_j^{\RN{1}}\right]= 1.
\]
The counterpart for population $\RN{2}$ can be proved using the same technique. Together these imply 
\[
\lim_{n^\RN{1},n^\RN{2}\to\infty}\Pr\left[\widehat{ne}_j^0= \widetilde{ne}_j^0\right]= 1.
\]
For brevity, we drop the superscript ``$^\RN{1}$'' in the subsequent proofs. We first state and prove lemmas needed for the proof of Proposition~\ref{thm:consistent}. 

\setcounter{theorem}{1}
%\newtheorem{a1}{Theorem}[section]
%%\renewcommand{\thelemma}{\Alph{section}\arabic{lemma}}
%\begin{a1}
%Let $\{ X_{n,j,i}:\, i = 1, \dots, n;  j =1,\dots,d \}$ be a set of random variables such that $X_{n,j,i}, i=1,\dots, n$ are independent.  That is, $X_{n,j,i}, i = 1, \dots, n$ denotes independent observations of feature $j$, and there are $d$ features. Assume $\E(X_{n,j,i}) = 0$, and there exist constants $g > 1 $ and $h > 0 $ such that $\Pr(|X_{n,j,i}| \geq x) \leq g e^{-h x^2}$ for all $x > 0$.  Denote $Z_{n,j} = \sum_{i=1}^n X_{n,j,i}$.  Then
%$$\frac{\max_{j =1,\dots,d}|Z_{n,j}|}{n} = \mathcal{O}_p\left[\left\{\frac{\log(p)}{n}\right\}^{1/2} \right].$$
%\end{a1}

\begin{lemma}\label{lem:unbiased}
Suppose ({\bf A1}) and ({\bf A2}) for variable $j\in\cV$ hold. Then if $b^j_{\min}> 3\lambda_j \sqrt{ q_j} /\phi^2$, we have $\widetilde{ne}_j\supseteq ne_j$, where $\widetilde{ne}_j\equiv\supp(\tilde\bbeta^j)$ and $ne_j\equiv\supp(\bbeta^j)$. 
\end{lemma}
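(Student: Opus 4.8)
The plan is to exploit the fact that, unlike Proposition~\ref{thm:unbiased}, this is a purely \emph{deterministic} statement about the noiseless estimator $\tilde\bbeta^j$, so no probabilistic control is needed; the whole argument reduces to analyzing a single population-level quadratic program. The first step is to evaluate the expectation in \eqref{eq:nllasso} explicitly. Since the rows of $\bX$ are i.i.d.\ under ({\bf A1}), $\E[\tfrac{1}{2n}\|\bx_j - \bX_{\backslash j}\bb\|_2^2] = \tfrac12\E[(x_j - \bx_{\backslash j}^\top\bb)^2]$ for a generic row. Writing the population regression $x_j = \bx_{\backslash j}^\top\bbeta^j + \epsilon^j$ with $\E[\epsilon^j\bx_{\backslash j}]=\bzero$, this equals $\tfrac12(\bb-\bbeta^j)^\top\bSigma_{(\backslash j,\backslash j)}(\bb-\bbeta^j)$ plus the constant $\tfrac12\Var(\epsilon^j)$. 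Hence $\tilde\bbeta^j$ minimizes $\tfrac12(\bb-\bbeta^j)^\top\bSigma_{(\backslash j,\backslash j)}(\bb-\bbeta^j) + \lambda_j\|\bb\|_1$, a deterministic lasso problem with population Gram matrix $\bSigma_{(\backslash j,\backslash j)}$ and zero noise.

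Next I would run the standard ``basic inequality'' argument on this deterministic objective. Setting $\bd \equiv \tilde\bbeta^j - \bbeta^j$ and $S \equiv ne_j = \supp(\bbeta^j)$, comparing the value at $\tilde\bbeta^j$ with the value at $\bbeta^j$ gives $\tfrac12\bd^\top\bSigma_{(\backslash j,\backslash j)}\bd \le \lambda_j(\|\bbeta^j\|_1 - \|\tilde\bbeta^j\|_1)$; since $\bbeta^j_{\backslash S}=\bzero$, the triangle inequality bounds the right-hand side by $\lambda_j\|\bd_S\|_1 \le \lambda_j\sqrt{q_j}\|\bd_S\|_2$ via Cauchy--Schwarz.

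The final step is to convert this into an $\ell_\infty$ bound on $\bd$ and compare it with $b^j_{\min}$. Because $\bSigma_{(\backslash j,\backslash j)}$ is a principal submatrix of $\bSigma$, the interlacing argument already used in ({\bf C1}) of Lemma~\ref{lem:resev} gives $\bd^\top\bSigma_{(\backslash j,\backslash j)}\bd \ge \phi^2\|\bd\|_2^2$ with $\phi^2 = \phi^2_{\min}[\bSigma_{(\backslash j,\backslash j)}] > 0$ under ({\bf A1}). Combining this with the previous display yields $\|\bd\|_\infty \le \|\bd\|_2 \le 2\lambda_j\sqrt{q_j}/\phi^2 < 3\lambda_j\sqrt{q_j}/\phi^2 < b^j_{\min}$, the last inequality being the hypothesis. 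Consequently, for every $k\in ne_j$ we have $|\tilde\beta^j_k - \beta^j_k| < b^j_{\min} \le |\beta^j_k|$, which forces $\tilde\beta^j_k\neq0$ and hence $k\in\widetilde{ne}_j$; this establishes $\widetilde{ne}_j\supseteq ne_j$. I expect the only delicate point to be the first step---carrying out the expectation correctly and recognizing that the noiseless objective is an \emph{exact} quadratic program with no stochastic term, so that the otherwise routine lasso $\ell_2$ bound holds deterministically rather than merely with high probability.
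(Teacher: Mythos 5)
Your proof is correct, and it reaches the paper's conclusion by a more self-contained route. Both arguments share the same skeleton: establish $\|\tilde\bbeta^j-\bbeta^j\|_\infty\leq\|\tilde\bbeta^j-\bbeta^j\|_2\leq c\,\lambda_j\sqrt{q_j}/\phi^2$ for some constant $c<3$, then use the hypothesis $b^j_{\min}>3\lambda_j\sqrt{q_j}/\phi^2$ to force $|\tilde\beta^j_k|>0$ for every $k\in ne_j$. The difference lies in how the $\ell_2$ bound is obtained. The paper gets it with $c=2\sqrt{2}$ by citing Corollary 2.1 of \citet{vandeGeerBuhlmann2009}, applied under condition ({\bf C1}) from the proof of Lemma~\ref{lem:resev}. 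You instead derive it from first principles: you identify the noiseless objective \eqref{eq:nllasso} as exactly $\tfrac12(\bb-\bbeta^j)^\top\bSigma_{(\backslash j,\backslash j)}(\bb-\bbeta^j)+\lambda_j\|\bb\|_1$ up to an additive constant (valid because $\E[\epsilon^j\bx_{\backslash j}]=\bzero$ for the population regression coefficient $\bbeta^j$), run the basic inequality against the competitor $\bb=\bbeta^j$, bound $\|\bbeta^j\|_1-\|\tilde\bbeta^j\|_1\leq\|\bd_S\|_1\leq\sqrt{q_j}\|\bd\|_2$ with $\bd\equiv\tilde\bbeta^j-\bbeta^j$ and $S\equiv ne_j$, and lower-bound the quadratic form by the \emph{unrestricted} minimum eigenvalue of $\bSigma_{(\backslash j,\backslash j)}$ (interlacing plus ({\bf A1})), which yields the slightly sharper $c=2$. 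Your route buys self-containedness and makes explicit an important structural point: in the noiseless problem no compatibility or cone-constraint machinery is needed at all, since the population Gram matrix is well conditioned on all of $\mathbb{R}^{p-1}$; the restricted-eigenvalue apparatus is genuinely needed only for the noisy estimator in Proposition~\ref{thm:unbiased}. The paper's route buys brevity by reusing ({\bf C1}) and an off-the-shelf result. The only (immaterial) omissions on your side: the vacuous case $ne_j=\varnothing$, which the paper treats separately and which your bound covers anyway (when $q_j=0$ it forces $\tilde\bbeta^j=\bbeta^j$), and the strictness of $2\lambda_j\sqrt{q_j}/\phi^2<3\lambda_j\sqrt{q_j}/\phi^2$, which implicitly assumes $\lambda_j\sqrt{q_j}>0$ and again only matters in that vacuous case.
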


\begin{proof}
First, if $|ne_j|=0$, we trivially have $\widetilde{ne}_j\supseteq ne_j$.

If $|ne_j|\geq1$, ({\bf A2}) implies that given $\liminf_{n\to\infty}\phi^2>0$, for $n$ sufficiently large, $b^j_{\min}>3\lambda_j\sqrt{q_j} /\phi^2$. 

On the other hand, by Corollary 2.1 in \citet{vandeGeerBuhlmann2009}, ({\bf C1}) in the proof of Lemma~\ref{lem:resev} guarantees that 
\begin{align}\label{comp}
\left\|\tilde\bbeta^j-\bbeta^j\right\|_\infty\leq\left\|\tilde\bbeta^j-\bbeta^j\right\|_2\leq\frac{\lambda_j\sqrt{8q_j}}{\phi^2}. 
\end{align}
Therefore, similar to the proof of Proposition \ref{thm:unbiased}, if $b^j_{\min}> 3\lambda_j \sqrt{q_j} /\phi^2$, for any $k$ such that  $|\beta^j_k|>0$, we have $|\tilde\beta^j_k|>0$ by \eqref{comp}, which implies that $\widetilde{ne}_j\supseteq ne_j$. 
\end{proof}

%%%%%%%%%%%%%%%%%%%%%%%%%%%%%%%%%%%%%%%%%%%%%%%%%%%%%%%%%%%%%%%%%%%
\begin{lemma}\label{mainlemmaa1}
Suppose ({\bf A1})--({\bf A3}) hold. Then the estimator $\hat\bbeta^j$ defined in \eqref{eq:lasso} satisfies $\|\hat\bbeta^j - \tilde\bbeta^j\|_1 = \mathcal{O}_p\left(q_j\sqrt{\log (p)/n }\right)$.
\end{lemma}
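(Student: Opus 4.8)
The plan is to view $\tilde\bbeta^j$ as a fixed target and to bound the perturbation $\bd\equiv\hat\bbeta^j-\tilde\bbeta^j$ caused by replacing the population loss in \eqref{eq:nllasso} by the empirical loss in \eqref{eq:lasso}, using the standard basic-inequality argument with the restricted eigenvalue bound of Lemma~\ref{lem:resev} supplying the curvature. As in the rest of the section I drop the superscript ``$^{\RN{1}}$'' and abbreviate $\hat\bSigma\equiv\bX_{\backslash j}^\top\bX_{\backslash j}/n$, $\bSigma\equiv\bSigma_{(\backslash j,\backslash j)}$, and $L_n(\bb)\equiv\|\bx_j-\bX_{\backslash j}\bb\|_2^2/(2n)$, so that $L_n$ is an exact quadratic with Hessian $\hat\bSigma$. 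First I would record the basic inequality: optimality of $\hat\bbeta^j$ for \eqref{eq:lasso} together with a Taylor expansion of the quadratic $L_n$ about $\tilde\bbeta^j$ gives $\tfrac12\bd^\top\hat\bSigma\bd\le-\langle\nabla L_n(\tilde\bbeta^j),\bd\rangle+\lambda_j(\|\tilde\bbeta^j\|_1-\|\hat\bbeta^j\|_1)$.

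Next I would isolate the stochastic part of the gradient. Writing $\bx_j=\bX_{\backslash j}\bbeta^j+\bepsilon^j$ and recalling the stationarity identity \eqref{eq:stationarynl} for $\tilde\btau$, we have $-\nabla L_n(\tilde\bbeta^j)=n^{-1}\bX_{\backslash j}^\top(\bx_j-\bX_{\backslash j}\tilde\bbeta^j)=\lambda_j\tilde\btau+\Delta$, where the mean-zero fluctuation is $\Delta=(\hat\bSigma-\bSigma)(\bbeta^j-\tilde\bbeta^j)+n^{-1}\bX_{\backslash j}^\top\bepsilon^j$. The key estimate is $\|\Delta\|_\infty=\mathcal{O}_p(\sqrt{\log(p)/n})$. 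For $n^{-1}\bX_{\backslash j}^\top\bepsilon^j$ this is a routine Gaussian maximal inequality, since by ({\bf A1}) the population residual $\bepsilon^j$ is Gaussian and independent of $\bX_{\backslash j}$. For $(\hat\bSigma-\bSigma)(\bbeta^j-\tilde\bbeta^j)$ I would use $\|(\hat\bSigma-\bSigma)\bu\|_\infty\le\|\hat\bSigma-\bSigma\|_{\max}\|\bu\|_1$, the entrywise concentration $\|\hat\bSigma-\bSigma\|_{\max}=\mathcal{O}_p(\sqrt{\log(p)/n})$, and the deterministic noiseless $\ell_1$ oracle bound $\|\bbeta^j-\tilde\bbeta^j\|_1=\mathcal{O}(\lambda_j q_j)$ (the $\ell_1$ companion of \eqref{comp}, valid under the compatibility constant furnished by ({\bf A1})); since $\limsup\lambda_j q_j<\infty$ by ({\bf A2}), this term is again $\mathcal{O}_p(\sqrt{\log(p)/n})$.

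I would then turn the bound into a cone condition and apply the restricted eigenvalue estimate. Substituting $-\nabla L_n(\tilde\bbeta^j)=\lambda_j\tilde\btau+\Delta$ into the basic inequality and using that $\tilde\btau$ is a subgradient of $\|\cdot\|_1$ at $\tilde\bbeta^j$ — equal to $\sign(\tilde\beta^j_k)$ on $\widetilde{ne}_j$ and, by ({\bf A3}), bounded by $1-\delta$ in absolute value off $\widetilde{ne}_j$ — the nonnegative Bregman remainder of the $\ell_1$ norm is at least $\delta\|\bd_{\backslash\widetilde{ne}_j}\|_1$. This yields $\tfrac12\bd^\top\hat\bSigma\bd+\lambda_j\delta\|\bd_{\backslash\widetilde{ne}_j}\|_1\le\|\Delta\|_\infty\|\bd\|_1$. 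Because ({\bf A3}) forces $\|\Delta\|_\infty/(\lambda_j\delta)\to0$, on the high-probability event $\|\Delta\|_\infty\le\tfrac12\lambda_j\delta$ the off-support mass is controlled, $\|\bd_{\backslash\widetilde{ne}_j}\|_1\le3\|\bd_{\widetilde{ne}_j}\|_1$. Invoking Lemma~\ref{lem:resev} to get $\bd^\top\hat\bSigma\bd\ge\kappa^2\|\bd_{\widetilde{ne}_j}\|_2^2$ and feeding it into $\tfrac12\bd^\top\hat\bSigma\bd\le\|\Delta\|_\infty\|\bd\|_1\le4\|\Delta\|_\infty\sqrt{|\widetilde{ne}_j|}\,\|\bd_{\widetilde{ne}_j}\|_2$, I would solve for $\|\bd_{\widetilde{ne}_j}\|_2\lesssim\sqrt{|\widetilde{ne}_j|}\,\|\Delta\|_\infty/\kappa^2$ and conclude $\|\bd\|_1\le4\sqrt{|\widetilde{ne}_j|}\,\|\bd_{\widetilde{ne}_j}\|_2=\mathcal{O}_p(q_j\sqrt{\log(p)/n})$.

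The main obstacle is the application of Lemma~\ref{lem:resev}, whose nonstandard hypotheses are tailored to distinguish the true support $ne_j$ (size $q_j$) from the possibly larger noiseless support $\widetilde{ne}_j$. To invoke it with $\cI=ne_j$ and $\cS=\widetilde{ne}_j$ I must separately verify that $|\widetilde{ne}_j|\le2q_j$ and that $\bd$ satisfies the auxiliary domination $\|\bd_{\backslash\widetilde{ne}_j}\|_\infty\le\min_{k\in\widetilde{ne}_j\backslash ne_j}|d_k|$; this is precisely where the last display of ({\bf A3}), which lower-bounds the magnitude $|\tilde\beta^j_k|=\lambda_j\big|[\bSigma_{(\widetilde{ne}_j,\widetilde{ne}_j)}]^{-1}\tilde\btau_{\widetilde{ne}_j}\big|_k$ of the noiseless solution on the extra support $\widetilde{ne}_j\backslash ne_j$, enters, so that $\bd$ cannot concentrate there and the final rate reads $q_j\sqrt{\log(p)/n}$ rather than $|\widetilde{ne}_j|\sqrt{\log(p)/n}$. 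The remaining ingredient, the uniform gradient bound $\|\Delta\|_\infty=\mathcal{O}_p(\sqrt{\log(p)/n})$ and in particular the control of the interaction term $(\hat\bSigma-\bSigma)(\bbeta^j-\tilde\bbeta^j)$ through the noiseless $\ell_1$ bound and ({\bf A2}), is the other technically delicate piece, though it follows from standard high-dimensional concentration once that $\ell_1$ bound is in hand.
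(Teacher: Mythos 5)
Your gradient decomposition and the bound $\|\Delta\|_\infty=\mathcal{O}_p\big(\sqrt{\log(p)/n}\big)$ are exactly the ingredients the paper uses (its display \eqref{eq:bound1}), and the basic inequality together with the ({\bf A3}) slack $\delta$ does yield a valid cone condition. But the cone you obtain is relative to the \emph{noiseless support} $\widetilde{ne}_j$, i.e.\ $\|\bd_{\backslash\widetilde{ne}_j}\|_1\le 3\|\bd_{\widetilde{ne}_j}\|_1$, whereas Lemma~\ref{lem:resev} requires the cone relative to an index set $\cI$ with $|\cI|\le q_j$. Taking $\cI=ne_j$ as you propose, the needed hypothesis $\|\bd_{\backslash ne_j}\|_1\le3\|\bd_{ne_j}\|_1$ does \emph{not} follow from your cone: the inclusion $\widetilde{ne}_j\supseteq ne_j$ runs the wrong way (it makes $\|\bd_{\backslash ne_j}\|_1$ larger and $\|\bd_{ne_j}\|_1$ smaller; a $\bd$ supported entirely on $\widetilde{ne}_j\backslash ne_j$ satisfies your cone and violates the required one). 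The two auxiliary conditions you flag are also not repairable as stated: nothing in ({\bf A1})--({\bf A3}) bounds $|\widetilde{ne}_j|$ by $2q_j$ (the assumptions deliberately allow $\widetilde{ne}_j\backslash ne_j\neq\emptyset$ with only a magnitude lower bound, no cardinality control), and the domination $\|\bd_{\backslash\widetilde{ne}_j}\|_\infty\le\min_{k\in\widetilde{ne}_j\backslash ne_j}|d_k|$ concerns the \emph{error} vector $\bd=\hat\bbeta^j-\tilde\bbeta^j$, about which the last display of ({\bf A3}) says nothing --- it lower-bounds entries of $\tilde\bbeta^j$, not of $\bd$. Finally, even granting the RE step, your Cauchy--Schwarz chain gives $\|\bd\|_1\lesssim|\widetilde{ne}_j|\,\|\Delta\|_\infty$, so the advertised rate $q_j\sqrt{\log(p)/n}$ would still require $|\widetilde{ne}_j|=\mathcal{O}(q_j)$, which is never established.

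The paper's proof is built to dodge precisely this obstruction, and it uses a different architecture: it fixes the radius $mq_j\sqrt{\log(p)/n}$ in advance and shows the objective rises on the $\ell_1$-sphere of that radius around $\tilde\bbeta^j$, so convexity confines $\hat\bbeta^j$ to the ball. On that sphere it argues by cases: when $\bw$ puts a constant fraction of its mass off $\widetilde{ne}_j$, the penalty slack $\lambda_j\delta\|\bw_{\backslash\widetilde{ne}_j}\|_1$ alone beats the $\mathcal{O}_p\big(\sqrt{\log(p)/n}\big)$ linear term, so no restricted eigenvalue bound is needed there; when instead $\|\bw_{\backslash ne_j}\|_1\le3\|\bw_{ne_j}\|_1$ --- a cone relative to the \emph{true} support --- it applies Lemma~\ref{lem:resev} with $\cI=ne_j$ and $\cS=ne_j\cup\{l\}$, $l=\argmax_{k\notin ne_j}|w_k|$, for which $|\cS|=q_j+1\le2q_j$ and the domination condition hold automatically by construction. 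If you want to salvage a basic-inequality proof, you would need an analogous dichotomy (using the penalty slack whenever the cone relative to $ne_j$ fails), not an application of the RE lemma with $\cS=\widetilde{ne}_j$.
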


\begin{proof}
Let $Q(\bb) \equiv \|\bx_j-\bX_{\backslash j}\bb\|_2^2/(2n) + \lambda_j\|\bb\|_1$, i.e., $\hat \bbeta^j = \argmin_{\bb \in \mathbb{R}^{p-1}} Q(\bb)$. To prove Lemma~\ref{mainlemmaa1}, we show that for all $\xi > 0$, there exists a constant $m>0$, such that
\begin{equation}\label{1:p0}
\lim_{n\to\infty}\Pr \left[ \inf_{\bb : \left\|\bb\right\|_1 = m}Q\left(\tilde\bbeta^j +\bb q_j\sqrt{\frac{\log(p)}{n}}\,\right) > Q\left(\tilde\bbeta^j\right)\right] =1.
\end{equation}
Because $Q$ is convex, \eqref{1:p0} implies that $\hat \bbeta^j$ lies in the convex region $\{\tilde\bbeta^j + \bb q_j\sqrt{ \log(p)/n} :  \|\bb\|_1 < m\}$ with probability tending to one. Therefore, we have 
\[
\lim_{n\to\infty}\Pr\left[\left\|\hat\bbeta^j - \tilde\bbeta^j\right\|_1 \leq m q_j\sqrt{\frac{\log(p)}{n}}\,\right]=1,
\]
i.e.,  $\|\hat \bbeta^j - \tilde\bbeta^j\|_1 = \mathcal{O}_p\left(q_j\sqrt{\log(p)/n}\right)$.

To prove \eqref{1:p0}, we denote $\bw=\argmin_{\bb : \|\bb\|_1 = m}Q\left(\tilde\bbeta^j +\bb q_j\sqrt{\log(p)/n}\right)$. Expanding terms, we get
\begin{align}
&Q\left(\tilde\bbeta^j +\bw q_j\sqrt{\frac{\log(p)}{n}}\right) -Q \left(\tilde\bbeta^j\right) \nonumber \\
= &\frac{1}{2n}\left\|\left(\bx_j-\bX_{\backslash j}\tilde\bbeta^j\right)-\bX_{\backslash j}\bw q_j\sqrt{\frac{\log(p)}{n}}\right\|_2^2 - \frac{1}{2n}\left\|\bx_j-\bX_{\backslash j}\tilde\bbeta^j\right\|_2^2 \nonumber \\
\quad& + \lambda_j\left\|\tilde\bbeta^j +\bw q_j\sqrt{\frac{\log(p)}{n}}\right\|_1-\lambda_j\left\|\tilde\bbeta^j\right\|_1 \nonumber \\
=&-\frac{q_j\sqrt{\log(p) }}{n^{3/2}}\bw^\top\bX_{\backslash j}^\top\left(\bx_j-\bX_{\backslash j}\tilde\bbeta^j\right) + \frac{q_j^2\log(p)}{2n^2}\bw^\top\bX_{\backslash j}^\top\bX_{\backslash j}\bw \nonumber \\
\quad&+\lambda_j\left\|\tilde\bbeta^j +\bw q_j\sqrt{\frac{\log(p)}{n}}\right\|_1 - \lambda_j\left\|\tilde\bbeta^j\right\|_1. \label{eq:taylor1}
\end{align}

Now, for any $g\neq0,h \in \mathbb{R}$, we have $|g + h|\geq |g| + \mathrm{sign}(g)h$. This is because, 1) if $g$ and $h$ have the same sign, $|g + h| =|g|+|h|= |g| + \mathrm{sign}(h)h= |g| + \mathrm{sign}(g)h$; 2) if they have the opposite signs, $|g+h|=\big||g|-|h|\big|\geq|g|-|h|=|g|-\mathrm{sign}(h)h=|g|+\mathrm{sign}(g)h$; 3) if $h=0$, $|g + h|= |g|=|g| + \mathrm{sign}(g)h$. Thus,
\begin{align}\label{eq:gh}
&\left\|\tilde\bbeta^j + \bw q_j\sqrt{\frac{\log(p)}{n}}\right\|_1 \nonumber \\
=& \left\|\tilde\bbeta_{\backslash\widetilde{ne}_j}^j+\bw_{\backslash\widetilde{ne}_j} q_j\sqrt{\frac{\log(p)}{n}}\right\|_1+\left\|\tilde\bbeta_{\widetilde{ne}_j}^j + \bw_{\widetilde{ne}_j} \sqrt{\frac{\log(p)}{n}} \right\|_1 \nonumber \\
=& q_j\sqrt{\frac{\log(p)}{n}}\left\|\bw_{\backslash\widetilde{ne}_j} \right\|_1+\left\|\tilde\bbeta_{\widetilde{ne}_j}^j + \bw_{\widetilde{ne}_j} q_j\sqrt{\frac{\log(p)}{n}} \right\|_1 \nonumber\\
\geq& q_j\sqrt{\frac{\log(p)}{n}} \left\|\bw_{\backslash\widetilde{ne}_j} \right\|_1 + \left\|\tilde\bbeta_{\widetilde{ne}_j}^j\right\|_1+q_j\sqrt{\frac{\log(p)}{n}} \tilde\btau^{j\top}_{\widetilde{ne}_j}\bw_{\widetilde{ne}_j} \nonumber\\
=&q_j\sqrt{\frac{\log(p)}{n}} \left\|\bw_{\backslash\widetilde{ne}_j} \right\|_1 + \left\|\tilde\bbeta^j\right\|_1+q_j\sqrt{\frac{\log(p)}{n}} \tilde\btau^{j\top}_{\widetilde{ne}_j}\bw_{\widetilde{ne}_j}.
\end{align}
In the second line and the fourth line, we use the fact that $\tilde\bbeta_{\backslash\widetilde{ne}_j}^j={\bf 0}$, and in the third line, we use the fact that $\tilde\btau^j_{\widetilde{ne}_j} =\mathrm{sign}(\tilde\bbeta^j_{\widetilde{ne}_j})$ and the inequality $|g + h|\geq |g| + \mathrm{sign}(g)h$ shown above. Therefore, combining \eqref{eq:taylor1} and \eqref{eq:gh}, we have
\begin{align}
&Q\left(\tilde\bbeta^j +\bw q_j\sqrt{\frac{\log(p)}{n}}\right) -Q \left(\tilde\bbeta^j\right) \nonumber \\ 
\geq&-\frac{q_j\sqrt{\log(p) }}{n^{3/2}}\bw^\top\bX_{\backslash j}^\top\left(\bx_j-\bX_{\backslash j}\tilde\bbeta^j\right) + \frac{q_j^2\log(p)}{2n^2}\bw^\top\bX_{\backslash j}^\top\bX_{\backslash j}\bw \nonumber \\
\quad&+\lambda_j q_j\sqrt{\frac{\log(p)}{ n}}\tilde\btau^{j\top}_{\widetilde{ne}_j}\bw_{\widetilde{ne}_j} +\lambda_j q_j\sqrt{\frac{\log(p)}{ n}} \left\|\bw_{\backslash\widetilde{ne}_j}\right\|_1 \nonumber\\ 
=&-\frac{q_j\sqrt{\log(p) }}{n^{3/2}}\bw^\top\bX_{\backslash j}^\top\left(\bx_j-\bX_{\backslash j}\tilde\bbeta^j\right) + \frac{q_j^2\log(p)}{2n}\bw^\top\hat\bSigma_{(\backslash j, \backslash j)}\bw \nonumber \\
\quad&+\lambda_j q_j\sqrt{\frac{\log(p)}{ n}}\tilde\btau^{j\top}\bw + \lambda_j q_j\sqrt{\frac{\log(p)}{ n}}\left(\left\|\bw_{\backslash\widetilde{ne}_j}\right\|_1 - \tilde\btau^{j\top}_{\backslash\widetilde{ne}_j}\bw_{\backslash\widetilde{ne}_j}\right), \nonumber
\end{align}
where, as before, $\hat\bSigma_{(\backslash j, \backslash j)}=\bX_{\backslash j}^\top\bX_{\backslash j}/n$. Since by ({\bf A3}), $\limsup_{n\to\infty}\|\tilde\btau^j_{\backslash\widetilde{ne}_j}\|_\infty\leq1-\delta$, for $n$ sufficiently large, $\|\tilde\btau^j_{\backslash\widetilde{ne}_j}\|_\infty\leq 1-\delta/2$. Thus, $\tilde\btau^{j\top}_{\backslash\widetilde{ne}_j}\bw_{\backslash\widetilde{ne}_j}\leq|\tilde\btau^{j\top}_{\backslash\widetilde{ne}_j}\bw_{\backslash\widetilde{ne}_j}|\leq\|\tilde\btau^j_{\backslash\widetilde{ne}_j}\|_\infty\|\bw_{\backslash\widetilde{ne}_j}\|_1\leq (1-\delta/2)\|\bw_{\backslash\widetilde{ne}_j}\|_1$, and
\begin{align}
&Q\left(\tilde\bbeta^j +\bw q_j\sqrt{\frac{\log(p)}{n}}\,\right) -Q \left(\tilde\bbeta^j\right) \nonumber \\
\geq&- \frac{q_j\sqrt{ \log(p) }}{n^{3/2}}\bw^\top\left[ \bX_{\backslash j}^\top\left(\bx_j-\bX_{\backslash j}\tilde\bbeta^j\right)  - \lambda_j n\tilde\btau^j\right] \nonumber \\
\quad&+ \frac{ q_j^2\log(p) }{2n}\bw^\top\hat\bSigma_{(\backslash j, \backslash j)}\bw + \lambda_j q_j\frac{\delta}{2}\sqrt{\frac{\log(p)}{ n}} \left\|\bw_{\backslash\widetilde{ne}_j}\right\|_1 \label{eq:taylor3}. 
\end{align}

To bound $\bw^\top[ \bX_{\backslash j}^\top(\bx_j-\bX_{\backslash j}\tilde\bbeta^j)  - \lambda_j n\tilde\btau^j]$ in \eqref{eq:taylor3}, writing $\bx_j=\bX_{\backslash j}\bbeta^j+\bepsilon^j$, we observe 
\begin{align*}
&\frac{1}{n} \bw^\top\left[\bX_{\backslash j}^\top\left(\bx_j-\bX_{\backslash j}\tilde\bbeta^j\right)  - \lambda_j \tilde\btau^j \right] \nonumber \\
=& \bw^\top\left[ \hat\bSigma_{(\backslash j, \backslash j)}\left(\bbeta^j-\tilde\bbeta^j\right)  - \lambda_j\tilde\btau^j +\frac{1}{n}\bX_{\backslash j}^\top\bepsilon^j\right]\nonumber\\
\leq&\left|\bw^\top\left[ \hat\bSigma_{(\backslash j, \backslash j)}\left(\bbeta^j-\tilde\bbeta^j\right)  - \lambda_j \tilde\btau^j +\frac{1}{n}\bX_{\backslash j}^\top\bepsilon^j\right] \right| \nonumber\\
=&\left|\bw^\top\left[ \hat\bSigma_{(\backslash j, \backslash j)}\left(\bbeta^j-\tilde\bbeta^j\right)  - \bSigma_{(\backslash j, \backslash j)}\left(\bbeta^j-\tilde\bbeta^j\right)\right.\right. \\
\quad&+ \left.\left.\bSigma_{(\backslash j, \backslash j)}\left(\bbeta^j-\tilde\bbeta^j\right)- \lambda_j \tilde\btau^j +\frac{1}{n}\bX^\top\bepsilon^j\right]\right|.
\end{align*}
Based on the stationary condition of \eqref{eq:nllasso}, we have $\bSigma_{(\backslash j, \backslash j)}(\bbeta^j-\tilde\bbeta^j)-\lambda_j \tilde\btau^j=\bzero$. Thus,
\begin{align}
&\frac{1}{n} \bw^\top\left[\bX_{\backslash j}^\top\left(\bx_j-\bX_{\backslash j}\tilde\bbeta^j\right)  - \lambda_j \tilde\btau^j \right] \nonumber \\ 
=&\left|\bw^\top\left[ \left(\hat\bSigma_{(\backslash j, \backslash j)}-\bSigma_{(\backslash j, \backslash j)}\right)\left(\bbeta^j-\tilde\bbeta^j\right) +\frac{1}{n}\bX_{\backslash j}^\top\bepsilon^j\right] \right| \nonumber\\
\leq& \left|\bw^\top \left(\hat\bSigma_{(\backslash j, \backslash j)}-\bSigma_{(\backslash j, \backslash j)}\right)\left(\bbeta^j-\tilde\bbeta^j\right)\right| +\left|\frac{1}{n}\bw^\top\bX_{\backslash j}^\top\bepsilon^j \right| \nonumber \\
\leq& \left\|\bw\right\|_1\left\|\left(\hat\bSigma_{(\backslash j, \backslash j)}-\bSigma_{(\backslash j, \backslash j)}\right)\left(\bbeta^j-\tilde\bbeta^j\right)\right\|_\infty + \frac{1}{n}\left\|\bw\big\|_1\big\|\bX_{\backslash j}^\top\bepsilon^j\right\|_\infty \nonumber \\
=& m\left(\left\|\left(\hat\bSigma_{(\backslash j, \backslash j)}-\bSigma_{(\backslash j, \backslash j)}\right)\left(\bbeta^j-\tilde\bbeta^j\right)\right\|_\infty + \frac{1}{n}\left\|\bX_{\backslash j}^\top\bepsilon^j\right\|_\infty\right).\label{ml1}
\end{align}

Based on e.g., Lemma 1 in \citet{Ravikumaretal2011}, assuming ({\bf A1}), we have $\|\hat\bSigma_{(\backslash j, \backslash j)}-\bSigma_{(\backslash j, \backslash j)}\|_{\max}=\mathcal{O}_p(\sqrt{\log(p)/ n})$, where $\|\cdot\|_{\max}$ is the entry-wise infinity norm. In addition, according to Lemma 2.1 in \citet{vandeGeerBuhlmann2009}, with ({\bf C1}) in the proof of Lemma~\ref{lem:resev}, we have $\|\bbeta^j-\tilde\bbeta^j\|_1=\mathcal{O}(\lambda_j q_j)$. Thus,  
\begin{align}\label{eq:sigbbound}
&\left\|\left(\hat\bSigma_{(\backslash j, \backslash j)}-\bSigma_{(\backslash j, \backslash j)}\right)\left(\bbeta^j-\tilde\bbeta^j\right)\right\|_\infty \nonumber \\
\leq&\left\|\hat\bSigma_{(\backslash j, \backslash j)}-\bSigma_{(\backslash j, \backslash j)}\right\|_{\max}\left\|\bbeta^j-\tilde\bbeta^j\right\|_1\nonumber \\
=&\mathcal{O}_p\left(\lambda_j q_j\sqrt{\frac{\log(p)}{n}}\right).
\end{align}
Since $\lambda_j q_j\to l<\infty$ in ({\bf A3}), we have $\|(\hat\bSigma_{(\backslash j, \backslash j)}-\bSigma_{(\backslash j, \backslash j)})(\bbeta^j-\tilde\bbeta^j)\|_\infty=\mathcal{O}_p(\sqrt{\log(p)/n})$. Based on a well-known result on Gaussian random variables, we also have $\|\bX_{\backslash j}^\top\bepsilon^j\|_\infty/n=\mathcal{O}_p(\sqrt{\log(p)/ n})$. Thus, we obtain 
\begin{align}\label{eq:bound1}
\frac{1}{mn}\bw^\top\left[ \bX_{\backslash j}^\top\left(\bx_j-\bX_{\backslash j}\tilde\bbeta^j\,\right)  - \lambda_j n\tilde\btau^j\right]=\mathcal{O}_p\left(\sqrt{\frac{\log(p)}{n}}\right).
\end{align}

To bound the other term 
\[
q_j^2\log(p) \bw^\top\hat\bSigma_{(\backslash j, \backslash j)}\bw/2n + \lambda_j\delta q_j\sqrt{\log(p)}\|\bw_{\backslash\widetilde{ne}_j} \|_1/(2\sqrt{n})
\]
in \eqref{eq:taylor3}, consider two cases: 1) $\widetilde{ne}_j=\emptyset$ and 2) $\widetilde{ne}_j\neq\emptyset$.

If 1) $\widetilde{ne}_j=\emptyset$,
\begin{align*}
&\frac{q_j^2\log(p)}{2n} \bw^\top\hat\bSigma_{(\backslash j, \backslash j)}\bw + \frac{\lambda_j\delta q_j}{2} \sqrt{\frac{\log(p)}{n}}\left\|\bw_{\backslash\widetilde{ne}_j} \right\|_1 \\
\geq& \frac{\lambda_j\delta q_j}{2} \sqrt{\frac{\log(p)}{n}}\left\|\bw\right\|_1\\
=&\frac{\lambda_j\delta q_j}{2}\sqrt{\frac{\log(p)}{n}} m.
\end{align*}
The first inequality is due to the positive semi-definitiveness of $\hat\bSigma_{(\backslash j, \backslash j)}$. Hence, 
\begin{align}
&Q\left(\tilde\bbeta^j +\bw q_j\sqrt{\frac{\log(p)}{n}}\right) -Q \left(\tilde\bbeta^j\right) \nonumber \\
\geq& mq_j\sqrt{\frac{\log(p)}{n}} \left(\lambda_j\frac{\delta}{2} - \frac{1}{mn}\bw^\top\left[ \bX_{\backslash j}^\top\left(\bx_j-\bX_{\backslash j}\tilde\bbeta^j\right)-\lambda_jn\tilde\btau^j\right]\right).
\end{align}
By \eqref{eq:bound1}, %$\bw^\top[ \bX_{\backslash j}^\top(\bx_j-\bX_{\backslash j}\tilde\bbeta^j)-\lambda_jn\tilde\btau^j]/(mn)=\mathcal{O}(\sqrt{\log(p)/n})$.
and given that $\sqrt{\log(p)/n}/(\lambda_j\delta)\to0$ by ({\bf A3}), for any $m>0$
\[
Q\left(\tilde\bbeta^j +\bw q_j\sqrt{\frac{\log(p)}{n}}\,\right) >Q \left(\tilde\bbeta^j\right)
\]
with high probability, which implies that \eqref{1:p0} holds.

If 2) $\widetilde{ne}_j\neq\emptyset$, i.e., $q_j\geq1$, we further consider two cases: i) $\|\bw_{\backslash\widetilde{ne}_j}\|_1> 3\|\bw_{\widetilde{ne}_j}\|_1$, and ii) $\|\bw_{\backslash ne_j}\|_1\leq 3\|\bw_{ne_j}\|_1$. Note that these two cases are not mutually exclusive. However, we proved in Lemma~\ref{lem:unbiased} that if $b^j_{\min}> 3\lambda_j\sqrt{q_j}/ \phi^2$, which happens when $n$ is sufficiently large, then $\widetilde{ne}_j\supseteq ne_j$. Thus, for any $\bw$, we have $\|\bw_{\backslash\widetilde{ne}_j}\|_1\leq \|\bw_{\backslash{ne}_j}\|_1$ and $\|\bw_{ne_j}\|_1\leq\|\bw_{\widetilde{ne}_j}\|_1$. Therefore, although the two cases i) $\|\bw_{\backslash\widetilde{ne}_j}\|_1> 3\|\bw_{\widetilde{ne}_j}\|_1$, and ii) $\|\bw_{\backslash ne_j}\|_1\leq 3\|\bw_{ne_j}\|_1$ are not mutually exclusive, they cover all possibilities.

If i) $\|\bw_{\backslash\widetilde{ne}_j}\|_1> 3\|\bw_{\widetilde{ne}_j}\|_1$, because $\|\bw\|_1=\|\bw_{\backslash\widetilde{ne}_j}\|_1+\|\bw_{\widetilde{ne}_j}\|_1=m$, $\|\bw_{\backslash\widetilde{ne}_j}\|_1>3m/4$, and
\begin{align}
&\frac{ \log(p) }{2n}\bw^\top\hat\bSigma_{(\backslash j, \backslash j)}\bw + \lambda_j \frac{\delta}{2}\sqrt{\frac{\log(p)}{ n}} \left\|\bw_{\backslash\widetilde{ne}_j}\right\|_1 \nonumber \\
\geq& \lambda_j \frac{\delta}{2}\sqrt{\frac{\log(p)}{ n}} \left\|\bw_{\backslash\widetilde{ne}_j}\right\|_1\nonumber \\
>& \lambda_j \frac{\delta}{2}\sqrt{\frac{\log(p)}{ n}}\frac{3m}{4}.\label{case1}
\end{align}
Combining \eqref{eq:taylor3}, \eqref{ml1} and \eqref{case1}, we get
\begin{align}
&Q\left(\tilde\bbeta^j +\bw q_j\sqrt{\frac{\log(p)}{n}}\right) -Q \left(\tilde\bbeta^j\right) \nonumber \\
\geq&  m q_j\sqrt{\frac{\log(p)}{n}} \left(\lambda_j\frac{3\delta}{8} - \frac{1}{mn}\bw^\top\left[ \bX_{\backslash j}^\top\left(\bx_j-\bX_{\backslash j}\tilde\bbeta^j\right)-\lambda_jn\tilde\btau^j\right]\right).
\end{align}
Because $\sqrt{\log(p)/n}/(\lambda_j\delta)\to0$ by ({\bf A3}) and $\bw^\top[ \bX_{\backslash j}^\top(\bx_j-\bX_{\backslash j}\tilde\bbeta^j)-\lambda_jn\tilde\btau^j]/(mn) = \mathcal{O}_p\left(\sqrt{\log(p)/n}\right)$ by \eqref{eq:bound1}, with any $m>0$ and $n$ sufficiently large, we have
\[
Q\left(\tilde\bbeta^j +\bw q_j\sqrt{\frac{\log(p)}{n}}\,\right) >Q \left(\tilde\bbeta^j\right),
\]
and hence \eqref{1:p0} holds.

On the other hand, if ii) $\|\bw_{\backslash ne_j}\|_1\leq 3\|\bw_{ne_j}\|_1$, because $\|\bw\|_1=m$, we have  $|\bw_{ne_j}|\geq m/4$. %{\color{red}made a change here}
Let $\mathcal{S}= ne_j\cup\{l\}$ where $l=\argmax_{j:j\notin  ne_j} |w_j|$. Then $\mathcal{S}\supseteq ne_j$, $|\mathcal{S}|=q_j+1\leq2q_j$ and $\|\bw_{\backslash\mathcal{S}}\|_\infty\leq\min_{j\in\mathcal{S}\backslash ne_j}|w_j|$. Hence, the $(ne_j, q_j, 3)$-restricted eigenvalue condition in Lemma~\ref{lem:resev} implies that, with probability tending to one, as $n$ approaches infinity, 
\begin{align}
&q_j^2\frac{\log(p) }{2n}\bw^\top\hat\bSigma_{(\backslash j, \backslash j)}\bw + \lambda_j q_j\frac{\delta}{2}\sqrt{\frac{\log(p)}{ n}} \left\|\bw_{\backslash\widetilde{ne}_j}\right\|_1 \nonumber \\
\geq& q_j^2\frac{\log(p) }{2n}\bw^\top\hat\bSigma_{(\backslash j, \backslash j)}\bw \nonumber\\
\geq& q_j^2\frac{\log(p) \phi^2}{2n}\left\|\bw_{\mathcal{S}}\right\|_2^2 \nonumber \\
\geq& q_j^2\frac{\log(p) \phi^2}{2n}\left\|\bw_{ne_j}\right\|_2^2 \nonumber \\
\geq& q_j\frac{\log(p) \phi^2}{2n}\left\|\bw_{ne_j}\right\|_1^2 \nonumber \\
\geq&\frac{q_j\log(p) \phi^2}{2n}\frac{m^2}{16}. \label{case2}
\end{align}
%{\color{red} where did $m^2/16$ come from?}
Thus, combining \eqref{eq:taylor3}, \eqref{ml1} and \eqref{case2}, we find that for $n$ sufficiently large, $\phi^2\geq\kappa^2/2$, and
\begin{align*}
&Q\left(\tilde\bbeta^j +\bw q_j\sqrt{\frac{\log(p)}{n}}\right) -Q \left(\tilde\bbeta^j\right)\\
\geq& q_jm\sqrt{\frac{\log(p)}{n}} \left(\frac{\kappa^2}{128}\sqrt{\frac{\log(p)}{n}}m \right. \\
\quad&\left.- \frac{1}{mn}\bw^\top\left[ \bX_{\backslash j}^\top\left(\bx_j-\bX_{\backslash j}\tilde\bbeta^j\right)-\lambda_jn\tilde\btau^j\right]\right).
\end{align*}
Since $\bw^\top[ \bX_{\backslash j}^\top(\bx_j-\bX_{\backslash j}\tilde\bbeta^j)-\lambda_jn\tilde\btau^j]/(mn) =\mathcal{O}_p(\sqrt{\log(p)/n})$, we can choose $m$ to be sufficiently large, not depending on $n$, such that \eqref{1:p0} holds. 
\end{proof}

%%%%%%%%%%%%%%%%%%%%%%%%%%%%%%%%%%%%%%%%%%%%%%%%%%%%
%%%%%%%%%%%%%%%%%%%%%%%%%%%%%%%%%%%%%%%%%%%%%%%%%%%%

\begin{lemma}\label{mainlemmaa2}
Suppose ({\bf A2}) and ({\bf A3}) hold. For $\widetilde{ne}_j \neq\emptyset$, we have 
\begin{align}
\sqrt{\frac{\log(p)}{n}}\frac{q_j}{\tilde{b}^j_{\min}}\to0,
\end{align}
where $\tilde{b}^j_{\min}\equiv\min\left\{|\tilde\beta^j_k|:\tilde\beta^j_k\neq0\right\}$.
\end{lemma}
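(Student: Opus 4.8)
The plan is to lower-bound $\tilde b^j_{\min}$ by partitioning the noiseless support $\widetilde{ne}_j$ into the genuine part $ne_j$ and the spurious part $\widetilde{ne}_j\backslash ne_j$, and to control the target ratio separately on each piece. First I would invoke Lemma~\ref{lem:unbiased} to secure $\widetilde{ne}_j\supseteq ne_j$ for $n$ large; together with the hypothesis $\widetilde{ne}_j\neq\emptyset$ this also forces $ne_j\neq\emptyset$, hence $q_j\geq1$, since $ne_j=\emptyset$ would give $\bbeta^j=\bzero$, making the noiseless objective minimal at $\bb=\bzero$ and so $\widetilde{ne}_j=\emptyset$. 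This yields the decomposition $\tilde b^j_{\min}=\min\{A,B\}$ with $A\equiv\min_{k\in ne_j}|\tilde\beta^j_k|$ and $B\equiv\min_{k\in\widetilde{ne}_j\backslash ne_j}|\tilde\beta^j_k|$, where $B$ is vacuous when the spurious set is empty.

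For the $A$ term I would use the deterministic bound \eqref{comp} from Lemma~\ref{lem:unbiased}, namely $\|\tilde\bbeta^j-\bbeta^j\|_\infty\leq\lambda_j\sqrt{8q_j}/\phi^2$. Since $k\in ne_j$ gives $|\beta^j_k|\geq b^j_{\min}$ and $\liminf_{n\to\infty}\phi^2>0$, the $\beta$-min requirement $\lambda_j\sqrt{q_j}/b^j_{\min}\to0$ in ({\bf A2}) yields $|\tilde\beta^j_k|\geq b^j_{\min}-\lambda_j\sqrt{8q_j}/\phi^2\geq b^j_{\min}/2$ for $n$ large, so $A\geq b^j_{\min}/2$. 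To control the ratio on this piece I would factor $\sqrt{\log(p)/n}\,q_j/b^j_{\min}=\big(\sqrt{\log(p)/n}\,q_j/\lambda_j\big)\big(\lambda_j/b^j_{\min}\big)$: the first factor vanishes by ({\bf A2}), and the second obeys $\lambda_j/b^j_{\min}\leq\lambda_j\sqrt{q_j}/b^j_{\min}\to0$ because $q_j\geq1$, so the product tends to zero.

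For the $B$ term the key is the closed form implied by stationarity of the noiseless lasso. Restricting the identity $\bSigma_{(\backslash j,\backslash j)}(\bbeta^j-\tilde\bbeta^j)=\lambda_j\tilde\btau^j$ to the rows indexed by $\widetilde{ne}_j$, and using that $\bbeta^j-\tilde\bbeta^j$ is supported on $\widetilde{ne}_j$ (because $\widetilde{ne}_j\supseteq ne_j$), gives $\bbeta^j_{\widetilde{ne}_j}-\tilde\bbeta^j_{\widetilde{ne}_j}=\lambda_j[\bSigma_{(\widetilde{ne}_j,\widetilde{ne}_j)}]^{-1}\tilde\btau^j_{\widetilde{ne}_j}$, the inverse existing by ({\bf A1}). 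For $k\in\widetilde{ne}_j\backslash ne_j$ we have $\beta^j_k=0$, so $|\tilde\beta^j_k|=\lambda_j\big|[\,[\bSigma_{(\widetilde{ne}_j,\widetilde{ne}_j)}]^{-1}\tilde\btau^j_{\widetilde{ne}_j}]_k\big|\geq\lambda_j M$, where $M\equiv\min_{k\in\widetilde{ne}_j\backslash ne_j}\big|[\,[\bSigma_{(\widetilde{ne}_j,\widetilde{ne}_j)}]^{-1}\tilde\btau^j_{\widetilde{ne}_j}]_k\big|$. Hence $\sqrt{\log(p)/n}\,q_j/B\leq (q_j/\lambda_j)\sqrt{\log(p)/n}\,M^{-1}$, which is exactly the expression that ({\bf A3}) drives to zero.

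Combining the two estimates, $\tilde b^j_{\min}\geq\min\{b^j_{\min}/2,\lambda_j M\}$, so $\sqrt{\log(p)/n}\,q_j/\tilde b^j_{\min}$ is bounded by the maximum of the two ratios already shown to vanish, which gives the claim. The main obstacle, and the crux of the argument, is recognizing that the spurious coefficients admit the exact representation $\lambda_j[\bSigma_{(\widetilde{ne}_j,\widetilde{ne}_j)}]^{-1}\tilde\btau^j_{\widetilde{ne}_j}$ and that condition ({\bf A3}) is engineered precisely to lower-bound their magnitudes; once this identification is made, everything else is routine rate bookkeeping with ({\bf A2}) and ({\bf A3}).
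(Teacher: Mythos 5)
Your proof is correct and follows essentially the same route as the paper's: split $\widetilde{ne}_j$ into the genuine part $ne_j$, handled via the bound $\|\tilde\bbeta^j-\bbeta^j\|_\infty\leq\lambda_j\sqrt{8q_j}/\phi^2$ from Lemma~\ref{lem:unbiased} together with the rates in ({\bf A2}), and the spurious part $\widetilde{ne}_j\backslash ne_j$, handled via the noiseless-lasso stationarity identity $\tilde\beta^j_k=-\lambda_j\big[[\bSigma_{(\widetilde{ne}_j,\widetilde{ne}_j)}]^{-1}\tilde\btau^j_{\widetilde{ne}_j}\big]_k$ and condition ({\bf A3}). The only differences are cosmetic (your $b^j_{\min}/2$ lower bound versus the paper's $(3-2\sqrt{2})\lambda_j\sqrt{q_j}/\phi^2$, and your explicit side argument that $\widetilde{ne}_j\neq\emptyset$ forces $ne_j\neq\emptyset$, which the paper leaves implicit).
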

\begin{proof}
We show that $q_j\sqrt{\log(p)/n}/|\tilde\beta^j_k|\to0$ for any $k\in \widetilde{ne}_j$ by considering two cases: 1) for $k\in ne_j$, and 2) for $k\in \widetilde{ne}_j\backslash ne_j$.

1) For any $k\in ne_j$, and for $n$ sufficiently large, Lemma~\ref{lem:unbiased} indicates that
\[
\left\|\tilde\bbeta^j-\bbeta^j\right\|_\infty\leq\left\|\tilde\bbeta^j-\bbeta^j\right\|_2\leq\frac{\lambda_j\sqrt{8q_j}}{\phi^2}.
\]
Now, by ({\bf A2}), for any $k\in ne_j$, $|\beta_k^j|>3\lambda_j\sqrt{q_j}/\phi^2$,  with a sufficiently large $n$. Hence, for any $k\in ne_j$, $|\tilde\beta_k^j|>(3-2\sqrt{2})\lambda_j\sqrt{q_j}/\phi^2$. Therefore, given the rates in ({\bf A2}),
\begin{align*}
0<\sqrt{\frac{\log(p)}{n}}\frac{q_j}{\left|\tilde\beta_k^j\right|}<\sqrt{\frac{\log(p)}{n}}\frac{q_j}\lambda_j\cdot\frac{\phi^2}{\left(3-2\sqrt{2}\right)\sqrt{q_j}}\to0.
\end{align*}

If $\widetilde{ne}_j= ne_j$, then our proof is complete. Otherwise, 2) for $k\in \widetilde{ne}_j\backslash ne_j$, consider the stationary condition of \eqref{eq:nllasso},
\begin{align}
n\lambda_j\tilde\btau^j_{ne_j}&=\E\left[\bX_{\widetilde{ne}_j}^\top\bX\right]\left(\bbeta^j-\tilde\bbeta^j\right) \nonumber \\
&= \E\left[\bX_{\widetilde{ne}_j}^\top\bX_{\widetilde{ne}_j}\right]\left(\bbeta^j_{\widetilde{ne}_j}-\tilde\bbeta^j_{\widetilde{ne}_j}\right).
\end{align}
The second equality holds because based on Lemma~\ref{lem:unbiased}, $\widetilde{ne}_j\supseteq ne_j$, i.e., $\bbeta^j_{\backslash\widetilde{ne}_j}=\tilde\bbeta^j_{\backslash\widetilde{ne}_j}=\bzero$. Rearranging terms,
\begin{align}
\tilde\bbeta^j_{\widetilde{ne}_j}=\bbeta^j_{\widetilde{ne}_j}-n\lambda_j\E\left[\bX_{\widetilde{ne}_j}^\top\bX_{\widetilde{ne}_j}\right]^{-1}\tilde\btau^j_{\widetilde{ne}_j}.
\end{align}
Recall that for any $k\in \widetilde{ne}_j\backslash ne_j$, $\beta^j_k=0$. Thus, for any $k\in \widetilde{ne}_j\backslash ne_j$,
\begin{align}\label{eq:bminkeyinequality}
\left|\tilde\beta^j_k\right|=\left|n\lambda_j\E\left[\bX_{\widetilde{ne}_j}^\top\bX_{\widetilde{ne}_j}\right]^{-1}\tilde\btau^j_{\widetilde{ne}_j}\right|_k =\lambda_j\left|\left[\bSigma_{(\widetilde{ne}_j,\widetilde{ne}_j)}\right]^{-1}\tilde\btau^j_{\widetilde{ne}_j}\right|_k.
\end{align}

By ({\bf A3}), we have
\[
\sqrt{\frac{\log(p)}{n}}\frac{q_j}\lambda_j\left(\min_{k\in \widetilde{ne}_j\backslash ne_j}\left|\left[\bSigma_{(\widetilde{ne}_j,\widetilde{ne}_j)}\right]^{-1}\tilde\btau^j_{\widetilde{ne}_j}\right|_k\right)^{-1}\to0,
\]
which means for any $k\in \widetilde{ne}_j\backslash ne_j$, we have 
\[
\sqrt{\frac{\log(p)}{n}}\frac{q_j}{\left|\tilde\beta_k^j\right|}\to0.
\]
%This completes the proof.

\end{proof}

%%%%%%%%%%%%%%%%%%%%%%%%%%%%%%%%%%%%%%%%%%%%%%%%%%%%%%%%%%%%%%%%%%%
Now we proceed to prove Proposition~\ref{thm:consistent}.

\begin{proof}[Proof of Proposition~\ref{thm:consistent}]
We first note that according to the stationary conditions of \eqref{eq:nllasso} and \eqref{eq:lasso}, respectively, we have
\begin{align}
\label{eq:stationarya}
\tilde\btau^j&=\frac{1}{\lambda_j}\bSigma_{(\backslash j, \backslash j)}\left(\bbeta^j-\tilde\bbeta^j\right), \\
\label{eq:stationaryb}
\hat\btau^j&=\frac{1}{n\lambda_j}\bX_{\backslash j}^\top\left(\bx_j-\bX_{\backslash j}\hat\bbeta^j\right).
\end{align}
Writing $\bx_j=\bX_{\backslash j}\bbeta^j+\bepsilon^j$, \eqref{eq:stationaryb} gives
\begin{align}\label{eq:stationaryb2}
\hat\btau^j=\frac{1}{\lambda_j}\hat\bSigma_{(\backslash j, \backslash j)}\left(\bbeta^j-\hat\bbeta^j\right)+\frac{1}{n\lambda_j}\bX_{\backslash j}^\top\bepsilon^j
\end{align}
where $\hat\bSigma_{(\backslash j, \backslash j)}=\bX_{\backslash j}^\top\bX_{\backslash j} / n$. Combining \eqref{eq:stationarya} and \eqref{eq:stationaryb2}, 
\begin{align}\label{eq:tau}
\hat\btau^j-\tilde\btau^j &= \frac{1}{\lambda_j}\left(\hat\bSigma_{(\backslash j, \backslash j)}\left[\bbeta^j-\hat\bbeta^j\right]-\bSigma_{(\backslash j, \backslash j)}\left[\bbeta^j-\tilde\bbeta^j \right]\right) + \frac{1}{n\lambda_j}\bX_{\backslash j}^\top\bepsilon^j \nonumber \\
&=\frac{1}{\lambda_j}\left(\hat\bSigma_{(\backslash j, \backslash j)}\left[\bbeta^j-\hat\bbeta^j\right]-\bSigma_{(\backslash j, \backslash j)}\left[\bbeta^j-\hat\bbeta^j\right]\right. \nonumber\\
&\quad+\left.\bSigma_{(\backslash j, \backslash j)}\left[\bbeta^j-\hat\bbeta^j\right] -\bSigma_{(\backslash j, \backslash j)}\left[\bbeta^j-\tilde\bbeta^j\right] \right) + \frac{1}{n\lambda_j}\bX_{\backslash j}^\top\bepsilon^j \nonumber \\
&=\frac{1}{\lambda_j}\left(\hat\bSigma_{(\backslash j, \backslash j)}-\bSigma_{(\backslash j, \backslash j)}\right)\left(\bbeta^j-\hat\bbeta^j\right) \nonumber \\
&\quad+\frac{1}{\lambda_j}\bSigma_{(\backslash j, \backslash j)}\left(\tilde\bbeta^j-\hat\bbeta^j\right) + \frac{1}{n\lambda_j}\bX_{\backslash j}^\top\bepsilon^j.
%\nonumber \\
%&\quad +\frac{1}{\lambda_j}\bSigma_{(\backslash j, \backslash j)}\left(\tilde\bbeta^j-\hat\bbeta^j\right) + \frac{1}{n\lambda_j}\bX_{\backslash j}^\top\bepsilon^j.
\end{align}

We now bound all three terms on the right hand side of \eqref{eq:tau}. First, by the Gaussianity of the data, 
$$
\frac{1}{n\lambda_j}\left\|\bX_{\backslash j}^\top\bepsilon^j\right\|_\infty=\mathcal{O}_p\left(\frac{1}{\lambda_j}\sqrt{\frac{\log(p)}{n}}\right).
$$ 
In addition, we proved in Lemma~\ref{mainlemmaa1} that $\|\hat \bbeta^j - \tilde\bbeta^j\|_1 = \mathcal{O}_p\left(q_j\sqrt{\log(p)/n }\right)$. Thus, because $\left\|\bSigma_{(\backslash j, \backslash j)}\right\|_{\max} = 1$,
\begin{align*}
\frac{1}{\lambda_j}\left\|\bSigma_{(\backslash j, \backslash j)}\left(\hat \bbeta^j - \tilde\bbeta^j\right)\right\|_\infty&\leq\frac{1}{\lambda_j}\left\|\bSigma_{(\backslash j, \backslash j)}\right\|_{\max}\left\|\hat \bbeta^j - \tilde\bbeta^j\right\|_1\\
&=\mathcal{O}_p\left(\frac{q_j}{\lambda_j}\sqrt{\frac{\log(p)}{n}}\right).
\end{align*}
Finally, based on Theorem 7.2 in \citet{bickeletal2009}, Lemma~\ref{lem:resev} imply that $\|\bbeta^j-\hat\bbeta^j\|_1=\mathcal{O}_p(\lambda_j q_j)$. Thus,
\begin{align*}
&\quad\left\|\left(\hat\bSigma_{(\backslash j, \backslash j)}-\bSigma_{(\backslash j, \backslash j)}\right)\left(\bbeta^j-\hat\bbeta^j\right)\right\|_\infty\\
&\leq\left\|\hat\bSigma_{(\backslash j, \backslash j)}-\bSigma_{(\backslash j, \backslash j)}\right\|_{\max}\left\|\bbeta^j-\hat\bbeta^j\right\|_1 \\
&=\mathcal{O}_p\left(\sqrt{\frac{\log(p)}{n}}\lambda_j q_j\right)=\mathcal{O}_p\left(\sqrt{\frac{\log(p)}{n}}\right).
\end{align*}
Thus, \eqref{eq:tau} shows that $\|\tilde\btau^j-\hat\btau^j\|_\infty=\mathcal{O}_p\left(q_j\sqrt{\log(p)/n}/\lambda_j\right)$. By ({\bf A3}), we have that $\limsup_{n\to\infty}\|\tilde\btau^j_{\backslash\widetilde{ne}_j}\|_\infty \leq 1 - \delta$ for $q_j\sqrt{\log(p)/n}/(\lambda_j\delta)\to0$, and hence $\lim_{n\to\infty}\Pr[\|\hat\btau^j_{\backslash\widetilde{ne}_j}\|_\infty<1]=1$. Thus, 
\begin{align}\label{side2}
\lim_{n\to\infty}\Pr\left[ \widetilde{ne}_j\supseteq\hat ne_j\right]=1. 
\end{align}

On the other hand, if $\widetilde{ne}_j= \emptyset$, $\widetilde{ne}_j\subseteq \hat ne_j$. If $\widetilde{ne}_j\neq \emptyset$, by Lemma~\ref{mainlemmaa2},
\begin{align}\label{eq:cont1}
\sqrt{\frac{\log(p)}{n}}\frac{q_j}{\tilde{b}_{\min}^j}\to0.
\end{align}
Lemma~\ref{mainlemmaa1} shows that $\|\hat \bbeta^j - \tilde\bbeta^j\|_1 = \mathcal{O}_p\left(q_j\sqrt{\log(p)/n }\right)$, i.e., there exists a constant $C>0$ such that  %we have 
\begin{align}\label{eq:cont2}
\lim_{n\to\infty}\Pr\left[\left\|\hat \bbeta^j - \tilde\bbeta^j\right\|_\infty > Cq_j\sqrt{\frac{\log(p)}{n} }\,\right] =0.
\end{align}
Based on \eqref{eq:cont1}, for $n$ sufficiently large, $\tilde{b}_{\min}^j > 2Cq_j\sqrt{\log(p)/n}$. Thus, combining \eqref{eq:cont1} and \eqref{eq:cont2}, for $n$ sufficiently large, whenever $\tilde\beta_k^j \neq 0$, we have $|\tilde\beta_k^j|  > 2Cq_j\sqrt{\log(p)/n}$ and hence $\lim_{n\to\infty}\Pr\left[|\hat\beta_k^j|  > 0\right]=1$. Therefore
\begin{align}\label{side1}
\lim_{n\to\infty}\Pr\left[\widetilde{ne}_j\subseteq\hat ne_j\right]=1,
\end{align}
which completes the proof. 
\end{proof}

%\spacingset{1}
%\small
%%%%%%%%%%%%%%%%%%%%%%%%%%%%%%%%%%%%%%%%%%%%%%%%%%%%%%%%%%%%%%%%%%%
%%%%%%%%%%%%%%%%%%%%%%%%%%%%%%%%%%%%%%%%%%%%%%%%%%%%%%%%%%%%%%%%%%%
%%%%%%%%%%%%%%%%%%%%%%%%%%%%%%%%%%%%%%%%%%%%%%%%%%%%%%%%%%%%%%%%%%%
%%%%%%%%%%%%%%%%%%%%%%%%%%%%%%%%%%%%%%%%%%%%%%%%%%%%%%%%%%%%%%%%%%%
%%%%%%%%%%%%%%%%%%%%%%%%%%%%%%%%%%%%%%%%%%%%%%%%%%%%%%%%%%%%%%%%%%%
%%%%%%%%%%%%%%%%%%%%%%%%%%%%%%%%%%%%%%%%%%%%%%%%%%%%%%%%%%%%%%%%%%%
%%%%%%%%%%%%%%%%%%%%%%%%%%%%%%%%%%%%%%%%%%%%%%%%%%%%%%%%%%%%%%%%%%%
%%%%%%%%%%%%%%%%%%%%%%%%%%%%%%%%%%%%%%%%%%%%%%%%%%%%%%%%%%%%%%%%%%%
%%%%%%%%%%%%%%%%%%%%%%%%%%%%%%%%%%%%%%%%%%%%%%%%%%%%%%%%%%%%%%%%%%%

\section{Details for the Example in Figure~\ref{fig:quanvsqual}} \label{sec:quanvsqual}

Since $\bX^{\RN{1}}$ is normally distributed and 
\[
\bOmega^{\RN{1}}=\begin{bmatrix}
    1       & 0.5 & 0.5 \\
    0.5    & 1    & 0.5 \\
    0.5    & 0.5    & 1
\end{bmatrix},
\]
we have $\bX^{\RN{1}}\sim_{i.i.d.}\mathcal{N}_3(\bzero, \bSigma^{\RN{1}})$, where
\[
\bSigma^{\RN{1}}=\left(\bOmega^{\RN{1}}\right)^{-1}=\begin{bmatrix}
    1.5       & -0.5 & -0.5 \\
    -0.5    & 1.5    & -0.5 \\
    -0.5    & -0.5    & 1.5
\end{bmatrix}.
\]
In population $\RN{2}$, we have $\bx_1^{\RN{2}}=_d\bx_1^{\RN{1}}$, $\bx_2^{\RN{2}}=_d\bx_2^{\RN{1}}$ and $\bx_3^{\RN{2}}\independent\bx_{\{1,2\}}^{\RN{2}}$. Thus, $\bX^{\RN{2}}\sim_{i.i.d.}\mathcal{N}_3(\bzero, \bSigma^{\RN{2}})$, where
\[
\bSigma^{\RN{2}}=\begin{bmatrix}
    1.5       & -0.5 & 0 \\
    -0.5    & 1.5    & 0 \\
    0    & 0    & \Var\left(\bx_3^{\RN{2}}\right)
\end{bmatrix},
\]
which implies that 
\[
\bOmega^{\RN{2}}=\left(\bSigma^{\RN{2}}\right)^{-1}=\begin{bmatrix}
    0.75       & 0.25 & 0 \\
    0.25    & 0.75    & 0 \\ 
    0    & 0    & 1/\Var\left(\bx_3^{\RN{2}}\right).
\end{bmatrix}.
\]

\section{Detail of the Brain Imaging Data Collection and Processing}\label{sec:data}

%Imaging Methods
MRI scans were completed on a 3T Phillips Achieva with a 32-channel head coil.  Each imaging session lasted ~40 minutes and included a 1mm isotropic MPRAGE (5:13), 1mm isotropic 3D T2-weighted image (5:22), 1mm isotropic 3D T2-Star (3:41), 2D FLAIR collected at an in-plane resolution of 1 x 1mm with a slice thickness of 4mm, no gaps (2:56), 3mm isotropic BOLD image for resting-state fMRI (6:59), and a 32 direction 2mm isotropic diffusion sequence acquired with reverse polarity (A-P, P-A), b=1000 sec/mm2, and 6 non-diffusion weighted images for diffusion tensor imaging (DTI) analysis (each 6:39).  The DTI data was then post-processed by the collaborative group by the following methods before the data was then utilized for the current analysis.  Briefly, the first portion of the pipeline uses TORTOISE - Tolerably Obsessive Registration and Tensor Optimization Indolent Software Ensemble \citep{pierpaoli2010tortoise}.  For reverse polarity data, each DWI acquisition both A-P and P-A is initially run through DiffPrep \citep{oishi2009atlas, zhang2010atlas} in TORTOISE for susceptibility distortion correction, motion correction, eddy current correction, and registration to 3D high resolution structural image. For EPI distortion correction, the diffusion images were registered to the 1mm isotropic T2 image using non-linear b-splines.  Eddy current and motion distortion were corrected using standard affine transformations followed by re-orientation of the b-matrix for the rotational aspect of the rigid body motion.  Following DiffPrep, the output images from both the A-P and P-A DWI acquisitions were then sent through Diffeomorphic Registration for Blip-Up Blip-Down Diffusion Imaging \citep[DR-BUDDI,][]{irfanoglu2015} in TORTOISE for further EPI distortion and eddy current correction that can be completed with diffusion data that has been collected with reverse polarity. This step combines the reverse polarity imaging data creating a single, cleaned, DWI data set that is then sent through DiffCalc \citep{pierpaoli2010tortoise,koay2006,koay2009, basser1994,mangin2002,chang2005,chang2012,rohde2005} in TORTOISE. This step completes the tensor estimation using the robust estimation of tensors by outlier rejection (RESTORE)10 approach.  Following tensor estimation, a variety of DTI metrics can be derived. For this study, we specifically focused on fractional anisotropy (FA) as our main metric.   

Following this post-processing in TORTOISE, 3D image stacks for MD and FA were introduced into DTIstudio \citep{zhang2010atlas,oishi2009atlas} for segmentation of the DTI atlas \citep{mori2005mri} on to each participants DTI data set in `patient space' through the Diffeomap program in DTIstudio using both linear and non-linear transformations.  This is a semi-automated process that allows for the extraction of DTI metrics within each 3D-atlas-based region of interest providing a comprehensive sampling throughout the entire brain into 189 regions including ventricular space.  For this study, selection of regions was limited to regions of white matter as the main hypothesis regarding DTI was that there would be reductions in white matter integrity observed with FA related to brain injury.  This reduced the number of regions used for further analysis down to 78.  To select only white matter, FA images were then threshold at 0.2 or greater, and this final 3D segmentation was then applied to all other co-registered DTI metrics and the data within each DTI metric for the 78 regions of interest was extracted in ROIeditor for further analysis.

\newpage
\bibliographystyle{apalike}
\bibliography{DiffNet}

\begin{thebibliography}{}

\bibitem[Barab\'{a}si et~al., 2011]{Barabasietal2011network}
Barab\'{a}si, A.-L., Gulbahce, N., and Loscalzo, J. (2011).
\newblock Network medicine: A network-based approach to human disease.
\newblock {\em Nature Reviews Genetics}, 12(1):56--68.

\bibitem[Basser et~al., 1994]{basser1994}
Basser, P.~J., Mattiello, J., and Lebihan, D. (1994).
\newblock Estimation of the effective self-diffusion \textit{tensor} from the
  {NMR} spin echo.
\newblock {\em Journal of Magnetic Resonance, Series B}, 103(3):247--254.

\bibitem[Bassett and Bullmore, 2009]{BassettBullmore2009network}
Bassett, D.~S. and Bullmore, E.~T. (2009).
\newblock Human brain networks in health and disease.
\newblock {\em Current Opinion in Neurology}, 22(4):340--347.

\bibitem[Bazarian et~al., 2012]{bazarian2012}
Bazarian, J.~J., Zhu, T., Blyth, B., Borrino, A., and Zhong, J. (2012).
\newblock Subject-specific changes in brain white matter on diffusion tensor
  imaging after sports-related concussion.
\newblock {\em Magnetic Resonance Imaging}, 30(2):171--180.

\bibitem[Belilovsky et~al., 2016]{Belilovskyetal2016diffprec}
Belilovsky, E., Varoquaux, G., and Blaschko, M.~B. (2016).
\newblock Testing for differences in gaussian graphical models: Applications to
  brain connectivity.
\newblock In Lee, D.~D., Sugiyama, M., Luxberg, U.~V., Guyon, I., and Garnett,
  R., editors, {\em Advances in Neural Information Processing Systems},
  volume~29, pages 595--603. Curran Associates, Inc., Red Hook, NY.

\bibitem[Belloni and Chernozhukov, 2013]{BelloniChernozhukov2013qlambda}
Belloni, A. and Chernozhukov, V. (2013).
\newblock Least squares after model selection in high-dimensional sparse
  models.
\newblock {\em Bernoulli}, 19(2):521--547.

\bibitem[Bennett et~al., 2012]{bennett2012}
Bennett, R.~E., Mac~Donald, C.~L., and Brody, D.~L. (2012).
\newblock Diffusion tensor imaging detects axonal injury in a mouse model of
  repetitive closed-skull traumatic brain injury.
\newblock {\em Neuroscience Letters}, 513(2):160--165.

\bibitem[Bickel et~al., 2009]{bickeletal2009}
Bickel, P.~J., Ritov, Y., and Tsybakov, A.~B. (2009).
\newblock Simultaneous analysis of {Lasso} and {Dantzig} selector.
\newblock {\em The Annals of Statistics}, 37(4):1705--1732.

\bibitem[Busch et~al., 2015]{Buschetal2015}
Busch, S., Acar, A., Magnusson, Y., Gregersson, P., Ryd\'en, L., and Landberg,
  G. (2015).
\newblock {TGF}-beta receptor type-2 expression in cancer-associated
  fibroblasts regulates breast cancer cell growth and survival and is a
  prognostic marker in pre-menopausal breast cancer.
\newblock {\em Oncogene}, 34(1):27--38.

\bibitem[Carey et~al., 2006]{Careyetal2006ER}
Carey, L.~A., Perou, C.~M., Livasy, C.~A., Dressler, L.~G., Cowan, D., Conway,
  K., Karaca, G., Troester, M.~A., Tse, C.~K., Edmiston, S., Deming, S.~L.,
  Geradts, J., Cheang, M. C.~U., Nielsen, T.~O., Moorman, P.~G., Earp, H.~S.,
  and Millikan, R.~C. (2006).
\newblock Race, breast cancer subtypes, and survival in the {Carolina Breast
  Cancer Study}.
\newblock {\em JAMA}, 295(21):2492--2502.

\bibitem[Carvalho et~al., 2005]{Carvalhoetal2005}
Carvalho, I., Milanezi, F., Martins, A., Reis, R.~M., and Schmitt, F. (2005).
\newblock Overexpression of platelet-derived growth factor receptor alpha in
  breast cancer is associated with tumour progression.
\newblock {\em Breast Cancer Research}, 7(5):R788--R795.

\bibitem[Chang et~al., 2005]{chang2005}
Chang, L.-C., Jones, D.~K., and Pierpaoli, C. (2005).
\newblock {RESTORE}: Robust estimation of tensors by outlier rejection.
\newblock {\em Magnetic Resonance in Medicine}, 53(5):1088--1095.

\bibitem[Chang et~al., 2012]{chang2012}
Chang, L.-C., Walker, L., and Pierpaoli, C. (2012).
\newblock \textit{Informed} {RESTORE}: A method for robust estimation of
  diffusion tensor from low redundancy datasets in the presence of
  physiological noise artifacts.
\newblock {\em Magnetic Resonance in Medicine}, 68(5):1654--1663.

\bibitem[Cubon et~al., 2011]{cubon2011}
Cubon, V.~A., Putukian, M., Boyer, C., and Dettwiler, A. (2011).
\newblock A diffusion tensor imaging study on the white matter skeleton in
  individuals with sports-related concussion.
\newblock {\em Journal of Neurotrauma}, 28(2):189--201.

\bibitem[Danaher et~al., 2014]{Danaheretal2014JGL}
Danaher, P., Wang, P., and Witten, D.~M. (2014).
\newblock The joint graphical lasso for inverse covariance estimation across
  multiple classes.
\newblock {\em Journal of the Royal Statistical Society: Series B},
  76(2):373--397.

\bibitem[Fan and Li, 2001]{FanLi2001SCAD}
Fan, J. and Li, R. (2001).
\newblock Variable selection via nonconcave penalized likelihood and its oracle
  properties.
\newblock {\em Journal of the American Statistical Association},
  96(456):1348--1360.

\bibitem[Friedman et~al., 2008]{Friedmanetal2008GL}
Friedman, J., Hastie, T., and Tibshirani, R. (2008).
\newblock Sparse inverse covariance estimation with the graphical lasso.
\newblock {\em Biostatistics}, 9(3):432--441.

\bibitem[Gajawelli et~al., 2013]{gajawelli2013}
Gajawelli, N., Lao, Y., Apuzzo, M. L.~J., Romano, R., Liu, C., Tsao, S., Hwang,
  D., Wilkins, B., Lepore, N., and Law, M. (2013).
\newblock Neuroimaging changes in the brain in contact versus noncontact sport
  athletes using diffusion tensor imaging.
\newblock {\em World Neurosurgery}, 80(6):824--828.

\bibitem[Gill et~al., 2014]{Gilletal2014}
Gill, R., Datta, S., and Datta, S. (2014).
\newblock dna: An {R} package for differential network analysis.
\newblock {\em Bioinformation}, 10(4):233--234.

\bibitem[Guo et~al., 2011]{Guoetal2011JEG}
Guo, J., Levina, E., Michailidis, G., and Zhu, J. (2011).
\newblock Joint estimation of multiple graphical models.
\newblock {\em Biometrika}, 98(1):1--15.

\bibitem[Haemers, 1995]{Haemers1995}
Haemers, W.~H. (1995).
\newblock Interlacing eigenvalues and graphs.
\newblock {\em Linear Algebra and its Applications}, 226-228:593--616.

\bibitem[Holm, 1979]{Holm1979}
Holm, S. (1979).
\newblock A simple sequentially rejective multiple test procedure.
\newblock {\em Scandinavian Journal of Statistics}, 6(2):65--70.

\bibitem[Ideker and Krogan, 2012]{IdekerKrogan2012}
Ideker, T. and Krogan, N.~J. (2012).
\newblock Differential network biology.
\newblock {\em Molecular Systems Biology}, 8(1):565.

\bibitem[Irfanoglu et~al., 2015]{irfanoglu2015}
Irfanoglu, M.~O., Modi, P., Nayak, A., Hutchinson, E.~B., Sarlls, J., and
  Pierpaoli, C. (2015).
\newblock \textit{DR-BUDDI} {(Diffeomorphic Registration for Blip-Up blip-Down
  Diffusion Imaging)} method for correcting echo planar imaging distortions.
\newblock {\em NeuroImage}, 106:284--299.

\bibitem[Janes, 2011]{Janes2011}
Janes, K.~A. (2011).
\newblock {RUNX1} and its understudied role in breast cancer.
\newblock {\em Cell Cycle}, 10(20):3461--3465.

\bibitem[Jankov\'a and van~de Geer, 2015]{JankovavadeGeer2015glasso}
Jankov\'a, J. and van~de Geer, S. (2015).
\newblock Confidence intervals for high-dimensional inverse covariance
  estimation.
\newblock {\em Electronic Journal of Statistics}, 9(1):1205--1229.

\bibitem[Jankov\'a and van~de Geer, 2017]{JankovavandeGeer2017}
Jankov\'a, J. and van~de Geer, S. (2017).
\newblock Honest confidence regions and optimality in high-dimensional
  precision matrix estimation.
\newblock {\em TEST}, 26(1):143--162.

\bibitem[Javanmard and Montanari, 2014]{javanmard2013confidence}
Javanmard, A. and Montanari, A. (2014).
\newblock Confidence intervals and hypothesis testing for high-dimensional
  regression.
\newblock {\em Journal of Machine Learning Research}, 15:2869--2909.

\bibitem[Jezierska and Motyl, 2009]{JezierskaMotyl2009}
Jezierska, A. and Motyl, T. (2009).
\newblock Matrix metalloproteinase-2 involvement in breast cancer progression:
  a mini-review.
\newblock {\em Medical Science Monitor}, 15(2):RA32--RA40.

\bibitem[J\"onsson et~al., 2010]{Jonssonetal2010}
J\"onsson, G., Staaf, J., Vallon-Christersson, J., Ringn\'er, M., Holm, K.,
  Hegardt, C., Gunnarsson, H., Fagerholm, R., Strand, C., Agnarsson, B.~A.,
  Kilpivaara, O., Luts, L., Heikkil\"a, P., Aittom\"aki, K., Blomqvist, C.,
  Loman, N., Malmstr\"om, P., Olsson, H., Johannsson, O.~T., Arason, A.,
  Nevanlinna, H., Barkardottir, R.~B., and Borg, A. (2010).
\newblock Genomic subtypes of breast cancer identified by array-comparative
  genomic hybridization display distinct molecular and clinical
  characteristics.
\newblock {\em Breast Cancer Research}, 12(3):1--14.

\bibitem[Kitano, 2004]{Kitano2004}
Kitano, H. (2004).
\newblock Biological robustness.
\newblock {\em Nature Reviews Genetics}, 5:826--837.

\bibitem[Koay et~al., 2006]{koay2006}
Koay, C.~G., Chang, L.-C., Carew, J.~D., Pierpaoli, C., and Basser, P.~J.
  (2006).
\newblock A unifying theoretical and algorithmic framework for least squares
  methods of estimation in diffusion tensor imaging.
\newblock {\em Journal of Magnetic Resonance}, 182(1):115--125.

\bibitem[Koay et~al., 2009]{koay2009}
Koay, C.~G., \"{O}zarslan, E., and Basser, P.~J. (2009).
\newblock A signal transformational framework for breaking the noise floor and
  its applications in {MRI}.
\newblock {\em Journal of Magnetic Resonance}, 197(2):108--119.

\bibitem[Krumsiek et~al., 2011]{Krumsieketal2011}
Krumsiek, J., Suhre, K., Illig, T., Adamski, J., and Theis, F.~J. (2011).
\newblock Gaussian graphical modeling reconstructs pathway reactions from
  high-throughput metabolomics data.
\newblock {\em BMC Systems Biology}, 5:21.

\bibitem[Lee et~al., 2016]{lee2015exact}
Lee, J.~D., Sun, D.~L., Sun, Y., and Taylor, J.~E. (2016).
\newblock Exact post-selection inference, with application to the lasso.
\newblock {\em The Annals of Statistics}, 44(3):907--927.

\bibitem[Leeb and P\"{o}tscher, 2008]{LeebPotscher2008PoSI}
Leeb, H. and P\"{o}tscher, B.~M. (2008).
\newblock Can one estimate the unconditional distribution of
  post-model-selection estimators?
\newblock {\em Econometric Theory}, 24(2):338--376.

\bibitem[Liu et~al., 2007]{Liuetal2007}
Liu, D., Lin, X., and Ghosh, D. (2007).
\newblock Semiparametric regression of multidimensional genetic pathway data:
  Least-squares kernel machines and linear mixed models.
\newblock {\em Biometrics}, 63(4):1079--1088.

\bibitem[Ma et~al., 2012]{Maetal2012}
Ma, X., Beeghly-Fadiel, A., Lu, W., Shi, J., Xiang, Y.~B., Cai, Q., Shen, H.,
  Shen, C.~Y., Ren, Z., Matsuo, K., Khoo, U.~S., Iwasaki, M., Long, J., Zhang,
  B., Ji, B.~T., Zheng, Y., Wang, W., Hu, Z., Liu, Y., Wu, P.~E., Shieh, Y.~L.,
  Wang, S., Xie, X., Ito, H., Kasuga, Y., Chan, K.~Y., Iwata, H., Tsugane, S.,
  Gao, Y.~T., Shu, X.~O., Moses, H.~L., and Zheng, W. (2012).
\newblock Pathway analyses identify {TGFBR2} as potential breast cancer
  susceptibility gene: results from a consortium study among {Asians}.
\newblock {\em Cancer Epidemiology, Biomarkers \& Prevention},
  21(7):1176--1187.

\bibitem[Mac~Donald et~al., 2007]{mac2007}
Mac~Donald, C.~L., Dikranian, K., Song, S.~K., Bayly, P.~V., Holtzman, D.~M.,
  and Brody, D.~L. (2007).
\newblock Detection of traumatic axonal injury with diffusion tensor imaging in
  a mouse model of traumatic brain injury.
\newblock {\em Experimental Neurology}, 205(1):116--131.

\bibitem[Mangin et~al., 2002]{mangin2002}
Mangin, J.-F., Poupon, C., Clark, C., Le~Bihan, D., and Bloch, I. (2002).
\newblock Distortion correction and robust tensor estimation for {MR} diffusion
  imaging.
\newblock {\em Medical Image Analysis}, 6(3):191--198.

\bibitem[Meinshausen and B\"{u}hlmann, 2006]{MeinshausenBuhlmann2006}
Meinshausen, N. and B\"{u}hlmann, P. (2006).
\newblock High dimensional graphs and variable selection with the {Lasso}.
\newblock {\em The Annals of Statistics}, 34(3):1436--1462.

\bibitem[Meinshausen et~al., 2009]{Meinshausenetal2009posi}
Meinshausen, N., Meier, L., and B\"{u}hlmann, P. (2009).
\newblock $p$-values for high-dimensional regression.
\newblock {\em Journal of the American Statistical Association},
  104(488):1671--1681.

\bibitem[Mori et~al., 2010]{mori2005mri}
Mori, S., van Zijl, P. C.~M., Oishi, K., and Faria, A.~V. (2010).
\newblock {\em MRI Atlas of Human White Matter}.
\newblock Academic Press, Cambridge, MA.

\bibitem[Newman, 2003]{Newman2013}
Newman, M. E.~J. (2003).
\newblock The structure and function of complex networks.
\newblock {\em SIAM Review}, 45(2):167--256.

\bibitem[Ning and Liu, 2017]{NingLiu2015decor}
Ning, Y. and Liu, H. (2017).
\newblock A general theory of hypothesis tests and confidence regions for
  sparse high dimensional models.
\newblock {\em The Annals of Statistics}, 45(1):158--195.

\bibitem[Oishi et~al., 2009]{oishi2009atlas}
Oishi, K., Faria, A., Jiang, H., Li, X., Akhter, K., Zhang, J., Hsu, J.~T.,
  Miller, M.~I., van Zijl, P. C.~M., Albert, M., Lyketsos, C.~G., Woods, R.,
  Toga, A.~W., Pike, G.~B., Rosa-Neto, P., Evans, A., Mazziotta, J., and Mori,
  S. (2009).
\newblock Atlas-based whole brain white matter analysis using large deformation
  diffeomorphic metric mapping: application to normal elderly and {Alzheimer's}
  disease participants.
\newblock {\em NeuroImage}, 46(2):486--499.

\bibitem[Pellegrini et~al., 1995]{Pellegrinietal1995}
Pellegrini, R., Martignone, S., Tagliabue, E., Belotti, D., Bufalino, R.,
  Cascinelli, N., M\'enard, S., and Colnaghi, M.~I. (1995).
\newblock Prognostic significance of laminin production in relation with its
  receptor expression in human breast carcinomas.
\newblock {\em Cancer Research}, 35(2):195--199.

\bibitem[Perou et~al., 2000]{Perouetal2000}
Perou, C.~M., S{\o}rlie, T., Eisen, M.~B., van~de Rijn, M., Jeffrey, S.~S.,
  Rees, C.~A., Pollack, J.~R., Ross, D.~T., Johnsen, H., Akslen, L.~A., Fluge,
  O., Pergamenschikov, A., Williams, C., Zhu, S.~X., L{\o}nning, P.~E.,
  B{\o}rresen-Dale, A.~L., Brown, P.~O., and Botstein, D. (2000).
\newblock Molecular portraits of human breast tumours.
\newblock {\em Nature}, 406(6797):747--752.

\bibitem[Peterson et~al., 2015]{Petersonetal2015bayesiangraph}
Peterson, C., Stingo, F.~C., and Vannucci, M. (2015).
\newblock Bayesian inference of multiple {Gaussian} graphical models.
\newblock {\em Journal of the American Statistical Association},
  110(509):159--174.

\bibitem[Pierpaoli et~al., 2010]{pierpaoli2010tortoise}
Pierpaoli, C., Walker, L., Irfanoglu, M.~O., Barnett, A., Basser, P., Chang,
  L.-C., Koay, C., Pajevic, S., Rohde, G., Sarlls, J., and Wu, M. (2010).
\newblock {TORTOISE}: An integrated software package for processing of
  diffusion {MRI} data.
\newblock In {\em Joint Annual Meeting ISMRM-ESMRMB 2010}.

\bibitem[Ravikumar et~al., 2011]{Ravikumaretal2011}
Ravikumar, P., Wainwright, M.~J., Raskutti, G., and Yu, B. (2011).
\newblock High-dimensional covariance estimation by minimizing
  $\ell1$-penalized log-determinant divergence.
\newblock {\em Electronic Journal of Statistics}, 5:935--980.

\bibitem[Ren et~al., 2015]{Renetal2015GGM}
Ren, Z., Sun, T., Zhang, C.-H., and Zhou, H.~H. (2015).
\newblock Asymptotic normality and optimalities in estimation of large
  {Gaussian} graphical models.
\newblock {\em The Annals of Statistics}, 43(3):991--1026.

\bibitem[Rohde et~al., 2005]{rohde2005}
Rohde, G.~K., Barnett, A.~S., Basser, P.~J., and Pierpaoli, C. (2005).
\newblock Estimating intensity variance due to noise in registered images:
  Applications to diffusion tensor {MRI}.
\newblock {\em NeuroImage}, 26(3):673--684.

\bibitem[Saegusa and Shojaie, 2016]{SaegusaShojaie2016}
Saegusa, T. and Shojaie, A. (2016).
\newblock Joint estimation of precision matrices in heterogenous populations.
\newblock {\em Electronic Journal of Statistics}, 10(1):1341--1392.

\bibitem[Sawyer et~al., 2003]{Sawyeretal2003}
Sawyer, C., Sturge, J., Bennett, D.~C., O'Hare, M.~J., Allen, W.~E., Bain, J.,
  Jones, G.~E., and Vanhaesebroeck, B. (2003).
\newblock Regulation of breast cancer cell chemotaxis by the phosphoinositide
  3-kinase p110delta.
\newblock {\em Cancer Research}, 63(7):1667--1675.

\bibitem[St\"adler and Mukherjee, 2016]{StadlerMukherjee2016}
St\"adler, N. and Mukherjee, S. (2016).
\newblock Two‐sample testing in high dimensions.
\newblock {\em Journal of the Royal Statistical Society: Series B},
  79(1):225--246.

\bibitem[Tibshirani et~al., 2016]{Tibshiranietal2016PoSI}
Tibshirani, R.~J., Taylor, J., Lockhart, R., and Tibshirani, R. (2016).
\newblock Exact post-selection inference for sequential regression procedures.
\newblock {\em Journal of the American Statistical Association},
  111(514):600--620.

\bibitem[van~de Geer and B\"{u}hlmann, 2009]{vandeGeerBuhlmann2009}
van~de Geer, S. and B\"{u}hlmann, P. (2009).
\newblock On the conditions used to prove oracle results for the {Lasso}.
\newblock {\em Electronic Journal of Statistics}, 3:1360--1392.

\bibitem[van~de Geer et~al., 2014]{vandeGeeretal2014LDPE}
van~de Geer, S., B\"{u}hlmann, P., Ritov, Y., and Dezeure, R. (2014).
\newblock On asymptotically optimal confidence regions and tests for
  high-dimensional models.
\newblock {\em The Annals of Statistics}, 42(3):1166--1202.

\bibitem[Wasserman and Roeder, 2009]{WassermanRoeder2009posi}
Wasserman, L. and Roeder, K. (2009).
\newblock High-dimensional variable selection.
\newblock {\em The Annals of Statistics}, 37(5A):2178--2201.

\bibitem[Xia et~al., 2015]{Xiaetal2015pt}
Xia, Y., Cai, T., and Cai, T.~T. (2015).
\newblock Testing differential networks with applications to detecting
  gene-by-gene interactions.
\newblock {\em Biometrika}, 102(2):247--266.

\bibitem[Xia and Li, 2017]{XiaLi2017}
Xia, Y. and Li, L. (2017).
\newblock Hypothesis testing of matrix graph model with application to brain
  connectivity analysis.
\newblock {\em Biometrics}, 73(3):780--791.

\bibitem[Zhang, 2010]{Zhang2010MCP}
Zhang, C.-H. (2010).
\newblock Nearly unbiased variable selection under minimax concave penalty.
\newblock {\em The Annals of Statistics}, 38(2):894--942.

\bibitem[Zhang and Zhang, 2014]{ZhangZhang2014LDPE}
Zhang, C.-H. and Zhang, S.~S. (2014).
\newblock Confidence intervals for low dimensional parameters in high
  dimensional linear models.
\newblock {\em Journal of the Royal Statistical Society: Series B},
  76(1):217--242.

\bibitem[Zhang et~al., 2010]{zhang2010atlas}
Zhang, Y., Zhang, J., Oishi, K., Faria, A.~V., Jiang, H., Li, X., Akhter, K.,
  Rosa-Neto, P., Pike, G.~B., Evans, A., Toga, A.~W., Woods, R., Mazziotta,
  J.~C., Miller, M.~I., van Zijl, P. C.~M., and Mori, S. (2010).
\newblock Atlas-guided tract reconstruction for automated and comprehensive
  examination of the white matter anatomy.
\newblock {\em NeuroImage}, 52(4):1289--1301.

\bibitem[Zhao and Yu, 2006]{ZhaoYu2006}
Zhao, P. and Yu, B. (2006).
\newblock On model selection consistency of {Lasso}.
\newblock {\em Journal of Machine Learning Research}, 7:2541--2563.

\bibitem[Zhao and Shojaie, 2016]{ZhaoShojaie2015Grace}
Zhao, S. and Shojaie, A. (2016).
\newblock A significance test for graph-constrained estimation.
\newblock {\em Biometrics}, 72(2):484--493.

\bibitem[Zhao et~al., 2014]{Zhaoetal2014JEG}
Zhao, S.~D., Cai, T.~T., and Li, H. (2014).
\newblock Direct estimation of differential networks.
\newblock {\em Biometrika}, 101(2):253--268.

\bibitem[Zhou, 2009]{Zhou2009resev}
Zhou, S. (2009).
\newblock Restricted eigenvalue conditions on subgaussian random matrices.
\newblock {\em arXiv preprint arXiv:0912.4045}.

\end{thebibliography}
\end{document}